\newtheorem{theorem}{Theorem}
\newcommand{\iid}{\ensuremath{\stackrel{{\textrm{iid}}}{\sim}}} 
\newcommand{\ind}{\ensuremath{\stackrel{{\textrm{ind}}}{\sim}}} 
\newcommand{\Z}{\mathbf{Z}}
\newcommand{\y}{\mathbf{y}}
\newcommand{\p}{\mathbf{p}}
\newcommand{\balpha}{\mbox{${ \bm \alpha}$}}
\newcommand{\bbeta}{\mbox{${\bm \beta}$}}
\newcommand{\btheta}{\mbox{${ \bm \theta}$}}
\newcommand{\ii}{i=1,\ldots,n}
\newcommand{\jj}{j=1,\ldots,G}
\newcommand{\sumas}{\sum^n_{i=1}}
\newcommand{\bgamma}{\mbox{${ \bm \gamma}$}}
\newtheorem{proposition}{Proposition}
\begin{document}
\title{Finite Mixture of Birnbaum-Saunders distributions using the $k$-bumps algorithm}
\author{\small Luis Benites$^{\textrm{a}}$\thanks{Correspondence to: E-mail
address: lbenitesanchez@gmail.com (Luis Benites)} ~ Roc\'{i}o Maehara$^{\textrm{a}}$ ~ Filidor Vilca$^{\textrm{b}}$ ~ Fernando Marmolejo-Ramos$^{\textrm{c}}$\\
 {\em \small $^{ \small \textrm{a}}$Departamento de Estat\'{\i}stica, Universidade de S\~{a}o Paulo, Brazil }  \\   {\em \small $^{ \small \textrm{b}}$Departamento de Estat\'{\i}stica, Universidade de Campinas, Brazil } \\
{\em \small $^{ \small \textrm{c}}$School of Psychology, The University of Adelaide, Adelaide, Australia}
}
\date{}
\maketitle
\begin{abstract}
Mixture models have  received a great deal of attention in statistics due  to the wide range of applications found  in recent years.   This paper discusses a finite mixture model of Birnbaum-Saunders distributions with  $G$ components, as an important supplement of the  work developed  by \cite{Balakrishnan:11}, who  only considered  two  components.   Our proposal enables the modeling of proper multimodal scenarios with greater flexibility, where   the identifiability of the model with $G$ components  is  proven and  an EM-algorithm for the maximum likelihood (ML) estimation of the  mixture parameters is   developed,   in  which the $k$-bumps algorithm is used  as an initialization strategy  in  the EM algorithm. The performance of the $k$-bumps algorithm as an initialization tool is evaluated through  simulation experiments. Moreover, the empirical information matrix is derived analytically to account for standard error,  and   bootstrap procedures for testing hypotheses about the number of components in the mixture are  implemented.  Finally, we perform simulation studies and analyze two real datasets to illustrate the usefulness of the proposed method.

\vspace*{0.5cm} \noindent {\bf Keywords:} Birnbaum-Saunders distribution; EM algorithm; $k$-bumps algorithm; Maximum likelihood estimation;  Finite mixture.
\end{abstract}

\section{Introduction}

Although most statistical applications are conceived to deal with unimodal data, in practice research data can exhibit heterogeneity due to skewness and multimodality. More importantly, skewness and non-distinctive shape variations can be due to intrinsic aspects of the data. Skew and distinctive shape variations that resemble multimodality can indicate not all observations come from the same parent population.  In other words, if the data come from different sub-populations, and their identifications are not known, the mixture distribution can be used quite effectively to analyze the dataset \citep[]{McLachlan2000}. Although multimodal data can be modeled with a single distribution, the quality of the model is poor in general. Hence modeling based on finite mixture distributions plays a vital role in different data analysis situations. Finite mixture models are now applied in such diverse areas such as biology, biometrics, genetics, medicine, marketing, reliability, and pattern recognition problems, among others. Some    examples of mixture models are based  on gamma, exponential, inverse  Gaussian and Weibull distributions.

The  Birnbaum-Saunders (BS) distribution, originally introduced by \cite{birnbaum1969}, is a two parameter failure time distribution for modeling fatigue failure caused under cyclic loading,  which is  derived  from   the cumulative  damage or Miner  law. This  distribution has   been  considered a more attractive alternative to the often-used Weibull, gamma, and log-normal models, because  the  BS  model fits very well within the extremes of the distribution, even when the amount of fatigue life data is small. The BS distribution, also known as fatigue life distribution, was initially used to model failure times, but has since been extended to fields such as reliability, business, engineering, survival analysis, and medical sciences; see Leiva \cite{leiva_book} and \cite{leiva2016}.  A positive  random  variable $T$   is  said  to  have  a two-parameter  BS   distribution  if its  cumulative distribution function (cdf)  can be written as
\begin{eqnarray}\label{cdfBS}
F_T(t;\alpha,\beta) = P(T \leq t) =  \Phi\big(a_t(\alpha,\beta) \big),\,\,\,  t >0, \,\, \alpha>0,\, \, \beta>0,
\end{eqnarray}	
where $a_t(\alpha,\beta) = (\sqrt{t/\beta} - \sqrt{\beta/t}\big)/\alpha$
and  $\Phi(\cdot)$ is the cdf of the  standard normal  distribution. Clearly $\beta$
is   the   median  of  the  BS  distribution.

In the  context of  finite   mixture    distributions, if   the sub-populations do not have symmetric distributions,  then  the finite  mixture of BS distributions can be used to analyze these data,  since BS distributions are  positively skewed. This can make them, good  alternatives  to   these  based  on  skewed distributions.
In reliability research, for example, populations can be heterogeneous due to at least two underlying sub-populations; one being the standard sub-population (also known as strong population) and the other being the defective sub-population. Data that arise from such heterogeneous populations are amenable to modeling by a mixture of two or more life distributions. Indeed, the mixture of BS distributions seems to be a suitable approach: multimodal distributions can be approximated very well by a mixture of distributions because sub-populations tend to not have symmetric distributions. Extensive work has been carried out regarding bimodal BS distributions; see  \cite{olmos2016} and \cite{bala2009}. However, not much work has been done on  finite mixtures of BS distributions.

The aim of this paper is to consider a finite mixture model based on the BS distributions  by extending the two-component mixture BS proposed by \cite{Balakrishnan:11}. The maximum likelihood  estimates are  obtained via the EM algorithm,  in  which  the $k$-bumps algorithm \citep{bagnato2013} is used to obtain the initial values  required by the EM algorithm. The  identifiability of the FM-BS model is  discussed  following  the model proposed  by \cite{chandra1977}. An important aspect   to  be    addressed is whether a two-component model fits the data significantly better than a one-component model. This  question  is answered  by  using   the  parametric bootstrap log-likelihood ratio statistic proposed by \cite{turner2000}.

The remainder of the paper is organized as follows. In Section 2, we briefly report basic results for the BS distribution and present the  finite mixture  BS (FM-BS) distribution  along with its   properties. In Section 3, we deal with the parameter estimation of the FM-BS distribution through an  EM  algorithm,  as well as the  starting  values  and  stopping   rule used  in the  algorithm. Moreover,  an approximation of the observed information matrix  for  obtaining   the   standard error of  the ML    estimates  is   presented.  In Section 4 and 5, numerical samples using both simulated and real datasets are given to illustrate the performance of the proposed model. Finally, Section 6 contains our concluding remarks.

\section{Finite mixture BS model}
First,  we  recall  that from \eqref{cdfBS}, a positive random variable  $T$ is distributed as a BS distribution if  its  probability density function (pdf) is
\begin{equation}\label{pdfbs}
f_{T}(t;\alpha,\beta) = \phi\big(a_t(\alpha,\beta)\big) A_t(\alpha,\beta), \,\,\, t>0,
\end{equation}
where $a_t(\alpha,\beta) = (\sqrt{t/\beta} - \sqrt{\beta/t}\big)/\alpha$, and $A_t(\alpha,\beta) = t^{-3/2} (t + \beta)/(2\alpha \beta^{1/2})$ is the derivative of $a_t$ with respect to $t$.  This  distribution  is   denoted   by $T\sim {\rm BS}(\alpha,\beta)$, where  $\alpha$ and $\beta$ are the shape and scale parameters, respectively.
The BS distribution is related to the  normal distribution by means of the representation  of  $T$  give  by $T = \beta  \left(1 + 2X^2 + 2X\sqrt{1 + X^2}\right)$, where  $X \sim \mbox{N}(0, \alpha^2/4)$. The mean and  variance are    given  respectively  by
${\rm E}(T)= \beta \big(1 + \alpha^2/2\big)$ and ${\rm Var}(T) = (\alpha \beta)^2\big(1 + 5\alpha^2/4\big)$. Note  from (\ref{cdfBS}),  it  is  easy  to  see that $\beta$ is the median of the distribution of $T$. Moreover, the mode (denoted by $m$) is   obtained  as the solution of the nonlinear equation
\begin{equation}
(\beta - m) (m + \beta)^2 = \alpha^2 \beta m (m + 3\beta),
\label{eqmodaalpha}
\end{equation}
where  $m < \beta$.  Then, from the above equation,   $\alpha$ can  be    expressed   in terms of $m$ and $\beta$,  and   consequently  the pdf of the BS distribution can   be  re-parameterized   in term  of  parameters $m$ and $\beta$ as  follows
\begin{equation}
f_T(t;m,\beta) = \frac{1}{\sqrt{2 \pi}} \exp\left[-\frac{1}{2}\frac{\beta m (m + 3\beta)}{(\beta - m )} \left(\frac{a_t(1,\beta)}{m + \beta} \right)^2\right]\frac{t^{-3/2}(t+\beta)}{2 (m + \beta)^2}\sqrt{\frac{m + 3\beta}{\beta - m}}.
\label{eqBSp}
\end{equation}
\begin{figure}[h!]
\centering
\subfigure[$m=1$]{\includegraphics[scale=0.25]{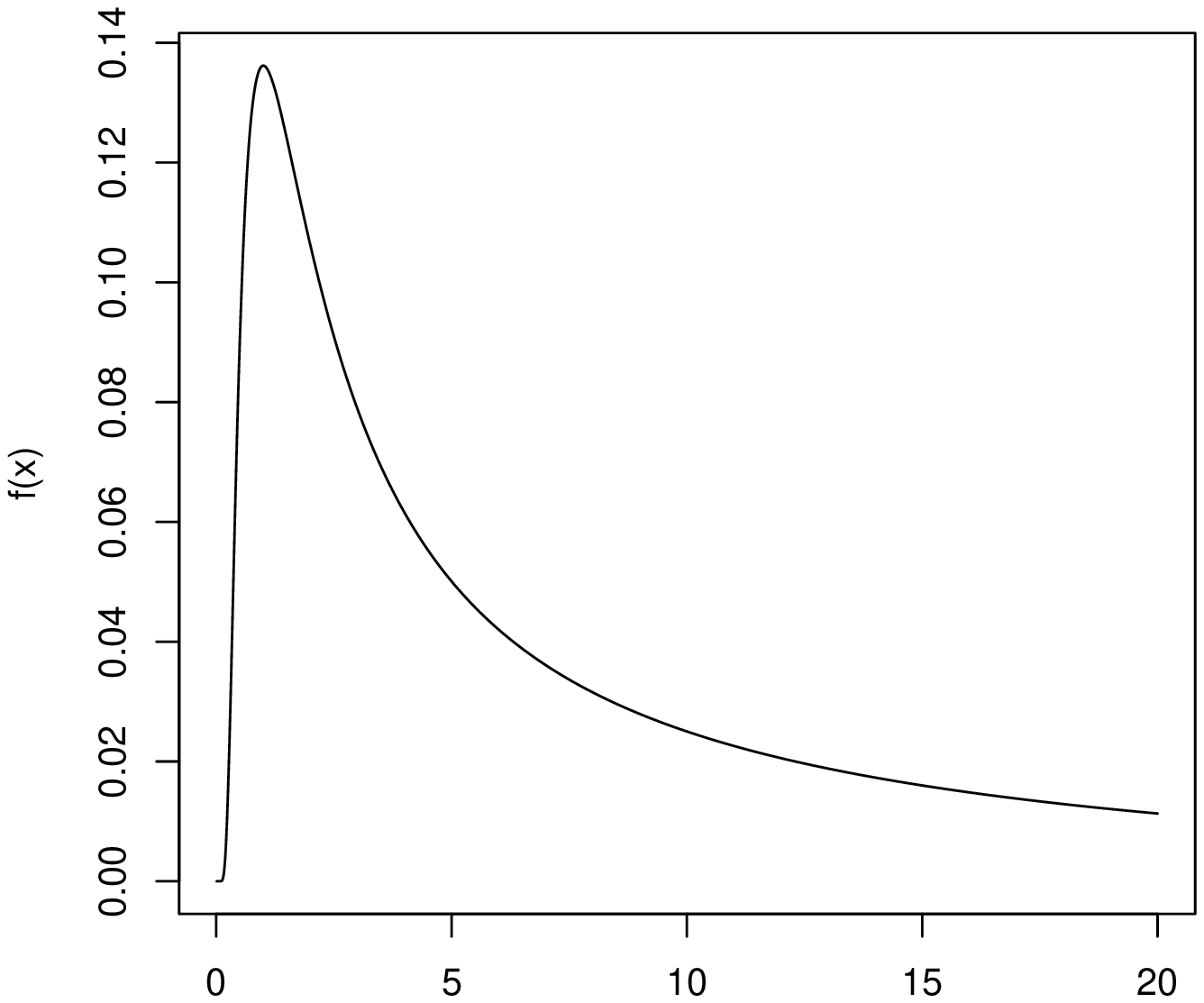}\label{}}~\subfigure[$m=2$]{\includegraphics[scale=0.25]{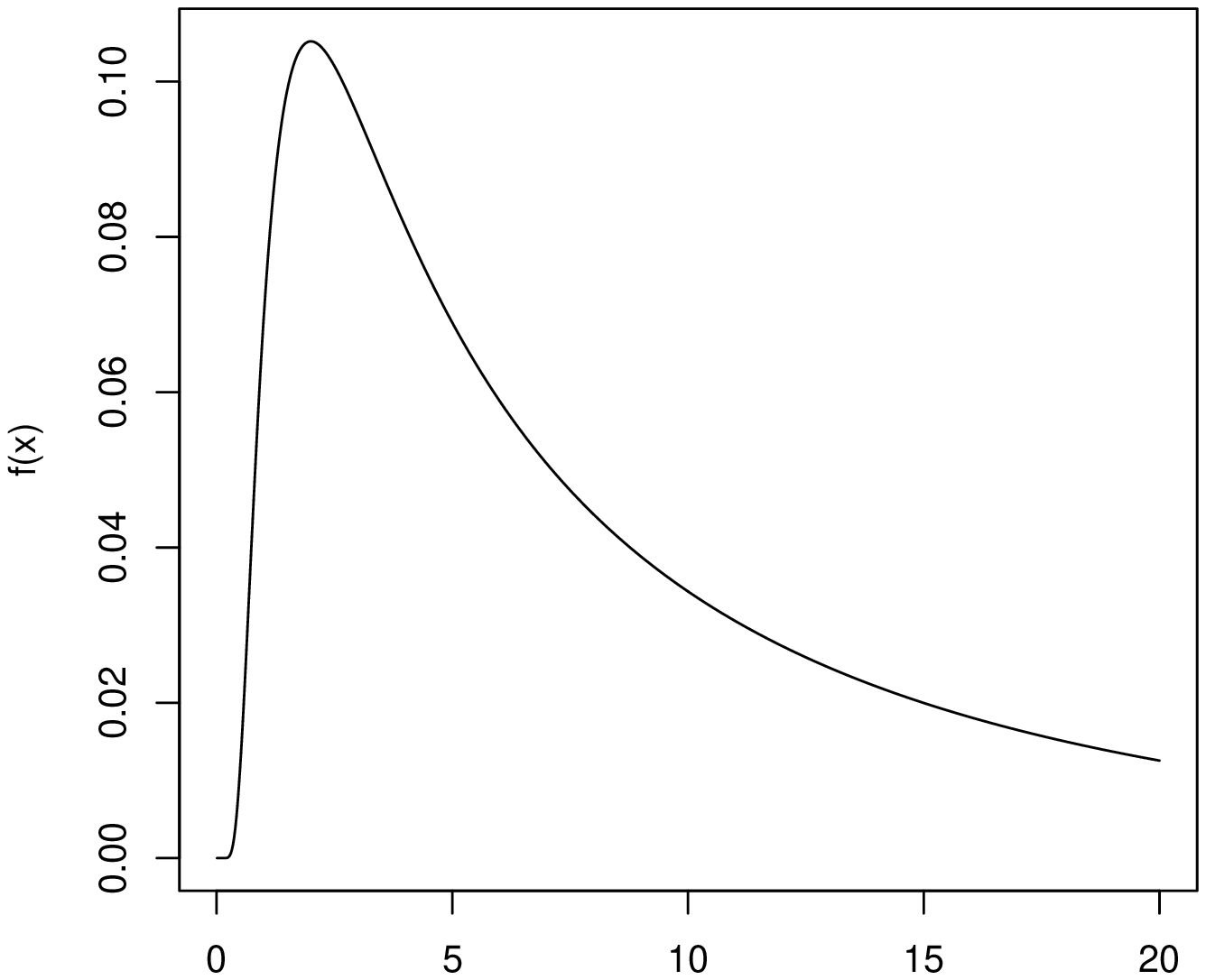}\label{}}~\subfigure[$m=3$]{\includegraphics[scale=0.25]{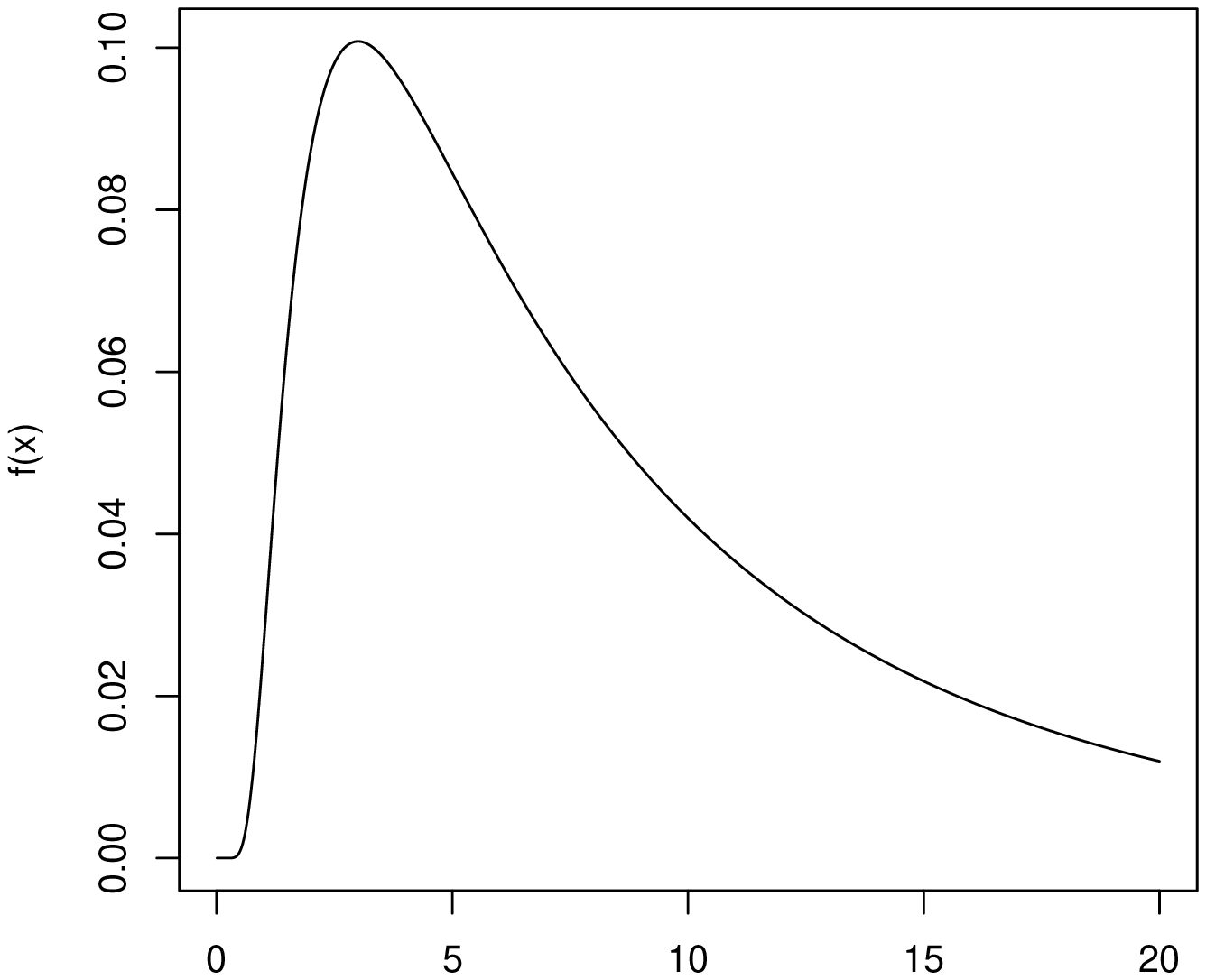}\label{}}~\subfigure[$m=6$]{\includegraphics[scale=0.25]{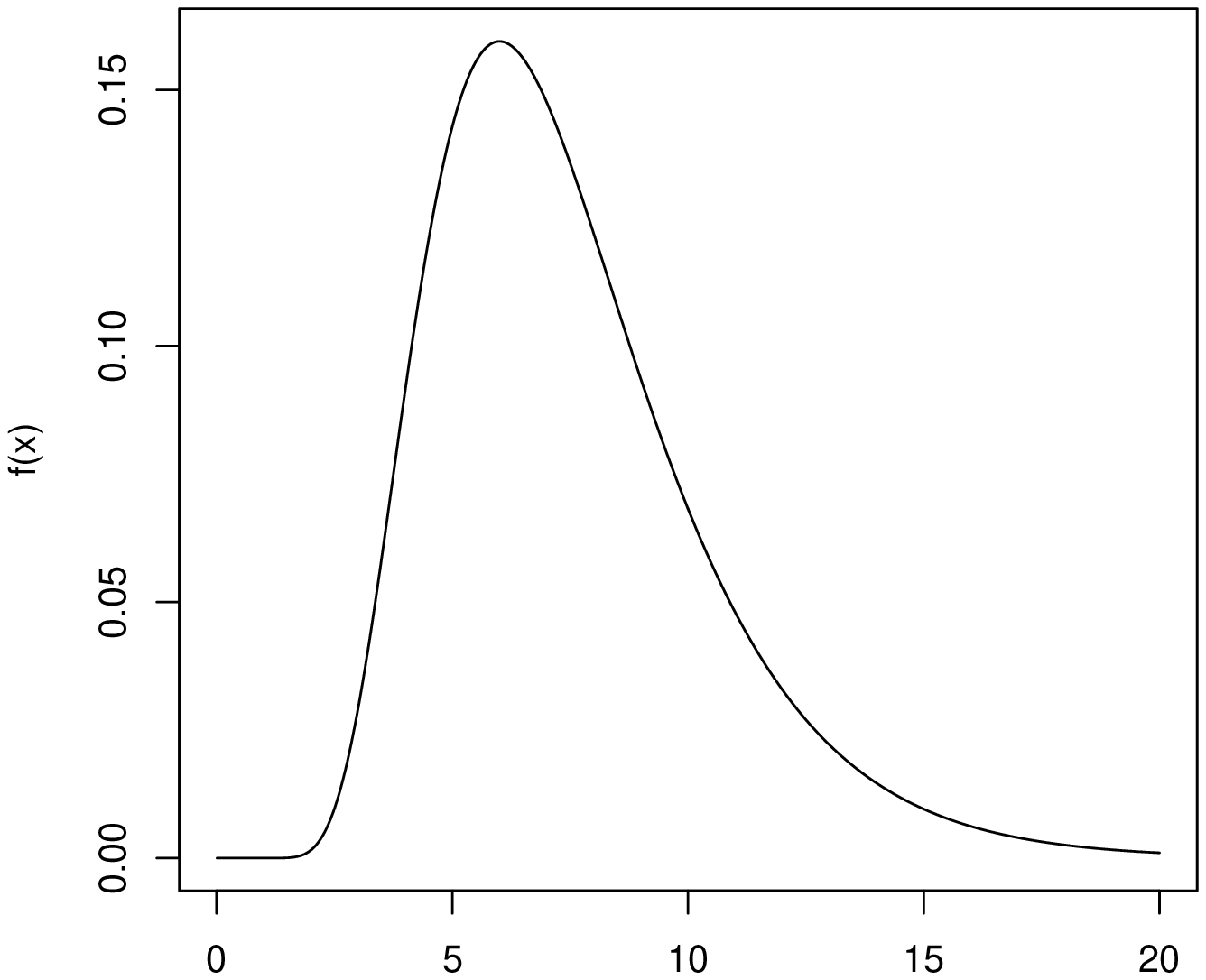}\label{}}
 	\caption{Unimodal BS densities with $\beta = 7$ and varying modes ($m$). \label{figmode}}
\end{figure}
A feature of the BS density in  \eqref{eqBSp} is that  its asymmetry changes according to the value of $m$,  as  can  be  seen  in  Figure \ref{figmode}. As $m$ increases, the density becomes more symmetric around $\beta$.

Now, following   the  idea   of  \cite{Balakrishnan:11},  we  define  mixtures of Birnbaum-Saunders  distributions of the form:
\begin{eqnarray}
f(y;\mathbf{p}, \balpha,\bbeta) &=& \sum_{j=1}^G p_j f_{T_j}(y;\alpha_j,\beta_j), \quad y \in \mathbb{R}_+, \label{eqFM-BS}
\end{eqnarray}
where $p_j$ is the mixing parameter of the jth sub-population which is constrained to be positive with the constraint $\sum_{j=1}^G p_j=1$, and  $f_{T_j}(	\cdot;\alpha_j,\beta_j)$ is   the pdf  of sub-population $j$ of the  $ {\rm BS}(\alpha_j,\beta_j)$ distribution, with $\alpha_j>0$, $\beta_j>0$, $j=1,\dots,G$. Moreover, $\mathbf{p}=(p_1,\ldots, p_G)^{\top}$, $\balpha = (\alpha_{1},\ldots,\alpha_{G})^{\top}$ and $\bbeta = (\beta_{1},\ldots,\beta_{G})^{\top}$. One of the $p_j$ is redundant because these probabilities add up to 1. We assume that the number of components $G$ is known and fixed. 
A  positive  random  variable $Y$,  with density (\ref{eqFM-BS}), is called a finite mixture of Birnbaum-Saunders (FM-BS) model, and  will be denoted by $ Y\sim\text{FM-BS}(\mathbf{p},\balpha,\bbeta)$. The pdf of the FM-BS can take different shapes as  can  be  seen in  Figure \ref{fig1fmbs}. Some properties of the FM-BS distribution can be derived by using the close relationship between the distribution $T_j$  and   normal   distribution.

\begin{theorem}\label{theo;1}
If $Y \sim {\rm FM}{\mbox -}{\rm BS}(\mathbf{p},\balpha,\bbeta)$, then
{\textcolor{black}{
\begin{itemize}
\item[(i)]  $c Y \sim {\rm FM}{\mbox -}{\rm BS}(\mathbf{p}, \balpha,c\bbeta)$, where $c \in \mathbb{R}_{+}$;
\item[(ii)] $Y^{-1} \sim {\rm FM}{\mbox -}{\rm BS}(\mathbf{p}, \balpha,\bbeta^{-1})$, where $\bbeta^{-1}=(1/\beta_1, \ldots, 1/\beta_G)^{\top}$;
\item[(iii)]  For  $\beta_1=\beta_2=\ldots=\beta_G=\beta$, $\beta/Y$ and $Y/\beta$  have  the  same  distribution;
\item[(iv)] The   cdf  of  $Y$  is $F_Y(y)=\sum_{j=1}^G p_j \Phi\big(a_y(\alpha_j,\beta_j)\big)$;
\item[(v)] If  $W=\log(Y)$, then    the   pdf  of   $W$ is
\begin{eqnarray*}
f_{W}(w;\mathbf{p}, \balpha,\bbeta) &=& \sum_{j=1}^G p_j f_{W_j}(w;\alpha_j,\gamma_j), \quad w \in \mathbb{R}_+, \label{eqFM-logBS}
\end{eqnarray*}
where  $f_{W_j}(w;\alpha_j,\gamma_j)=(1/2)\phi\big(\xi_2(w;\alpha_j,\gamma_j)\big)\xi_1(w;\alpha_j,\gamma_j)$,  with $\xi_2(w;\alpha_j,\gamma_j)=\frac{2}{\alpha_j}\sinh\Big(\frac{w-\gamma_j}{2}\Big)$ and  $\xi_1(w;\alpha_j,\gamma_j)=\frac{2}{\alpha_j}\cosh\Big(\frac{w-\gamma_j}{2}\Big)$, $\gamma_j=\log(\beta_j)$.
\end{itemize}
}}
\end{theorem}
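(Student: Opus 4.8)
The plan is to exploit the fact that the FM-BS density in \eqref{eqFM-BS} is a convex combination of ordinary BS densities, so that each claimed property reduces to the corresponding property of a single BS component, propagated termwise through the sum with the weights $p_j$ untouched. I would begin with part (iv), since it underpins (i)--(iii). Because integration is linear, the cdf of $Y$ is obtained by integrating \eqref{eqFM-BS} term by term, and each integral $\int_0^y f_{T_j}(u;\alpha_j,\beta_j)\,du$ is exactly the BS cdf $\Phi\big(a_y(\alpha_j,\beta_j)\big)$ from \eqref{cdfBS}; summing with weights $p_j$ gives the stated expression.

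With (iv) in hand, parts (i) and (ii) follow from two elementary algebraic identities for the argument $a_y(\alpha,\beta)=(\sqrt{y/\beta}-\sqrt{\beta/y})/\alpha$. For (i), I would compute $P(cY\le y)=F_Y(y/c)$ and use the scale identity $a_{y/c}(\alpha_j,\beta_j)=a_y(\alpha_j,c\beta_j)$, which rescales $\beta_j\mapsto c\beta_j$ while leaving $\alpha_j$ and the weights fixed; this matches the cdf of $\text{FM-BS}(\mathbf{p},\balpha,c\bbeta)$. For (ii), I would compute $P(Y^{-1}\le y)=1-F_Y(1/y)$ and use the reciprocal identity $a_{1/y}(\alpha_j,\beta_j)=-a_y(\alpha_j,1/\beta_j)$ together with $\Phi(-x)=1-\Phi(x)$; the constant terms collapse because $\sum_j p_j=1$, leaving precisely the cdf of $\text{FM-BS}(\mathbf{p},\balpha,\bbeta^{-1})$.

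Part (iii) is then immediate by chaining (i) and (ii): under $\beta_1=\cdots=\beta_G=\beta$, applying (i) with $c=1/\beta$ shows $Y/\beta\sim\text{FM-BS}(\mathbf{p},\balpha,\mathbf{1})$, while applying (ii) and then (i) with $c=\beta$ shows $\beta/Y\sim\text{FM-BS}(\mathbf{p},\balpha,\mathbf{1})$ as well; the two share the common scale vector $\mathbf{1}=(1,\ldots,1)^{\top}$ and hence the same law. Finally, for (v) I would use the monotone change of variables $W=\log Y$ with Jacobian $e^w$, so that $f_W(w)=\sum_j p_j\,f_{T_j}(e^w;\alpha_j,\beta_j)\,e^w$, and identify each term. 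Writing $\gamma_j=\log\beta_j$, the substitution turns the argument into $a_{e^w}(\alpha_j,\beta_j)=(2/\alpha_j)\sinh\big((w-\gamma_j)/2\big)=\xi_2(w;\alpha_j,\gamma_j)$, and the Jacobian-weighted factor $A_{e^w}(\alpha_j,\beta_j)\,e^w$ simplifies, via $e^{-w/2}(e^w+\beta_j)/(2\alpha_j\beta_j^{1/2})=(1/\alpha_j)\cosh\big((w-\gamma_j)/2\big)$, to $\tfrac12\,\xi_1(w;\alpha_j,\gamma_j)$, yielding the claimed form.

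I expect the only genuinely computational step to be the hyperbolic reduction in (v): one must carefully collect the $\sqrt{e^w/\beta_j}$ and $\sqrt{\beta_j/e^w}$ terms into $\sinh$ and $\cosh$ of $(w-\gamma_j)/2$ and confirm the factor $1/2$ in front of $\xi_1$. All other steps are direct consequences of the mixture's linearity together with the two scale and reciprocal identities, so no essential difficulty arises beyond this bookkeeping.
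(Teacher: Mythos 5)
Your proof is correct in every detail: the termwise cdf computation for (iv), the scale identity $a_{y/c}(\alpha,\beta)=a_y(\alpha,c\beta)$ for (i), the reciprocal identity $a_{1/y}(\alpha,\beta)=-a_y(\alpha,1/\beta)$ combined with $\Phi(-x)=1-\Phi(x)$ and $\sum_{j} p_j=1$ for (ii), the chaining of (i) and (ii) for (iii), and the hyperbolic reduction $a_{e^w}(\alpha_j,\beta_j)=\xi_2(w;\alpha_j,\gamma_j)$, $A_{e^w}(\alpha_j,\beta_j)\,e^w=\tfrac12\,\xi_1(w;\alpha_j,\gamma_j)$ for (v) all check out. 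The paper states this theorem without any written proof, remarking only that the properties follow from the close relationship between each BS component and the normal (sinh-normal) distribution, so your termwise reduction of each claim to the corresponding single-component BS property is precisely the intended argument, executed in full.
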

As mentioned,  many   properties of  the  FM-BS  distribution    can  be  obtained by  using properties   of  the  normal  distribution,  and 	 from    other   results that come  from  the usual BS distribution  and  an  associated   distribution, such as the sinh-normal  distribution \citep{rieck1989}.
\begin{theorem}\label{teo1}
If $Y \sim {\rm FM}{\mbox -}{\rm BS}(\mathbf{p},\balpha,\bbeta)$, then
$$ \rm{E}(Y^s)=\sum_{j=1}^G p_j \exp(\gamma_j s)\left[ \frac{K_{(2s+1)/2}\big(\alpha^{-2}_j\big)+K_{(2s-1)/2}\big(\alpha^{-2}_j\big)}{2K_{1/2}\big(\alpha^{-2}_j\big)} \right], $$
where  $ K_\nu(\cdot)$ denotes  the modified Bessel function of the third kind.  Moreover
\begin{eqnarray*}
		\rm{E}(Y) &=& \sum_{j=1}^G p_j\beta_{j} \left(1 + \alpha_{j}^2/2 \right) \, \, \, {\rm and}\, \, \, \rm{E}(Y^2) =\sum_{j=1}^G p_j \beta_j^2\left(1 + 2 \alpha_j ^2 + 3 \alpha_j^4/2\right).
\end{eqnarray*}
\end{theorem}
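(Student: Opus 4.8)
The plan is to reduce the mixture moment to a weighted sum of single-component moments and then evaluate the latter via the integral representation of the modified Bessel function. By linearity of the integral, $\mathrm{E}(Y^s)=\int_0^\infty y^s f(y)\,dy=\sum_{j=1}^G p_j\int_0^\infty y^s f_{T_j}(y;\alpha_j,\beta_j)\,dy=\sum_{j=1}^G p_j\,\mathrm{E}(T_j^s)$, so it suffices to prove that a single $T\sim\mathrm{BS}(\alpha,\beta)$ satisfies $\mathrm{E}(T^s)=\beta^s\big[K_{(2s+1)/2}(\alpha^{-2})+K_{(2s-1)/2}(\alpha^{-2})\big]/\{2K_{1/2}(\alpha^{-2})\}$. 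Since $\beta^s=\exp(\gamma s)$ with $\gamma=\log\beta$, summing over $j$ then yields the stated formula for $\mathrm{E}(Y^s)$.

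For the single-component moment I would start from the pdf \eqref{pdfbs} and substitute $u=t/\beta$, which standardizes the scale and turns the argument of $\phi$ into $a_t(\alpha,\beta)=(\sqrt{u}-1/\sqrt{u})/\alpha$, whose square equals $(u-2+1/u)/\alpha^2$. Pulling the constant factor $\exp(1/\alpha^2)$ out of the Gaussian kernel and combining the Jacobian with $A_t(\alpha,\beta)$, the moment becomes a constant multiple of $\int_0^\infty u^{s-3/2}(u+1)\exp\{-(u+1/u)/(2\alpha^2)\}\,du$. Splitting the factor $(u+1)$ into its $u^{s-1/2}$ and $u^{s-3/2}$ pieces and matching each against the representation $K_\nu(z)=\tfrac12\int_0^\infty u^{\nu-1}\exp\{-\tfrac{z}{2}(u+1/u)\}\,du$ with $z=\alpha^{-2}$ identifies the two integrals as $2K_{(2s+1)/2}(\alpha^{-2})$ and $2K_{(2s-1)/2}(\alpha^{-2})$, respectively.

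It then remains to rewrite the leading constant, namely $\exp(1/\alpha^2)/(\alpha\sqrt{2\pi})$ after absorbing the factors of two coming from the Bessel integrals, in Bessel form. Using the closed expression $K_{1/2}(z)=\sqrt{\pi/(2z)}\,e^{-z}$ at $z=\alpha^{-2}$ gives $2K_{1/2}(\alpha^{-2})=\alpha\sqrt{2\pi}\,e^{-1/\alpha^2}$, so the constant equals $1/\{2K_{1/2}(\alpha^{-2})\}$. This completes the single-component identity and hence the general moment formula.

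Finally, the two displayed special cases follow by setting $s=1$ and $s=2$. Here I would use the half-integer evaluations $K_{3/2}(z)/K_{1/2}(z)=1+z^{-1}$ and $K_{5/2}(z)/K_{1/2}(z)=1+3z^{-1}+3z^{-2}$, obtained from the recurrence $K_{\nu+1}(z)=K_{\nu-1}(z)+(2\nu/z)K_\nu(z)$, at $z=\alpha_j^{-2}$; these collapse the bracketed terms to $1+\alpha_j^2/2$ and $1+2\alpha_j^2+3\alpha_j^4/2$. Alternatively, one may simply invoke the single-BS mean and variance recorded in Section 2, since $\mathrm{E}(T^2)=\mathrm{Var}(T)+[\mathrm{E}(T)]^2$ reproduces the same coefficients. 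The only delicate point in the whole argument is the index bookkeeping when matching the two integrals to $K_{(2s\pm1)/2}$ and checking that the prefactor cancels exactly against $2K_{1/2}(\alpha^{-2})$; everything else is routine.
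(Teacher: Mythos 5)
Your proof is correct, and it is worth noting that the paper itself offers no proof of this theorem at all: it simply states the result, leaning on the remark immediately preceding it that moments of the FM-BS follow from known properties of the single BS distribution and the associated sinh-normal distribution of \cite{rieck1989}. So where the paper implicitly outsources the component-moment formula $\mathrm{E}(T^s)=\beta^s\bigl[K_{(2s+1)/2}(\alpha^{-2})+K_{(2s-1)/2}(\alpha^{-2})\bigr]/\{2K_{1/2}(\alpha^{-2})\}$ to the literature, you derive it from scratch: the reduction of the mixture moment to component moments by linearity is the same (and is the only step the paper's framing really contains), but your evaluation via the substitution $u=t/\beta$, the split of $(u+1)$ into the $u^{s-1/2}$ and $u^{s-3/2}$ pieces, and the matching against $K_\nu(z)=\tfrac12\int_0^\infty u^{\nu-1}\exp\{-\tfrac{z}{2}(u+1/u)\}\,du$ at $z=\alpha^{-2}$ is a genuinely self-contained route. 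I checked the delicate points you flagged: the prefactor works out to $\beta^s e^{1/\alpha^2}/(\alpha\sqrt{2\pi})$ after absorbing the two factors of $2$, and since $2K_{1/2}(\alpha^{-2})=\alpha\sqrt{2\pi}\,e^{-1/\alpha^2}$ it cancels exactly as you claim; the index bookkeeping ($\nu-1=s-\tfrac12$ giving $\nu=(2s+1)/2$, and $\nu-1=s-\tfrac32$ giving $\nu=(2s-1)/2$) is right; and the half-integer ratios $K_{3/2}/K_{1/2}=1+z^{-1}$ and $K_{5/2}/K_{1/2}=1+3z^{-1}+3z^{-2}$ from the recurrence $K_{\nu+1}(z)=K_{\nu-1}(z)+(2\nu/z)K_\nu(z)$ (together with $K_{-1/2}=K_{1/2}$) do collapse the bracket to $1+\alpha_j^2/2$ and $1+2\alpha_j^2+3\alpha_j^4/2$, consistent with the mean and variance quoted in Section 2. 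What your approach buys is a verifiable derivation that makes the origin of the Bessel indices transparent; what the paper's (implicit) approach buys is brevity, at the cost of the reader having to trust Rieck's moment results. One small caveat worth adding if you write this up: the integral representation of $K_\nu$ you invoke holds for all real $\nu$ and $z>0$, which is exactly the regime needed here, so no convergence issue arises.
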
	

\begin{theorem}\label{teo1}
Let $Y \sim {\rm FM}{\mbox -}{\rm BS}(\mathbf{p}, \balpha,\bbeta)$. Then the mode (modes) and median of the FM-BS distribution  are obtained, respectively, by solving the  nonlinear equations with respect to $y$
\begin{itemize}
\item[] Mode: $\sum_{j=1}^G p_j \phi\big(a_y(\alpha_j,\beta_j)\big)  \left[a_y(\alpha_j,\beta_j) A^2_y(\alpha_j,\beta_j) + y^{-5/2}(y + 3\beta_j)/(4\alpha_j \beta_j^{1/2})\right] = 0$
\item[] Median: $\sum_{j=1}^G p_j \Phi\big(a_y(\alpha_j,\beta_j)\big) = 0.5$.
\end{itemize}
\end{theorem}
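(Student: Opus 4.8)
The plan is to treat the two parts separately, since the median follows almost immediately from the cumulative distribution function already recorded in Theorem~\ref{theo;1}(iv), whereas the mode requires differentiating the mixture density, which is smooth on $(0,\infty)$.

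For the median, I would simply invoke part (iv) of Theorem~\ref{theo;1}, which gives $F_Y(y)=\sum_{j=1}^G p_j\Phi\big(a_y(\alpha_j,\beta_j)\big)$. By definition the median is the value of $y$ at which the cdf attains $1/2$, so the stated equation $\sum_{j=1}^G p_j\Phi\big(a_y(\alpha_j,\beta_j)\big)=0.5$ is nothing more than $F_Y(y)=1/2$, and no further work is needed.

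For the mode, the approach is to differentiate $f(y;\mathbf{p},\balpha,\bbeta)=\sum_{j=1}^G p_j\phi\big(a_y(\alpha_j,\beta_j)\big)A_y(\alpha_j,\beta_j)$ term by term and set the result to zero. I would apply the product rule to each summand and use two elementary facts: first, $\phi'(z)=-z\,\phi(z)$ for the standard normal density; second, that $A_y(\alpha_j,\beta_j)$ is by construction the derivative of $a_y(\alpha_j,\beta_j)$ with respect to $y$, as stated immediately after \eqref{pdfbs}. Consequently each summand differentiates to $\phi(a_y)\big[-a_y A_y^2 + A_y'\big]$. The only genuine computation is $A_y'$: writing $A_y=(y^{-1/2}+\beta_j y^{-3/2})/(2\alpha_j\beta_j^{1/2})$ and differentiating gives $A_y'=-y^{-5/2}(y+3\beta_j)/(4\alpha_j\beta_j^{1/2})$. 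Substituting this in and factoring out the common sign $-1$ reproduces exactly the stated stationarity condition.

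The algebra here is entirely routine, so the only point that deserves care is interpretive rather than computational: because a finite mixture density can be genuinely multimodal, the displayed equation is a first-order (stationarity) condition whose solutions are the critical points of $f$, among which the local maxima are the modes. This is what justifies the phrasing ``mode (modes)'', and I would close by noting that in practice one solves the nonlinear equation numerically and retains those roots at which $f$ attains a local maximum.
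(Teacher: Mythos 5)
Your proposal is correct: the median claim is immediate from Theorem~\ref{theo;1}(iv) since the median solves $F_Y(y)=1/2$, and your mode computation --- combining $\phi'(z)=-z\,\phi(z)$, the fact that $\partial a_y(\alpha_j,\beta_j)/\partial y=A_y(\alpha_j,\beta_j)$, and the derivative $\partial A_y(\alpha_j,\beta_j)/\partial y=-y^{-5/2}(y+3\beta_j)/(4\alpha_j\beta_j^{1/2})$ --- yields $f'(y)=-\sum_{j=1}^G p_j\,\phi\big(a_y(\alpha_j,\beta_j)\big)\big[a_y(\alpha_j,\beta_j)A_y^2(\alpha_j,\beta_j)+y^{-5/2}(y+3\beta_j)/(4\alpha_j\beta_j^{1/2})\big]$, whose vanishing is exactly the stated equation. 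The paper itself states this theorem without proof, and your derivation is precisely the routine argument the authors leave implicit, including the correct interpretive caveat that the displayed equation is only a first-order condition whose roots must be screened for local maxima, which is what the phrasing ``mode (modes)'' reflects.
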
	
Table \ref{mode_median} displays the mode and median of the FM-BS distribution based on different parametric choices. The values of the parameters $p_1$, $\alpha_1$, $\alpha_2$, $\beta_1$ and $\beta_2$, in Table \ref{mode_median} are chosen to demonstrate the unimodal and bimodal cases for the probability function of the mixture model. From Table \ref{mode_median}, we see that the mode is slightly affected by variation in the values of the mixing proportion $p_1$, for the unimodal and bimodal case. In addition, the median decreases when $p_1$ increases for the unimodal and bimodal cases.
\begin{table}[H]
\begin{center}
\caption{The mode(s) and median of the FM-BS.}\label{mode_median}
 		\vskip 3mm
 		\small{\begin{tabular}{l@{\hskip 1.1in}l@{\hskip 1.1in}c@{\hskip 1.1in}}
 				\hline
 				$\btheta = (p_1,\alpha_1,\alpha_2,\beta_1,\beta_2)$  & Mode(s) &  Median \\ \hline
 				(0.2,0.5,0.75,3,7)	&   2.8649          &  5.7670    \\
 				(0.3,0.5,0.75,3,7)	&   2.6698          &  5.1786   \\
 				(0.4,0.5,0.75,3,7)	&   2.5521          &  4.6549   \\ \\
 				(0.2,0.25,0.35,3,7)	&   2.9756, 3.9871  &  6.2635    \\
 				(0.3,0.25,0.35,3,7)	&   2.8938, 4.5233  &  5.7541    \\
 				(0.4,0.25,0.35,3,7)	&   2.8625, 4.9819  &  5.0735   \\   	   	   	   	   	
 				\hline
 \end{tabular}}
 \end{center}
 \end{table}
\begin{figure}[H]
	\centering
	{\includegraphics[scale=0.5]{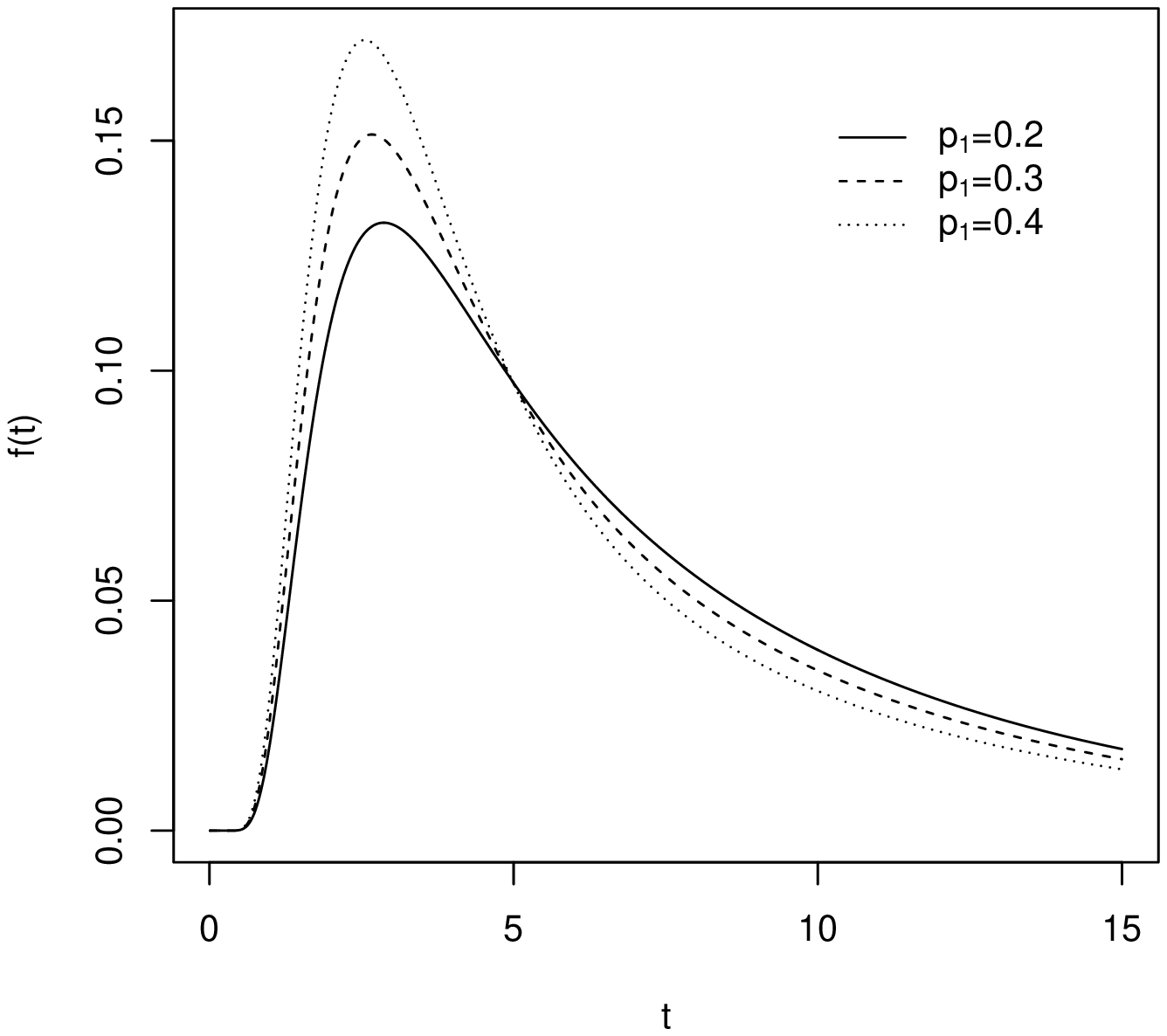}}~{\includegraphics[scale=0.5]{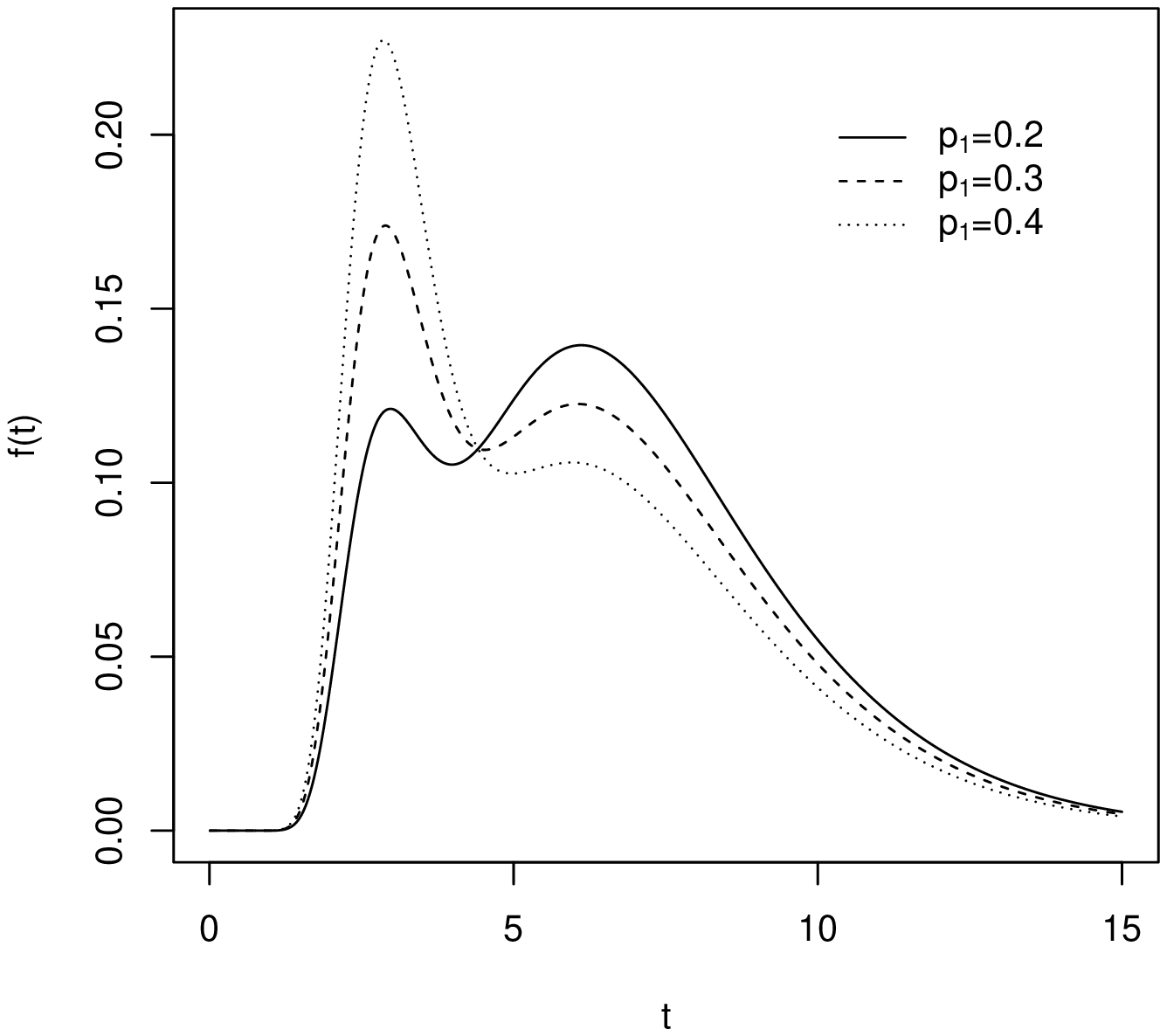}}	
	\caption{ Plots of the density function for some  values of $p_1$; (left panel) $(\alpha_1,\alpha_2,\beta_1,\beta_2)=(0.2,0.5,0.75,3,7)$ and (right panel) $(\alpha_1,\alpha_2,\beta_1,\beta_2)=(0.25,0.35,3,7)$. \label{fig1fmbs}}
\end{figure}
\begin{theorem}\label{sfANDhrf}
Let $Y \sim {\rm FM}{\mbox -}{\rm BS}(\mathbf{p},\balpha,\bbeta)$. Then,
\begin{itemize}
\item[(i)] the survival function (sf) and the hazard  function (hf)  of $Y$ are, respectively,
$$S_Y(y) = \sum_{j=1}^G p_j S_{T_j}(y) \quad \text{and} \quad 	h_{Y}(y) = \sum_{j=1}^G p_j f_{T_j}(y;\alpha_j,\beta_j)/S_Y(y),$$
where $S_{T_j}(y) =  1-\Phi\big(a_y(\alpha_j,\beta_j)\big)$;
\item[(ii)] for $G=2$ the hr function $h_Y(\cdot)$   satisfies
\begin{equation*}
\lim_{y \to \infty}h_Y(y) =   \begin{cases}
	 \frac{1}{2\alpha_1^2 \beta_1}, & \mbox{if } \alpha_2^2\beta_2 < \alpha_1^2\beta_1; \\
	 \frac{d}{2\alpha_1^2\beta_1}  + \frac{1 - d}{2\alpha_2^2\beta_2} , & \mbox{if } \alpha_2^2\beta_2 = \alpha_1^2\beta_1; \\
	 \frac{1}{2\alpha_2^2 \beta_2}, & \mbox{if } \alpha_2^2\beta_2 > \alpha_1^2\beta_1, \end{cases}
\end{equation*}
where $d = {p}\Big({p + (1-p)\exp(1/\alpha_2^2 - 1/\alpha_1^2)\alpha_1 \beta_1^{1/2}/(\alpha_2 \beta_2^{1/2})}\Big)^{-1}$.
\end{itemize}
\end{theorem}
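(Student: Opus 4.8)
\medskip
\noindent\textbf{Proof proposal.}
For part (i) the argument is a short bookkeeping exercise built on Theorem~\ref{theo;1}(iv). By definition $S_Y(y)=1-F_Y(y)$; substituting $F_Y(y)=\sum_{j=1}^G p_j\Phi\big(a_y(\alpha_j,\beta_j)\big)$ and writing the leading constant as $1=\sum_{j=1}^G p_j$ (valid since the mixing weights sum to one) yields $S_Y(y)=\sum_{j=1}^G p_j\big[1-\Phi(a_y(\alpha_j,\beta_j))\big]=\sum_{j=1}^G p_j S_{T_j}(y)$. The hazard formula then follows immediately from $h_Y(y)=f_Y(y)/S_Y(y)$ together with the mixture density \eqref{eqFM-BS}.

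For part (ii) the plan is to rewrite the mixture hazard as a convex combination of the two component hazards and then study the competition between their tails. Setting $f_{T_j}=h_{T_j}S_{T_j}$ with $h_{T_j}(y)=f_{T_j}(y)/S_{T_j}(y)$, one obtains the exact identity
\begin{equation*}
h_Y(y)=w_1(y)\,h_{T_1}(y)+w_2(y)\,h_{T_2}(y),\qquad w_j(y)=\frac{p_j S_{T_j}(y)}{p_1 S_{T_1}(y)+p_2 S_{T_2}(y)},
\end{equation*}
with $w_1(y)+w_2(y)=1$. Two ingredients are then needed: the limit of each component hazard and the limit of the weights.

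First I would record the classical single-component fact that $\lim_{t\to\infty}h_{T_j}(t)=1/(2\alpha_j^2\beta_j)$. This follows from the Mills-ratio asymptotic $1-\Phi(x)\sim\phi(x)/x$, which gives $h_{T_j}(t)\sim a_t(\alpha_j,\beta_j)\,A_t(\alpha_j,\beta_j)$; inserting the large-$t$ behaviour $a_t(\alpha_j,\beta_j)\sim\sqrt{t}/(\alpha_j\sqrt{\beta_j})$ and $A_t(\alpha_j,\beta_j)\sim1/(2\alpha_j\sqrt{\beta_j}\sqrt{t})$ produces the constant $1/(2\alpha_j^2\beta_j)$. The behaviour of the weights is governed by the ratio $S_{T_2}(y)/S_{T_1}(y)$, which I would analyse through $S_{T_j}(y)\sim\phi(a_y(\alpha_j,\beta_j))/a_y(\alpha_j,\beta_j)$ together with the expansion $a_y(\alpha_j,\beta_j)^2=y/(\alpha_j^2\beta_j)-2/\alpha_j^2+O(1/y)$. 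The exponent is dominated by its linear-in-$y$ term, of coefficient $-1/(2\alpha_j^2\beta_j)$, so the component with the larger value of $\alpha_j^2\beta_j$ has the slower-decaying survival function and captures the entire tail: when $\alpha_2^2\beta_2<\alpha_1^2\beta_1$ one gets $w_1(y)\to1$ and hence $h_Y(y)\to1/(2\alpha_1^2\beta_1)$, and symmetrically $h_Y(y)\to1/(2\alpha_2^2\beta_2)$ when the inequality is reversed.

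The delicate case, and the main obstacle, is the borderline $\alpha_1^2\beta_1=\alpha_2^2\beta_2$. There the linear-in-$y$ terms cancel, the leading asymptotics carry no information, and one must retain the next-order constant $-2/\alpha_j^2$ in the expansion of $a_y(\alpha_j,\beta_j)^2$; the surviving exponential factor in $\lim_{y\to\infty}S_{T_2}(y)/S_{T_1}(y)$ is then $\exp(1/\alpha_2^2-1/\alpha_1^2)$, while the polynomial Mills-ratio prefactor $a_y(\alpha_1,\beta_1)/a_y(\alpha_2,\beta_2)$ tends to one because the equal-tail condition forces $\alpha_1\sqrt{\beta_1}=\alpha_2\sqrt{\beta_2}$ (so the factor $\alpha_1\beta_1^{1/2}/(\alpha_2\beta_2^{1/2})$ appearing in the stated $d$ is exactly one here). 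Passing to the limit in $w_1(y)$ gives $w_1(y)\to d$ with $d$ as displayed (recall $p=p_1$), and since the two component hazards share the common limit $1/(2\alpha_1^2\beta_1)=1/(2\alpha_2^2\beta_2)$, the weighted expression $d/(2\alpha_1^2\beta_1)+(1-d)/(2\alpha_2^2\beta_2)$ reproduces that common value. Apart from isolating this second-order term, the only technical point needing care is to verify that the $O(1/y)$ remainders are controlled well enough to interchange the limit with the ratio.
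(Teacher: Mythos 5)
Your proposal is correct, and it follows the paper's own skeleton: the same decomposition $h_Y(y)=w_1(y)h_{T_1}(y)+w_2(y)h_{T_2}(y)$ with the weight $w_1(y)=pS_{T_1}(y)/\big[pS_{T_1}(y)+(1-p)S_{T_2}(y)\big]$, followed by a tail analysis of the survival ratio. Where you diverge is the analytic engine: the paper disposes of the indeterminate $0/0$ ratio $S_{T_2}(y)/S_{T_1}(y)$ by L'H\^{o}pital's rule, reducing it to the density ratio $\phi\big(a_y(\alpha_2,\beta_2)\big)A_y(\alpha_2,\beta_2)/\big[\phi\big(a_y(\alpha_1,\beta_1)\big)A_y(\alpha_1,\beta_1)\big]$ and then taking limits of the exponential and the $A_y$ ratio separately (the paper's display labels this post-L'H\^{o}pital limit as ``the derivative of $w(y)$,'' which is a slip your route avoids entirely), whereas you use the Mills-ratio asymptotic $1-\Phi(x)\sim\phi(x)/x$ together with the exact expansion $a_y^2(\alpha_j,\beta_j)=y/(\alpha_j^2\beta_j)-2/\alpha_j^2+O(1/y)$. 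The two computations produce slightly different polynomial prefactors ($a_1/a_2$ for you versus $A_2/A_1$ for the paper), but both equal $1$ in the only case where they matter, the borderline $\alpha_1^2\beta_1=\alpha_2^2\beta_2$, as you correctly observe. Your version buys two genuine improvements: it makes explicit the component-hazard limit $h_{T_j}(t)\to 1/(2\alpha_j^2\beta_j)$, which the paper's proof uses silently (it only derives $\lim w(y)$ and then declares part (ii) proved), and it explains why the factor $\alpha_1\beta_1^{1/2}/(\alpha_2\beta_2^{1/2})$ in the stated $d$ is redundant. Your closing caveat about controlling the $O(1/y)$ remainders is easily discharged, since they enter only through $\exp\big(O(1/y)\big)\to 1$.
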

\begin{proof}[\textbf{Proof:} ] Part (i)  is  easy to  prove.  For part (ii),  we  start  by considering the hazard function, which can also be written as
	$$h_Y(y) = w(y) h_{T_1}(y) + \big(1 - w(y)\big) h_{T_2}(y),$$
where $w(y)={pS_{T_1}(y)}/\big[pS_{T_1}(y)+(1-p)S_{T_2}(y)\big].$ Applying L'H\^{o}pital's rule for $w(y)$,  the derivative of  $w(y)$  can  be  expressed as
\begin{eqnarray*}
\frac{d}{dy}w(y) &=&\frac{p}{p + (1-p)\frac{\phi(a_y(\alpha_2,\beta_2))A_y(\alpha_2,\beta_2)}{\phi(a_y(\alpha_1,\beta_1))A_y(\alpha_1,\beta_1)}}.
	\label{eqDer}
\end{eqnarray*}
We have that if $y \rightarrow \infty$, then  $A_y(\alpha_2,\beta_2)/A_y(\alpha_1,\beta_1) \rightarrow \alpha_1 \beta_1^{1/2}/(\alpha_2 \beta_2^{1/2})$ and
\begin{equation*}
	\frac{\phi\big(a_y(\alpha_2,\beta_2)\big)}{\phi\big(a_y(\alpha_1,\beta_1)\big)} \rightarrow   \begin{cases}
	0, & \mbox{if } \alpha_2^2\beta_2 < \alpha_1^2\beta_1; \\
	\exp\left(1/\alpha_1^2 - 1/\alpha_2^2\right) , & \mbox{if } \alpha_2^2\beta_2 = \alpha_1^2\beta_1; \\
	\infty, & \mbox{if } \alpha_2^2\beta_2 > \alpha_1^2\beta_1. \end{cases}
\end{equation*}
So, from  the above results we obtain that as $y\rightarrow \infty$,
\begin{equation*}
	w(y) \rightarrow   \begin{cases}
	1, & \mbox{if } \alpha_2^2\beta_2 < \alpha_1^2\beta_1; \\
	d , & \mbox{if } \alpha_2^2\beta_2 = \alpha_1^2\beta_1; \\
	0, & \mbox{if } \alpha_2^2\beta_2 > \alpha_1^2\beta_1; \end{cases}
\end{equation*}
where $$d = {p}\left(    {p + (1-p)\exp(1/\alpha_2^2 - 1/\alpha_1^2)\alpha_1 \beta_1^{1/2}/(\alpha_2 \beta_2^{1/2})} \right)^{-1},$$
and hence part (ii) of Theorem \ref{sfANDhrf}  has been proved.
\end{proof}
\begin{figure}[t!]
	\centering
	{\includegraphics[scale=0.5]{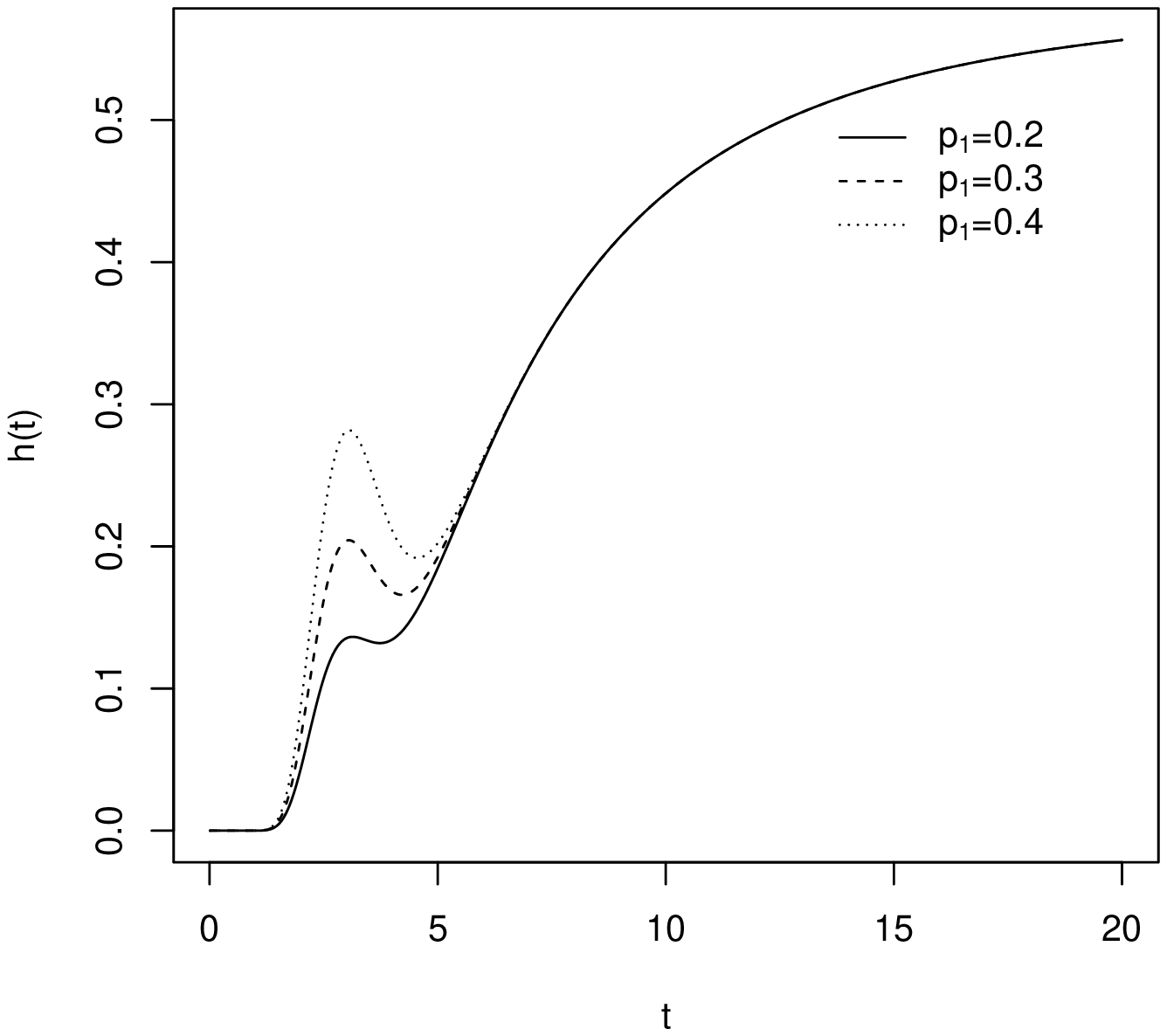}}~{\includegraphics[scale=0.5]{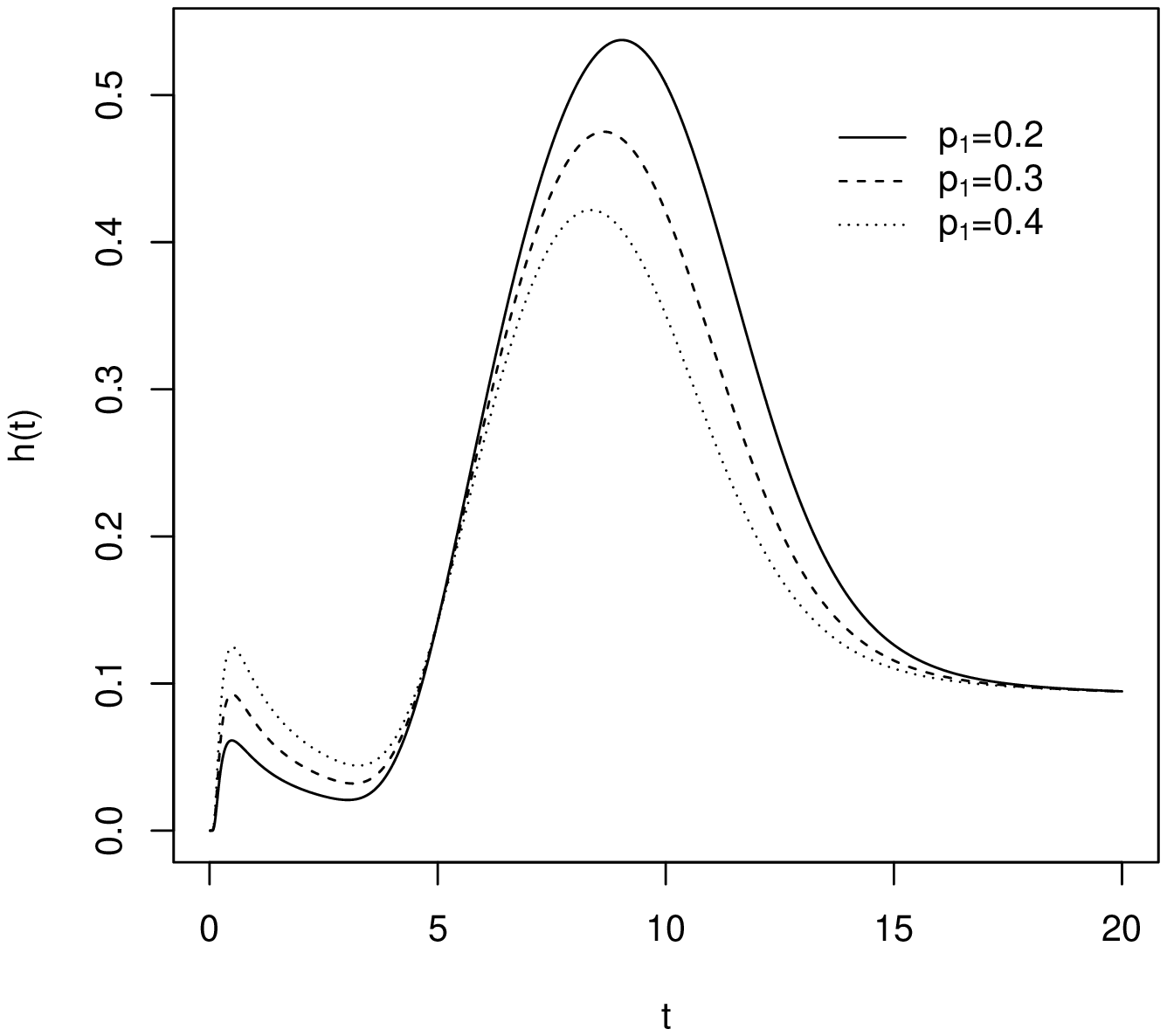}}\\
	\caption{Some hr functions  of the FM-BS. (left panel, unimodal) with $(\alpha_1,\alpha_2,\beta_1,\beta_2)=(0.25,0.35,3,7)$ and (right panel, bimodal)  with $(\alpha_1,\alpha_2,\beta_1,\beta_2)=(1.5,0.25,3,7)$. \label{fig2fmbs}}
\end{figure}
Figure \ref{fig2fmbs} displays  two different hr functions (unimodal and bimodal), considering three mixing proportions and fixed parameters $\alpha_1$, $\alpha_2$, $\beta_1$, $\beta_2$. Note that Figure \ref{fig2fmbs} (left panel) considers small $\alpha_1$ and $\alpha_2$ and Figure \ref{fig2fmbs} (right panel) considers large $\alpha_1$ and small $\alpha_2$.
\begin{figure}[t!]
\centering
{\includegraphics[scale=0.5]{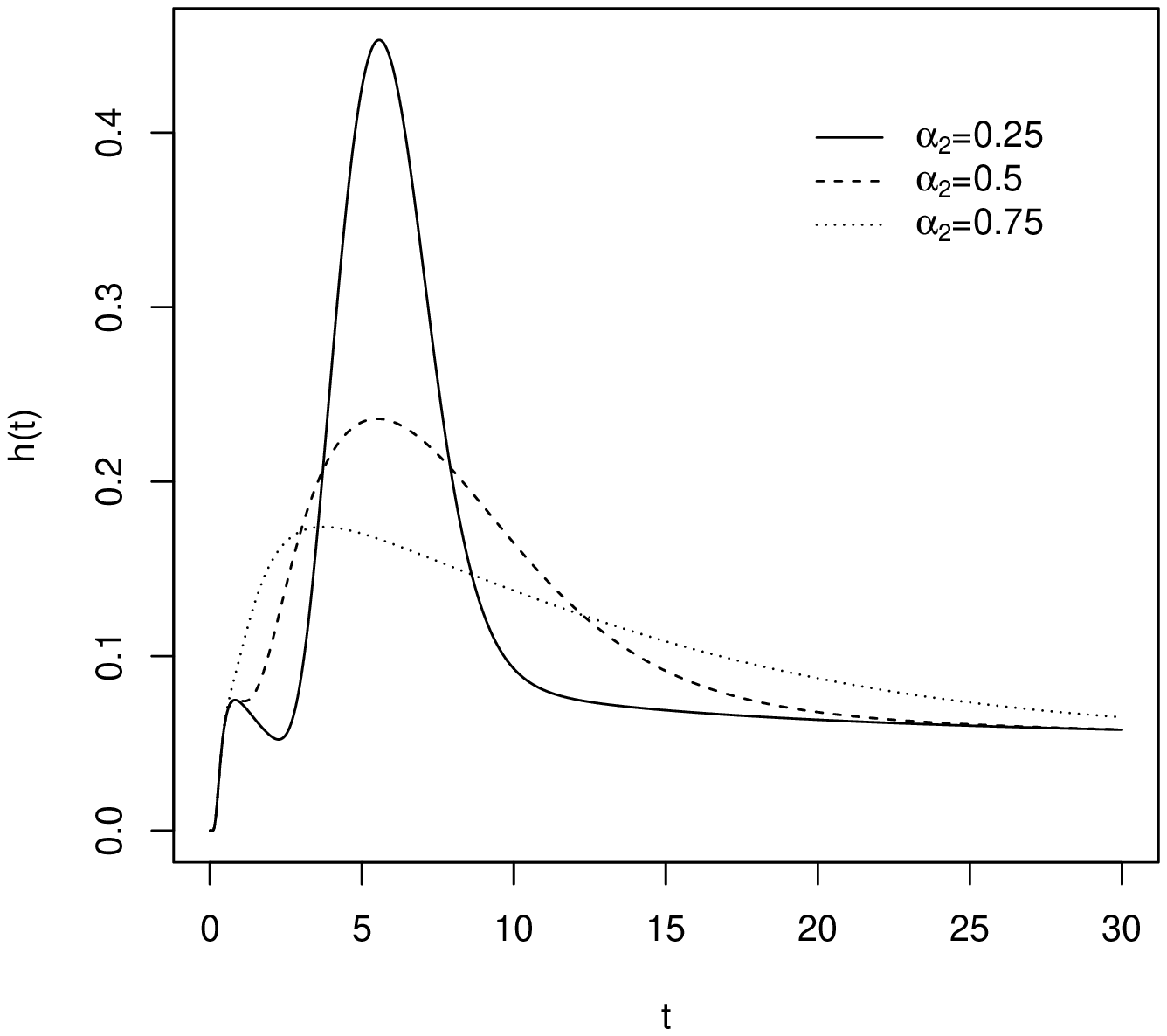}}~{\includegraphics[scale=0.5]{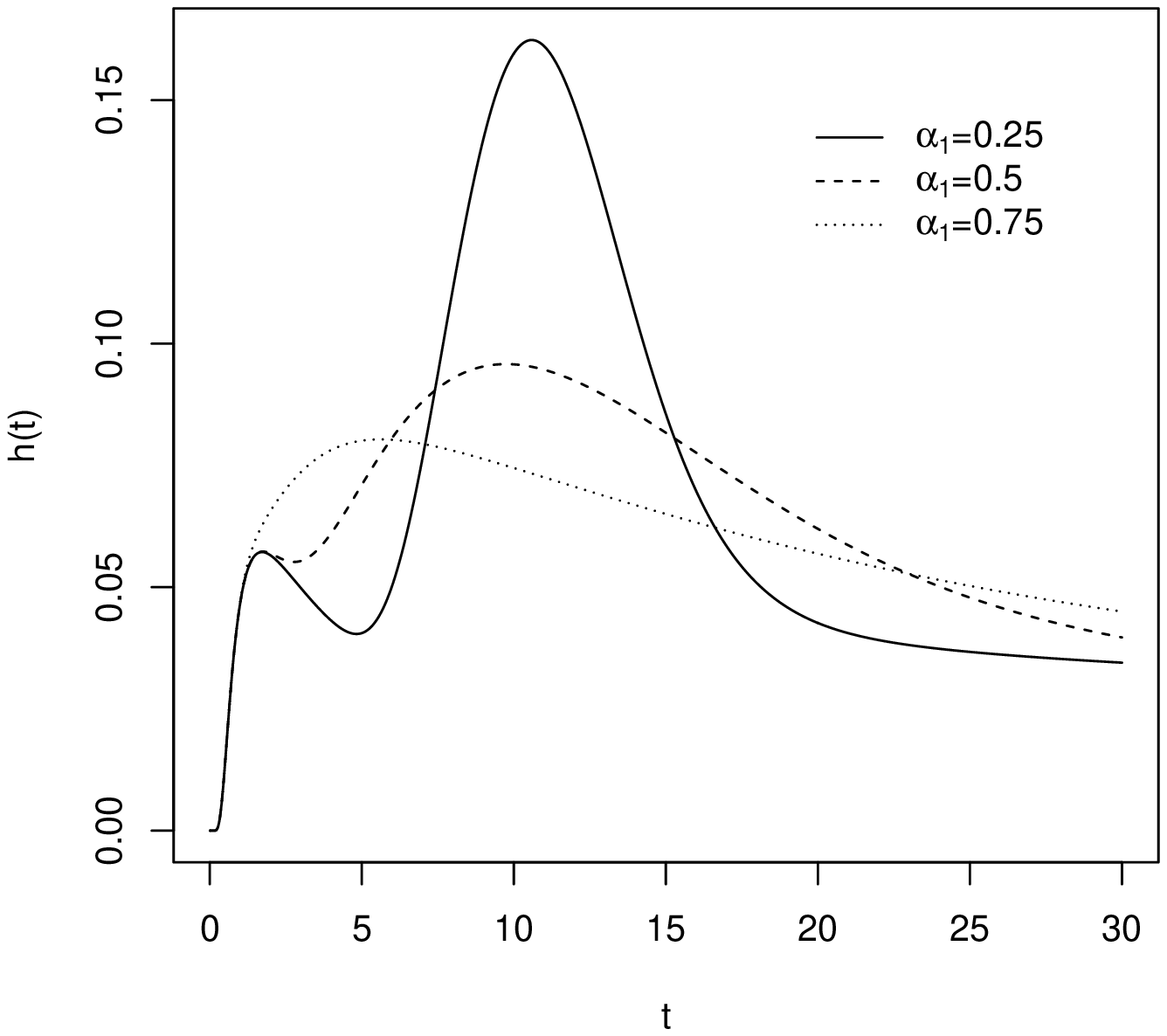}}\\
\caption{Some hr  functions  of the FM-BS. (left panel) with $\alpha_1=1.5$, $\beta_1=\beta_2=5$ and (right panel) with $\alpha_2=1.5$, $\beta_1=\beta_2=10$. In all cases, $p_1=0.4$. \label{fig2fmbs2}}
\end{figure}
Figure \ref{fig2fmbs2} shows the convergence of the hr function (unimodal and bimodal) for the  FM-BS model considering two conditions (the first and the last of Theorem \ref{sfANDhrf}) for different parameter combinations. Note that the left plot in Figure \ref{fig2fmbs2} satisfies the $\alpha_2^2 \beta_2  < \alpha_1^2 \beta_1$ condition and the right plot in Figure \ref{fig2fmbs2} satisfies the $\alpha_2^2 \beta_2  > \alpha_1^2 \beta_1$ condition.

Inspired   by \cite{hussaini1997},  we  estimate the stress-strength reliability $R = P(Y < X)$, when the random variables $X$ and $Y$ are independent and each has  a   FM-BS distribution.  Specifically, suppose that $X \sim {\rm FM}{\mbox -}{\rm BS}(\mathbf{p},\balpha,\bbeta)$ and $Y\sim {\rm FM}{\mbox -}{\rm BS}(\mathbf{q},\bgamma,\btheta)$ are independent such that   their pdfs  are   given  by
$$h_X(x) = \sum_{j=1}^{G_1} p_j f_{T_j}(x;\alpha_j,\beta_j)\quad \text{and} \quad h_Y(y) = \sum_{l=1}^{G_2} q_l g_{T_l}(y;\gamma_l,\theta_l),$$
respectively. Then, the stress-strength reliability $R$  can  be  expressed  as
\begin{equation}
R = \int_{0}^{\infty} \int_{0}^{x} h_Y(y) h_X(x) dy dx = \sum_{j=1}^{G_1} \sum_{l=1}^{G_2} p_j q_l R_{jl},
\label{stressReliability}
\end{equation}
where
\begin{eqnarray*}
R_{jl}& =& \int_{0}^{\infty} \int_{0}^{x}g_{T_l}(y;\gamma_l,\theta_l) f_{T_j}(x;\alpha_j,\beta_j)dydx=\int_{0}^{\infty} \phi\big(c_x(\alpha_j,\beta_j)\big)\Phi\big(c_x(\gamma_l,\theta_l) \big)C_x(\alpha_j,\beta_j)dx,
\end{eqnarray*}
for $j=1,\ldots,G_1$, $l=1,\ldots,G_2$.
The expressions for $ R_{jl}$,  in  \eqref{stressReliability},  can be obtained using numerical methods (e.g., using the \texttt{integrate()} function in the R software).

\subsection{Identifiability}
A very important concept associated with mixture models is identifiability, which the foundation for estimation problems. It's important as testing hypotheses and classification of random variables can only be discussed if the class of all finite mixtures is identifiable. This identifiability issue has been discussed by several authors, among them, \cite{teicher1967}, \cite{yakowitz1968} and \cite{chandra1977}.  In this section, we use the results from \cite{chandra1977} to show that the class of all finite mixing distributions relative to the ${\rm BS}(\alpha,\beta)$ distribution is identifiable, which we present briefly: Let $\psi$ be a transform associated with each $F_j \in \Psi$, where $\Psi$ is the class of distribution functions, having the domain of definition $D_{\psi_j}$ with linear map $M: F_j \rightarrow \psi_j$, $j=1,\ldots,G$. If there exists a total ordering ($\preceq$) of $\Psi$ such that:
\begin{itemize}
	\item[(i)] $F_1 \preceq F_2$, $(F_1,F_2 \in \Psi)$ implies $D_{\psi_1} \subseteq D_{\psi_2}$; and 
	\item[(ii)]	 for each $F_1 \in \Psi$, there exists some $s_1 \in D_{\psi_1}$ where $\psi_1(s) \neq 0$ such that $\lim\limits_{s \rightarrow s_1} \psi_2(s)/\psi_1(s) = 0$ for $F_1 < F_2$, ($F_1,F_2 \in \Psi$),
\end{itemize}
then the class of all finite mixing distributions is identifiable relative to $\Psi$.
\begin{proposition}
The class of all finite mixing distributions relative to the ${\rm BS}(\alpha,\beta)$ distribution for small $\alpha$ is identifiable.
\end{proposition}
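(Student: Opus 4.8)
The plan is to verify the criterion of \cite{chandra1977} recalled just above the statement: I must produce a linear transform $M:F\mapsto\psi_F$ on the class $\Psi$ of ${\rm BS}(\alpha,\beta)$ laws (with $\alpha$ small), a total ordering $\preceq$ of $\Psi$, and then check the domain-nesting condition (i) together with the limit-domination condition (ii). The natural transform is the moment generating function $M_F(s)=\mathrm{E}_F\!\left[e^{sY}\right]=\int_0^\infty e^{sy}f_T(y;\alpha,\beta)\,dy$, which is manifestly linear under mixing, so that $M_{\sum_j p_j F_j}=\sum_j p_j M_{F_j}$. The whole argument is driven by the right-tail behaviour of the BS density, which is already implicit in the proof of Theorem \ref{sfANDhrf}.

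First I would record the tail asymptotics. Since $a_y(\alpha,\beta)\sim\sqrt{y/\beta}/\alpha$ and $A_y(\alpha,\beta)\sim y^{-1/2}/(2\alpha\beta^{1/2})$ as $y\to\infty$, the same estimate of $\phi(a_y)$ used in Theorem \ref{sfANDhrf} gives
\begin{equation*}
f_T(y;\alpha,\beta)\sim\frac{1}{2\sqrt{2\pi}\,\alpha\beta^{1/2}}\,y^{-1/2}\exp\!\left(-\frac{y}{2\alpha^2\beta}\right),\qquad y\to\infty .
\end{equation*}
Hence $M_F(s)$ converges precisely for $s<c_F:=1/(2\alpha^2\beta)$, so its domain is $D_{\psi_F}=(-\infty,c_F)$, and as $s\to c_F^-$ one finds $M_F(s)\sim K_F\,(c_F-s)^{-1/2}\to\infty$ for a positive constant $K_F$. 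I would then order $\Psi$ by the single quantity $\alpha^2\beta$, declaring $F_1\preceq F_2$ iff $\alpha_1^2\beta_1\ge\alpha_2^2\beta_2$, i.e. iff $c_1\le c_2$. Condition (i) is immediate, since $c_1\le c_2$ forces $(-\infty,c_1)\subseteq(-\infty,c_2)$. For condition (ii), given $F_1\prec F_2$ (so $c_1<c_2$), I take $s_1=c_1$, the right endpoint of $D_{\psi_1}$: as $s\to c_1^-$ the numerator $M_{F_2}(s)\to M_{F_2}(c_1)<\infty$ while $M_{F_1}(s)\to\infty$, whence $M_{F_2}(s)/M_{F_1}(s)\to0$, exactly as required.

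The hard part is the totality of $\preceq$. The quantity $\alpha^2\beta$ is not injective on the BS parameter space, so two distinct laws with $\alpha_1^2\beta_1=\alpha_2^2\beta_2$ are not separated by the above construction: their MGFs blow up at the \emph{same} point $c$, and since $\alpha\beta^{1/2}=\sqrt{\alpha^2\beta}$ is then also common, the ratio tends to $K_{F_2}/K_{F_1}=1$ rather than $0$. To repair this I would break ties using the left tail: as $y\to0^+$ one has $a_y(\alpha,\beta)\sim-\sqrt{\beta/y}/\alpha$, giving $f_T(y;\alpha,\beta)\sim c\,y^{-3/2}\exp\!\left(-\beta/(2\alpha^2 y)\right)$, so the behaviour of $M_F(s)$ as $s\to-\infty$ is governed by the distinct rate $\beta/(2\alpha^2)$. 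Refining $\preceq$ lexicographically by $(\alpha^2\beta,\ \beta/\alpha^2)$ makes it a genuine total order, because the pair $(\alpha^2\beta,\beta/\alpha^2)$ determines $(\alpha,\beta)$ uniquely, and condition (ii) is then checked on each comparison by letting $s$ approach the relevant limit point ($c_1$ for distinct right tails, $-\infty$ within a tie-class). I expect this tie-breaking step — together with confirming that a single boundary point $s_1$ may be used per comparison, and pinning down exactly where the restriction to small $\alpha$ is invoked to make the argument go through — to be the main obstacle; the verification of (i) and the large-$y$ part of (ii) are routine given the asymptotics above.
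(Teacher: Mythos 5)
Your proposal takes a genuinely different route from the paper, but it contains a gap that you flag yourself and never close, and it is fatal to the argument as an application of Chandra's criterion. The paper's own proof transforms each BS law into its fractional moment function $\psi_j(s)=E(T_j^s)$, invokes Rieck's small-$\alpha$ expansion \eqref{fgm} (this is the only place the hypothesis ``small $\alpha$'' enters), orders $\Psi$ lexicographically in $(\alpha,\beta)$, and takes $s_1=+\infty$, where the ratio is dominated by $(\beta_2/\beta_1)^s\to 0$. Your MGF version avoids any small-$\alpha$ expansion, which is attractive, but the criterion as quoted in the paper requires, for each $F_1$, a \emph{single} limit point $s_1$ such that $\psi_2(s)/\psi_1(s)\to 0$ for \emph{every} $F_2\succ F_1$. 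Your lexicographic order by $(\alpha^2\beta,\ \beta/\alpha^2)$ forces two incompatible choices: $s_1=c_1=1/(2\alpha_1^2\beta_1)$ for cross-class comparisons, and $s_1=-\infty$ inside a tie class. No single choice works. As $s\to c_1^-$ a tie-class competitor yields a finite nonzero limit (see below), while as $s\to-\infty$ one has $M_F(s)\asymp C_F\,e^{-2\sqrt{b_F|s|}}$ with $b_F=\beta/(2\alpha^2)$, so the ratio against a larger-abscissa competitor tends to $0$ only if $b_2>b_1$ — which is \emph{not} implied by $\alpha_2^2\beta_2<\alpha_1^2\beta_1$ (take $(\alpha_1,\beta_1)=(1,4)$, $(\alpha_2,\beta_2)=(1,2)$: then $c_2>c_1$ but $b_2<b_1$, and the ratio blows up at $-\infty$). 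So ``checking condition (ii) on each comparison'' with a comparison-dependent limit point does not verify Chandra's hypothesis; you would need to prove a strictly stronger two-limit-point variant of the identifiability theorem, and that is precisely the step your last paragraph postpones. The restriction to small $\alpha$, which the statement carries and the paper actually uses, plays no role anywhere in your argument — another sign that you are proving a different (unestablished) theorem rather than applying the cited one.

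A secondary, repairable error: your right-tail asymptotics drop a factor. Since $a_y^2(\alpha,\beta)=(y/\beta-2+\beta/y)/\alpha^2$, the correct statement is
\begin{equation*}
f_T(y;\alpha,\beta)\sim\frac{e^{1/\alpha^2}}{2\sqrt{2\pi}\,\alpha\beta^{1/2}}\,y^{-1/2}\exp\!\left(-\frac{y}{2\alpha^2\beta}\right),\qquad y\to\infty,
\end{equation*}
consistent with the constant $d$ in Theorem \ref{sfANDhrf}, which carries exactly the factor $\exp\bigl(1/\alpha_2^2-1/\alpha_1^2\bigr)$. Consequently, within a tie class ($\alpha_1^2\beta_1=\alpha_2^2\beta_2$, $\alpha_1\neq\alpha_2$) the ratio at $s\to c^-$ tends to $K_{F_2}/K_{F_1}=e^{1/\alpha_2^2-1/\alpha_1^2}$, not to $1$ as you claim; your qualitative diagnosis that condition (ii) fails there survives, but for a different reason. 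For balance: the paper's own verification is also not airtight — it checks Condition 2 only for pairs with $\alpha_1>\alpha_2$ \emph{and} $\beta_1>\beta_2$, whereas its lexicographic order admits $\alpha_1>\alpha_2$ with $\beta_1\le\beta_2$, in which case $(\beta_2/\beta_1)^s$ does not tend to $0$ — so the uniformity issue you stumbled on is real in the paper too; but your submission, on its own terms, leaves the decisive step unproved.
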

\begin{proof}[\textbf{Proof:}] Let $T$  be a  positive random variable  that  follows  a  BS distribution with   cdf  as given in (1.1). By using Chandra's approach, we   verify   conditions (i)-(ii). First,  the $s$th moments of the $j$th BS component, considering small $\alpha < 0.5$ \citep[see,][]{rieck1989}, is  given by
\begin{equation}\label{fgm}
\psi_j(s) = E(T_j^s) = \beta_j^s \left[1 + \frac{\alpha_j^2 s^2}{2} + \frac{\alpha_j(s^4 - s^2)}{8} + \frac{\alpha_j(s^9 - 5s^4 + 4s^2)}{48}\right],\,\,\, j=1,2.	
\end{equation}
From \eqref{fgm} we can see $D_{\psi_1}=D_{\psi_2}=(-\infty,\infty)$. Now we satisfy the two conditions of the previous theorem as follows: \\
\noindent{\it Condition 1:} Ordering the family $\Psi$ of all cdf's lexicographically by $F_1 \leq F_2$ if $\alpha_1 > \alpha_2$ or  $\alpha_1 = \alpha_2$ implies that $\beta_1 > \beta_2$. So, we can simply prove that $D_{\psi_1} \subseteq D_{\psi_2}$. \\
	
\noindent{\it Condition 2:}  If we take $s_1 = + \infty$ and consider $\alpha_1 > \alpha_2$ and $\beta_1 > \beta_2$, we have
	\begin{eqnarray*}
		\lim\limits_{s \rightarrow +\infty} \frac{\psi_2(s)}{\psi_1(s)}&=&\lim\limits_{s \rightarrow +\infty} \frac{\beta_2^s \left[1 + \frac{\alpha_2^2 s^2}{2} + \frac{\alpha_2(s^4 - s^2)}{8} + \frac{\alpha_2(s^9 - 5s^4 + 4s^2)}{48}\right]}{\beta_1^s \left[1 + \frac{\alpha_1^2 s^2}{2} + \frac{\alpha_1(s^4 - s^2)}{8} + \frac{\alpha_1(s^9 - 5s^4 + 4s^2)}{48}\right]}\\
		&=& \lim\limits_{s \rightarrow +\infty}  \left\{\exp \left[ \log \left(\frac{\beta_2}{\beta_1}\right)\right]\right\}^s\\
		&=& 0,
	\end{eqnarray*}
	and hence the identifiability of the finite mixture ${\rm BS}(\alpha,\beta)$ has been proved.
\end{proof}

\section{Maximum  likelihood estimation}
In this section, we deal with the estimation problem for the model by using the EM  (expectation maximization) algorithm that finds the maximum likelihood (ML)  in the presence of missing data, which  was introduced by \citep{Dempster_EM} to  obtain the ML estimates  of the unknown parameter $\btheta$.   Moreover,  we  discuss starting values  and  the stopping rule of the EM algorithm, and how the standard errors were obtained.

\subsection{Parameter estimation via the EM algorithm}\label{sectionem}

Here,   we describe how to  implement  the expectation conditional maximization (ECM) algorithm \citep{Meng93} for the  ML  estimation of the parameters of the FM-BS model. The basic idea of the ECM is that the maximization (M) step of EM is replaced by several computationally simple conditional maximization (CM) steps. For notational convenience, let $\mathbf{y} = (y_1, \ldots, y_n)$ be the observations  vector and  $\mathbf{Z} = (\Z_1, \ldots, \Z_n)$ the set of latent component-indicators $\Z_j = (Z_{1j},\ldots, Z_{Gj})$, $j=1,\ldots, n$, whose values are a set of binary variables with
\[ Z_{ij} =
\begin{cases}
1  & \quad \text{if } \mathbf{Y}_j\,\text{belongs to group}\,\, j,\\
0  & \quad \text{} \text{otherwise,}\\
\end{cases}
\]
in  which  $\sum_{j=1}^{G} Z_{ij}=1$. Given the mixing probabilities{\color{black}{ $p_1, \ldots, p_G$}}, the component indicators $\Z_1,\ldots$ $, \Z_n$ are independent, with multinomial densities $f(\mathbf{z}_i)=p^{z_{i1}}_1p^{z_{i2}}_2...(1-p_1-\ldots-p_{G-1})^{z_{iG}}$, which  are  denoted  by  $\mathbf{Z}_i\sim {\rm Multinomial}(1;p_1\ldots,p_G)$. Note that $P(Z_{ij}=1)=1-P(Z_{ij}=0)=p_{j}$.
These  results are  used to build the ECM  algorithm,  since  the FM-BS model can be represented  hierarchically as
{\color{black}{
\begin{eqnarray}
Y_i|Z_{ij}=1 &\ind& {\rm BS}(\alpha_j, \beta_j),\label{HierqMIX1}\\
\textbf{Z}_i &\iid& {\rm Multinomial}(1,p_1,\ldots,p_G) \quad (i=1,\ldots,n).\label{HierqMIX3}
\end{eqnarray}}}
According to  (\ref{HierqMIX1}) and (\ref{HierqMIX3}), the complete data log-likelihood function of $\btheta=(p_1,\ldots, p_{G-1}, \alpha_1,$ $\ldots, \alpha_G,\beta_1,\ldots,\beta_G)$ given $(\mathbf{y},\mathbf{Z})$, aside from additive constants, is
\begin{eqnarray*}\label{log-likecomple}
\ell_c(\btheta|\mathbf{y},\mathbf{Z})&=&\sum_{i=1}^n\sum_{j=1}^G
z_{ij}\left[\log{p_j}- \log(\alpha_j) - \frac{1}{2}\log(\beta_j) + \log(y_i + \beta_j) -  \frac{1}{2}a^2_{y_i}(\alpha_j,\beta_j) \right].\nonumber \\
\end{eqnarray*}
Hence, the expected value of the complete data log-likelihood $\ell_c(\btheta|\mathbf{y},\mathbf{Z})$, evaluated with $\btheta = \btheta^{(k)}$, is  the $Q$-function given  by
$Q(\btheta|{\btheta}^{(k)})= \rm{E}\big(\ell_c(\btheta|\mathbf{y},\mathbf{Z}) | \mathbf{y},{\btheta}^{(k)}\big)$. To evaluate the $Q$-function, the necessary conditional expectations include ${\widehat z}_{ij}^{(k)}=\rm{E}\big(Z_{ij}|, y_i,\btheta^{(k)}\big)$.
By using known properties of conditional expectation, we obtain
\begin{eqnarray*}
{\widehat z}_{ij}^{(k)}&=&\widehat{p}^{(k)}_j f_{T_j}(y_i;{\alpha}^{(k)}_j,{\beta}^{(k)}_j)/\sum_{j=1}^G
	\widehat{p}^{(k)}_j f_{T_j}(y_i;{\alpha}^{(k)}_j,{\beta}^{(k)}_j).\label{zij}
\end{eqnarray*}
Therefore, the $Q$-function  can be written as
\begin{eqnarray*}
Q(\btheta|\widehat{\btheta}^{(k)})&=& \sum_{i=1}^n\sum_{j=1}^G \left[{\widehat z}_{ij}^{(k)} \left(\log{p_j}- \log \alpha_j - \frac{1}{2} \log \beta_j  + \frac{1}{2}\log(y_i + \beta_j) -  \frac{1}{2}a_{y_i}(\alpha_j,\beta_j)^2 \right) \right].
\end{eqnarray*}
In summary, the implementation of the ECM algorithm for ML estimation of the parameters of the FM-BS model proceeds as follows:\\
\noindent{\textbf{E-step}}: Given $\btheta = \widehat{\btheta}^{(k)}$, compute $\widehat{z}_{ij}^{(k)}$ for $\ii$ and $\jj$.\\
\noindent{\textbf{CM-step 1}}: Fix $\beta_j^{(k)}$ and update $\alpha_j^{(k)}$ and $p_j^{(k)}$ as
\begin{eqnarray}
\widehat{\alpha}_j^{2(k)} &=& \frac{\sumas \widehat{z}_{ij}^{(k)}a_{y_i}(1,\widehat{\beta}^{(k)}_j)}{\sumas \widehat{z}_{ij}^{(k)}}\, \, {\rm and}\, \,\,
{\widehat{p}}^{(k)}_j  =\frac{\sumas \widehat{z}_{ij}^{(k)}}{n}; \nonumber
\end{eqnarray}
\noindent{\textbf{CM-step 2}}: Fix $\widehat{\p}^{(k+1)}$, $\widehat{\balpha}^{(k+1)}$ and update $\widehat{\bbeta}^{(k+1)}$ using
\begin{eqnarray*}
\widehat{\bbeta}^{(k+1)}  = \arg\max_{\beta} \, \, Q(\widehat{\p}^{(k+1)}, \widehat{\balpha}^{(k+1)}, \bbeta, | \widehat{\btheta}^{(k)}).
\end{eqnarray*}

\subsection{Starting values of the EM algorithm} \label{notesimplementation}
It is well known that mixture models can provide a multimodal log-likelihood function. In this sense, the method of maximum likelihood estimation through the EM algorithm may not give maximum global solutions if the starting values are far from the real parameter values. Thus, the choice of starting values for the EM algorithm in the mixture context plays a key role in parameter estimation, since good  initial  values  for  the  optimization  process hasten or enable the convergence. The adopted starting values are summarized as follows:
\begin{itemize}
	\item Initialize the zero-one membership indicator $\widehat{Z}_j^{(0)} = \{\widehat{z}_{ij}^{(0)} \}_{i=1}^G$ according to a partitional clustering method.
	\item The initial values for mixing probabilities, component locations and scale can be specified as
	$${\widehat{p}}^{(0)}_j = \frac{\sumas \widehat{z}_{ij}^{(0)}}{n}, \quad \widehat{\alpha}_j^{2(0)} = \frac{\sumas \widehat{z}_{ij}^{(0)}a_{y_i}(1,\widehat{\beta}^{(0)}_j)}{\sumas \widehat{z}_{ij}^{(0)}},\quad \widehat{\bbeta}^{(0)}  = \arg\max_{\beta} \, \, Q(\widehat{\p}^{(0)}, \widehat{\balpha}^{(0)}, \bbeta, | \widehat{\btheta}^{(0)}).$$
\end{itemize}
It is well known that the success of EM-type  algorithms largely depends on the initialization values, so they have some limitations. For instance, label switching can get trapped at a local maximum or converge to the boundary of the parameter space. Unfortunately, when the partitions provided  by  partitional clustering methods, for example $k$-means \citep{Basso2010} and  $k$-medoids \citep{Kaufman1990} algorithms, are used to initialize the EM-algorithm, the final estimates may  change every time the algorithm is executed. As the EM-algorithm inherits the random initialization of the partitional clustering algorithm, we recommend using an algorithm that always gives the same initial values.   Following \cite{bagnato2013},  we used the $k$-bumps algorithm; an algorithm that always provides the same final partition.\\ 

\noindent \textbf{Remark 1.}
\begin{enumerate}
\item[(a)] \textit{The $k$-bumps algorithm can be summarized via the following steps:}
	\begin{itemize}
		\item[-] \textit{detect $k$ bumps $B_j$ of the observed data $\y$;}
		\item[-] \textit{find the maximum point $m_j^{(0)}$ for each $B_j$, where $m_j$ is the mode, for $j=1,\ldots,G$;}
		\item[-] \textit{assign each observation to the cluster with the closest maximum point.}
	\end{itemize}
\textit{After obtaining $m_j^{(0)}$, $\beta_j^{(0)}$ can be calculated. Using equation (2.3), the initial value  of the parameter $\alpha_j^{(0)}$ (the shape parameter) can be estimated. More details about the $k$-bumps algorithm are available in \cite{bagnato2013}.}
\item[(b)]  \textit{Only for the $k$-means and $k$-medoids algorithm, for each group $j$, we  utilize the modified moment estimates proposed by \cite{Ng:03}, which is implemented in the function} \verb"mmmeth()" \textit{in the {\bf R package} \textbf{bssn} \citep{rocio2015} to obtain the initial values for $\alpha$ and $\beta$}.
\end{enumerate}

\subsection{Stopping rule} \label{stoppingrule}
To assess the convergence of the EM algorithm,  the two most useful ways of confirming convergence are: (i) the difference between two successive log-likelihood values is less than a user-specified error tolerance; or  (ii) all parameter estimates are changing by a very small degree. As suggested by \cite{Andrews2011}, we adopt the Aitken acceleration-based stopping criterion \citep[][Chap. 4.9]{McLachlan2008}:
\begin{equation}
|\ell^{(k+1)} - \ell^{(k+1)}_{\infty}|<\varepsilon,
\label{stoppingrule}
\end{equation}
 deciding when to terminate computations
where $\ell^{(k+1)}$ is the observed log-likelihood evaluated at $\btheta^{(k+1)}$, $\varepsilon$ is the desired  tolerance ($\varepsilon = 10^{-6}$ will be used to decide when to terminate computations),  and the asymptotic estimate of the log-likelihood at iteration is \citep[see,][]{Bohning1994}
$$\ell^{(k+1)}_{\infty} = \ell^{(k)} +  \big(\ell^{(k+1)} - \ell^{(k)}\big)/(1-c^{(k)}),$$
with $c^{(k)}$ denoting  the Aitken's acceleration at the $k$th iteration, given by $c^{(k)}=\displaystyle\frac{\ell^{(k+1)}-\ell^{(k)}}{\ell^{(k)} - \ell^{(k-1)}}$.  Assuming convergence to the ML   estimator  $\widehat{\btheta}$, also $\ell^{(k+1)}_{\infty}$ is the asymptotic estimate of the log-likelihood at iteration $k+1$. Note that the above procedure is also applicable to the simple case $(G = 1)$ by treating $Z_{ij} = 1$.

\subsection{Standard error approximation}\label{imtheo}

This section presents an outline of the standard errors of the ML estimates from the FM-BS model, which are obtained in a simple way by differentiating the log-likelihood function twice and obtaining the inverse. However, this is  somewhat complex to carry out. By  assuming   the  usual   regularity  conditions, these  guarantee that the ML   estimates  solve the gradient equation  and that the Fisher information exists  according to  \cite{louis1982}. So, the variance estimates  are  obtained   from  the diagonal of the inverse of the empirical information matrix, defined as:
\begin{equation}
\mathbf{I}_o = \sum_{i=1}^{n} s(y_i|\btheta)s^{\top}(y_i|\btheta) - n^{-1}S(\mathbf{y}|\btheta)S^{\top}(\mathbf{y}|\btheta),
\label{IMapprox}
\end{equation}
where $S(\mathbf{y}|\btheta)=\sum_{i=1}^{n} s(y_i|\btheta)$, with  $s(y_i|\btheta) = {\partial \log f(y_i|\btheta)}/{\partial \btheta}$  being  the empirical score
function for the $i$th individual. Substituting the ML estimates $\widehat{\btheta}$ by $\btheta$ in  (\ref{IMapprox}), $\mathbf{I}_{o}$   reduces to
\begin{equation}\label{oim}
\mathbf{I}_{o}=\sumas \widehat{\mathbf{s}}_i
\widehat{\mathbf{s}}^{\top}_i,
\end{equation}
where $\widehat{\mathbf{s}}_i$ is an individual score vector given by
$\widehat{\mathbf{s}}_i = (\widehat{s}_{i,p_1}, \ldots, \widehat{s}_{i,p_{G-1}}, \widehat{s}_{i,\alpha_1},\ldots,\widehat{s}_{i,\alpha_G},\widehat{s}_{i,\beta_1}, \ldots, \widehat{s}_{i,\beta_G})^{\top}.$
Explicit expressions for the elements of $\widehat{\mathbf{s}}_i$ are given by
\begin{eqnarray*}
\widehat{s}_{i,p_j}&=& \frac{f_{T_j}(y_i, \alpha_j,\beta_j)-f_{T_G}(y_i, \alpha_G,\beta_G)}{f(y_i;\mathbf{p},\balpha,\bbeta)},\,\, \, \,  \widehat{s}_{i,\alpha_j}= \frac{p_j D_{\alpha_j}\big(f_{T_j}(y_i;\alpha_j,\beta_j)\big)}{f(y_i;\mathbf{p}, \balpha,\bbeta)}, \\ \widehat{s}_{i,\beta_j} &=&\frac{p_j D_{\beta_j}\big(f_{T_j}(y_i;\alpha_j,\beta_j)\big)}{f(y_i;\mathbf{p}, \balpha,\bbeta)}, \, j=1,\dots,G,
\end{eqnarray*}
where
$D_{\alpha_j}\big(f_{T_j}(y_i;\alpha_j,\beta_j)\big) = {\partial f_{T_j}(y_i;\alpha_j,\beta_j) }/{\partial \alpha_j}$  and $D_{\beta_j}(f_{T_j}\big(y_i;\alpha_j,\beta_j)\big) = {\partial f_{T_j}(y_i;\alpha_j,\beta_j)}/{\partial \beta_j}$.
For simplicity of notation, we omit the index $i$ in the expressions without causing any confusion:
$$
D_{\delta_j}\big(f_{T_j}(y;\alpha_j,\beta_j)\big) = \phi\big(a_y(\alpha_j,\beta_j)\big) \left[\frac{\partial A_y(\alpha_j,\beta_j) }{\partial \delta_j} - \frac{\partial a_y(\alpha_j,\beta_j)}{\partial \delta_j } a_y(\alpha_j,\beta_j)A_y(\alpha_j,\beta_j)  \right],
$$
where $\delta_j = \alpha_j, \beta_j$ {\rm and}
$$\frac{\partial a_y(\alpha_j,\beta_j)}{\partial \alpha_j} = -\frac{1}{\alpha_j}a_y(\alpha_j,\beta_j), \quad \frac{\partial A_y(\alpha_j,\beta_j) }{\partial \alpha_j} = -\frac{1}{\alpha_j}A_y(\alpha_j,\beta_j),$$
$$\frac{\partial a_y(\alpha_j,\beta_j)}{\partial \beta_j} = -\frac{1}{2 \alpha_j \beta_j}\left(\sqrt{\frac{y}{\beta_j}} + \sqrt{\frac{\beta_j}{y}}\right), \quad \frac{\partial A_y(\alpha_j,\beta_j)}{\partial \beta_j} = \frac{y^{-3/2}(\beta_j - y)}{4\alpha_j \beta_j^{1/2}}.$$
Standard errors of $\widehat{\btheta}$ are extracted from the square root of the diagonal elements of the inverse of equation (\ref{oim}). The information based approximation in that equation is asymptotically applicable. 
\begin{figure}[H]
\centering
{\includegraphics[scale=0.35]{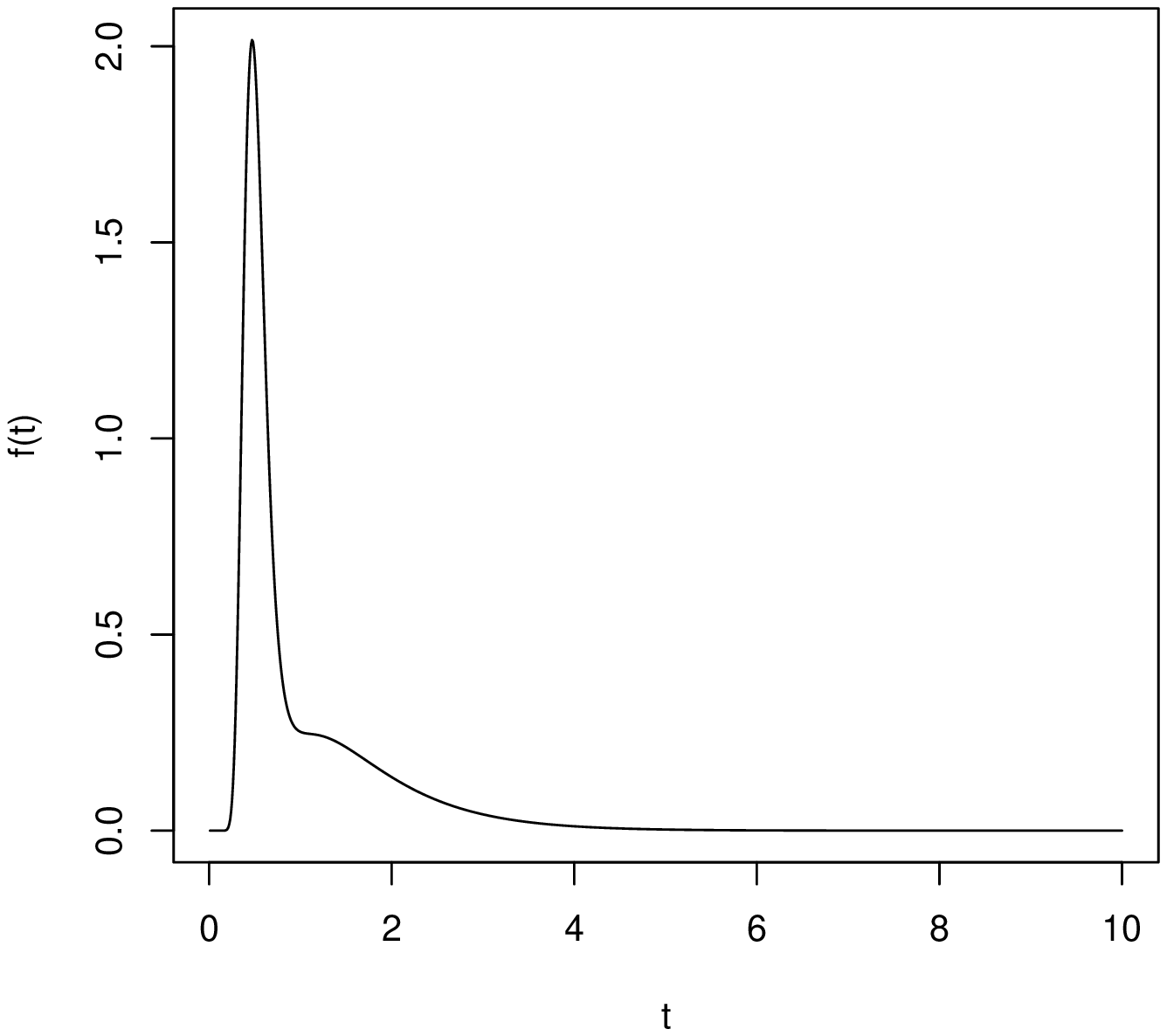}}~{\includegraphics[scale=0.35]{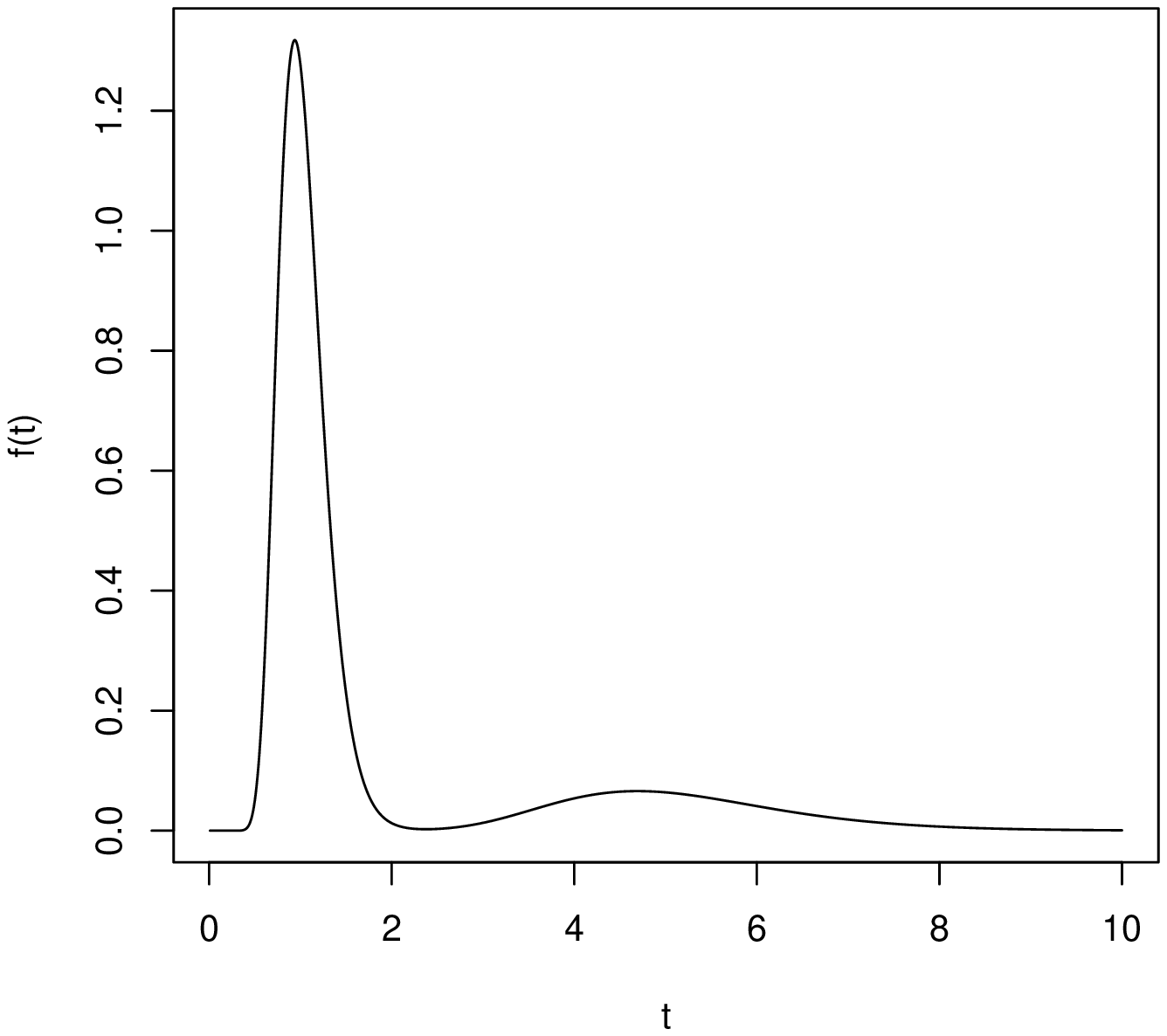}}\\
	\caption{Target mixture densities from which the data were simulated: For (left panel) $\btheta = (0.6,0.25,0.5,0.5,1.5)$ and  for (right panel) $\btheta =(0.8,0.25,0.25,1.0,5.0)$.  \label{fig-apliAIS1}}
\end{figure}
\section{Simulation study}
In this section, we run a simulation study  to evaluate the performance of the ML estimators with different partitional clustering methods for initialization of the  EM algorithm proposed in  Section 4. To perform these numerical experiments, we used the statistical computing environment R \citep{Rmanual.2016}. Specifically, the goals are to evaluate the accuracy of the estimates based on the EM algorithm for the FM-BS models and evaluate the consistency of the standard errors of the estimates (Study 1). Another goal is to show that our proposed EM algorithm estimates  provide good asymptotic properties (Study 2).
In all cases,  the simulation data were artificially generated from the following models with two components ($G=2$):
\begin{equation}
\label{defG=2}
f(y;\mathbf{p}, \balpha,\bbeta) = p_1f_{T_1}(y;\alpha_1,\beta_1) + (1-p_1)f_{T_2}(y;\alpha_2,\beta_2).
\end{equation}
One thousand random samples of sample size $n=75, 100, 500, 1000$ and $5000$ were generated for the  FM-BS model under scenarios of poorly separated  (PS) components  and
well separated (WS) components
\begin{itemize}
	\item[] Scenario 1: $p_1=0.6$, $\alpha_1=0.25$, $\alpha_2=0.50$, $\beta_1=0.5$, $\beta_2=1.5$  (PS components);
	\item[] Scenario 2: $p_1=0.8$, $\alpha_1=0.25$, $\alpha_2=0.25$, $\beta_1=1.00$, $\beta_2=5$ (WS components).
\end{itemize}
Our analyses were performed with a 3.40GHz Intel Core i7 processor with 31.9GB of RAM. The R code for the $k$-bumps algorithm is available in \cite{bagnato2013}, but we adapted it to our context, while for the $k$-medoids algorithm we used the {\bf R package} \texttt{ClusterR}. Figure \ref{fig-apliAIS1} (left panel and right panel) shows the mixture densities generated using Scenarios 1 and 2, respectively.
\begin{table}[t!]
	\begin{center}
		\caption{Comparison of average CPU time (in seconds) for FM-BS model under various sample sizes. The log-likelihood values are in parentheses.}\label{CPUtimetable}
		\vskip 3mm
		\scriptsize{\begin{tabular}{c@{\hskip 0.25in}c@{\hskip 0.25in}c@{\hskip 0.25in}c@{\hskip 0.25in}|c@{\hskip 0.25in}c@{\hskip 0.25in}c@{\hskip 0.25in}c@{\hskip 0.25in}c}
				\hline
				n    & $k$-means   &  $k$-medoids &  $k$-bumps & $k$-means   &  $k$-medoids &  $k$-bumps\\ \hline
				& \multicolumn{3}{c}{$\btheta=(0.6,0.25,0.5,0.5,1.5)$}  &  \multicolumn{3}{c}{$\btheta=(0.8,0.25,0.25,1.0,5.0)$} \\ \hline
				75   &  0.0819   & 0.0606     & 0.9300      &  0.0173   & 0.0209     & 0.8203\\
				& (-52.269) & (-51.8858) & (-39.9022)  & (-94.3336)& (-94.0662) & (-61.3887) \\
				100  &  0.0860   & 0.0700     & 1.0397      &  0.0159   & 0.0261     & 0.8106\\
				& (-72.299) & (-71.5438) & (-53.9624)  &(-125.8039)& (-130.7341)& (-81.1129) \\
				500  &  0.2652   & 0.2229     & 2.1783     &  0.0470   & 0.1017     & 1.4188  \\
				&(-366.1519)& (-358.7772)& (-274.4587) &(-623.2053)& (-645.317) & (-420.7042)\\
				1000 &  0.493    & 0.5403     & 3.4901      &  0.0891   & 0.2829     & 2.1967 \\
				&(-740.8493)& (-698.7736)& (-555.5299) &(-1252.646)& (-1263.654)& (-840.9494)\\
				5000 &  2.7449   & 5.3584     & 15.3325     &  0.3996   & 4.1035     & 8.7529 \\
				&(-3673.296)& (-3531.957)& (-2777.863) &(-6317.537)& (-6327.193)& (-4228.801)\\		
				\hline
\end{tabular}}
\end{center}
\end{table}
This is the first step to ensure that the estimation procedure works satisfactorily. Table \ref{CPUtimetable} shows a comparison of the average CPU times for the FM-BS model under different sample sizes and considering $k$-means, $k$-medoids and $k$-bumps algorithms in  order to obtain the initial values.

\subsection{Study 1: Parameter recovery and consistency of the standard errors of the estimates}
In this section the goal is to show the ML estimation of $\btheta=(p_1,\alpha_1,\alpha_2, \beta_1,\beta_2)^{\top}$ through  the EM algorithm   considering  the stopping criterion   given in  (\ref{stoppingrule}).  The mean values of the estimates across the 1000 Monte Carlo samples are computed, and the results are presented in Tables \ref{sim1escenario1} and \ref{sim1escenario2}, where the ML estimates of the parameters are close to the true values and become closer as the sample size increases. These tables show  the information matrix standard error (IM SE), the Monte Carlo standard deviation (MC Sd) and  the coverage probability,  reported   to  examine the consistency of the approximation method given in Subsection \ref{imtheo} for the standard errors (SE) of the ML estimates of parameters $\btheta$.  From  these tables, we note that the standard errors provide relatively close results (MC Sd and IM SE), which indicates that the proposed asymptotic approximation for the variances of the ML estimates is reliable. Moreover, the tables report the coverage probability (COV) and the percentage of coverage of the resulting 95\% confidence intervals (CI)  assuming asymptotic normality. The COV is defined as $COV(\widehat{\theta})=(1/m)\sum_{j=1}^{m} I (\theta \in [\widehat{\theta}_L,\widehat{\theta}_U])$, where $I$ is the indicator function such that $\theta$ lies in the interval $[\widehat{\theta}_L,\widehat{\theta}_U]$, with $\widehat{\theta}_L$ and $\widehat{\theta}_U$ being the estimated lower and upper bounds of the 95\% CI, respectively. The COV for the parameters is quite stable for both scenarios, which indicates that the proposed asymptotic approximation for the variance estimates of the ML estimates is reliable. This can also be seen in the COV of parameters, since in general a confidence interval above 95\% coverage is maintained for each parameter.
\begin{table}[H]
	\begin{center}
		\caption{Study 1: Mean fit of the FM-BS model based on different initialization algorithms and samples sizes when  $\btheta=(0.6,0.25,0.5,0.5,1.5)$.}\label{sim1escenario1}
		\vskip 3mm
		\tiny{\begin{tabular}{c@{\hskip 0.12in}c@{\hskip 0.12in}|c@{\hskip 0.12in}c@{\hskip 0.12in}c@{\hskip 0.12in}c@{\hskip 0.12in}c@{\hskip 0.12in}|c@{\hskip 0.12in}c@{\hskip 0.12in}c@{\hskip 0.12in}c@{\hskip 0.12in}c@{\hskip 0.12in}|c@{\hskip 0.12in}c@{\hskip 0.12in}c@{\hskip 0.12in}c@{\hskip 0.12in}c@{\hskip 0.12in}c@{\hskip 0.12in}c}
		 	\hline  \rule{0pt}{3ex}
				n  &Measure& $\widehat{p}_1$ & $\widehat{\alpha}_1$  &  $\widehat{\alpha}_2$ &  $\widehat{\beta}_1$ &  $\widehat{\beta}_2$ & $\widehat{p}_1$ & $\widehat{\alpha}_1$  &  $\widehat{\alpha}_2$ &  $\widehat{\beta}_1$ &  $\widehat{\beta}_2$ & $\widehat{p}_1$ & $\widehat{\alpha}_1$  &  $\widehat{\alpha}_2$ &  $\widehat{\beta}_1$ &  $\widehat{\beta}_2$ \\[0.1cm] \hline \rule{0pt}{3ex}
				&       &            \multicolumn{5}{c}{$k$-means}   &            \multicolumn{5}{c}{$k$-medoids}   &            \multicolumn{5}{c}{$k$-bumps}\\[0.1cm] \hline \rule{0pt}{3ex}
				75   & Mean  & 0.6253 & 0.2592 & 0.4380  & 0.5154  & 1.5566 & 0.6282 & 0.2601 & 0.4433  & 0.5167  & 1.5653 & 0.6017 & 0.2456 & 0.4613  & 0.5051  & 1.5416\\
				& IM SE & 0.0940 & 0.0419 & 0.1280  & 0.0276  & 0.2602 & 0.0965 & 0.0427 & 0.1472  & 0.0290  & 0.2859 & 0.1054 & 0.0436 & 0.1410  & 0.0264  & 0.2273\\  				 				
				& MC Sd & 0.0676 & 0.0740 & 0.1024  & 0.0626  & 0.2429 & 0.0724 & 0.0702 & 0.0994  & 0.0623  & 0.2374 & 0.0925 & 0.0443 & 0.1058  & 0.0295  & 0.2279\\  				
				& COV   & 92.8\% & 93.6\% & 84.6\%  & 92.8\%  & 89.6\% & 91.8\% & 93.2\% & 87.6\%  & 93.0\%  & 91.2\% & 90.6\% & 90.6\% & 86.8\%  & 90.0\%  & 89.6\%\\
				100  & Mean  & 0.6232 & 0.2562 & 0.4512  & 0.5088  & 1.5730 & 0.6257 & 0.2600 & 0.4534  & 0.5126  & 1.5670 & 0.6009 & 0.2452 & 0.4703  & 0.5025  & 1.5320\\				 				
				& IM SE & 0.0721 & 0.0532 & 0.0823  & 0.0432  & 0.2035 & 0.1058 & 0.0378 & 0.1257  & 0.0253  & 0.2553 & 0.0899 & 0.0366 & 0.1206  & 0.0227  & 0.2402\\
				& MC Sd & 0.0656 & 0.0485 & 0.0920  & 0.0339  & 0.1900 & 0.0633 & 0.0582 & 0.0919  & 0.0533  & 0.2130 & 0.0765 & 0.0353 & 0.0939  & 0.0228  & 0.2049\\ 				
				& COV   & 92.2\% & 93.4\% & 86.2\%  & 95.4\%  & 90.6\% & 92.2\% & 95.6\% & 88.2\%  & 94.6\%  & 92.0\% & 93.0\% & 93.0\% & 86.2\%  & 94.0\%  & 89.2\%\\ 				 				 				
				500  & Mean  & 0.6063 & 0.2515 & 0.4880  & 0.5019  & 1.5203 & 0.6051 & 0.2513 & 0.4876  & 0.5017  & 1.5237 & 0.5995 & 0.2497 & 0.4968  & 0.5013  & 1.5023 \\
				& IM SE & 0.0386 & 0.0156 & 0.0515  & 0.0106  & 0.1116 & 0.0381 & 0.0156 & 0.0509  & 0.0106  & 0.1102 & 0.0403 & 0.0158 & 0.0536  & 0.0107  & 0.1145\\ 				
				& MC Sd & 0.0366 & 0.0165 & 0.0469  & 0.0105  & 0.1060 & 0.0345 & 0.0158 & 0.0448  & 0.0102  & 0.1022 & 0.0421 & 0.0166 & 0.0516  & 0.0111  & 0.1083 \\ 				 				
				& COV   & 92.2\% & 92.0\% & 91.2\%  & 95.4\%  & 93.2\% & 96.2\% & 94.2\% & 91.4\%  & 95.4\%  & 94.4\% & 92.4\% & 93.4\% & 91.2\%  & 95.2\%  & 93.6\%\\ 				 				 				
				1000 & Mean  & 0.6057 & 0.2524 & 0.4900  & 0.5021  & 1.5204 & 0.6083 & 0.2527 & 0.4850  & 0.5028  & 1.5315 & 0.5999 & 0.2494 & 0.4990  & 0.5006  & 1.5010\\
				& IM SE & 0.0269 & 0.0110 & 0.0361  & 0.0076  & 0.0787 & 0.0262 & 0.0109 & 0.0352  & 0.0076  & 0.0770 & 0.0283 & 0.0110 & 0.0376  & 0.0076  & 0.0813\\                       			
				& MC Sd & 0.0265 & 0.0110 & 0.0345  & 0.0076  & 0.0743 & 0.0261 & 0.0110 & 0.0335  & 0.0074  & 0.0717 & 0.0284 & 0.0109 & 0.0374  & 0.0077  & 0.0846 \\ 				
				& COV   & 92.2\% & 94.8\% & 92.2\%  & 95.0\%  & 94.0\% & 92.0\% & 94.0\% & 90.4\%  & 94.8\%  & 92.8\% & 94.2\% & 96.0\% & 94.2\%  & 94.4\%  & 94.4\% \\ 				 				 				 				
				5000 & Mean  & 0.6042 & 0.2511 & 0.4946  & 0.5011  & 1.5121 & 0.6058 & 0.2516 & 0.4914  & 0.5015  & 1.5219 & 0.6016 & 0.2507 & 0.4967  & 0.5006  & 1.5058\\
				& IM SE & 0.0120 & 0.0048 & 0.0161  & 0.0034  & 0.0354 & 0.0117 & 0.0048 & 0.0158  & 0.0034  & 0.0348 & 0.0122 & 0.0048 & 0.0164  & 0.0034  & 0.0357\\
				& MC Sd & 0.0125 & 0.0048 & 0.0168  & 0.0035  & 0.0370 & 0.0120 & 0.0050 & 0.0157  & 0.0036  & 0.0347 & 0.0119 & 0.0049 & 0.0158  & 0.0033  & 0.0336\\ 				
				& COV   & 90.8\% & 95.2\% & 91.4\%  & 93.4\%  & 92.2\% & 91.6\% & 92.8\% & 89.6\%  & 91.6\%  & 89.6\% & 94.8\% & 94.2\% & 93.2\%  & 95.2\%  & 95.0\%\\ 				 				 				 				 				 				  				
				\hline  \rule{0pt}{3ex}
		\end{tabular}}
	\end{center}
\end{table}
\begin{table}[H]
	\begin{center}
		\caption{Study 1: Mean fit of the FM-BS model based on different initialization algorithms and samples sizes when $\btheta=(0.8,0.25,0.25,1.0,5.0)$.}\label{sim1escenario2}
		\vskip 3mm
	\tiny{\begin{tabular}{c@{\hskip 0.12in}c@{\hskip 0.12in}|c@{\hskip 0.12in}c@{\hskip 0.12in}c@{\hskip 0.12in}c@{\hskip 0.12in}c@{\hskip 0.12in}|c@{\hskip 0.12in}c@{\hskip 0.12in}c@{\hskip 0.12in}c@{\hskip 0.12in}c@{\hskip 0.12in}|c@{\hskip 0.12in}c@{\hskip 0.12in}c@{\hskip 0.12in}c@{\hskip 0.12in}c@{\hskip 0.12in}c@{\hskip 0.12in}c}
		\hline  \rule{0pt}{3ex}
		n  &Measure& $\widehat{p}_1$ & $\widehat{\alpha}_1$  &  $\widehat{\alpha}_2$ &  $\widehat{\beta}_1$ &  $\widehat{\beta}_2$ & $\widehat{p}_1$ & $\widehat{\alpha}_1$  &  $\widehat{\alpha}_2$ &  $\widehat{\beta}_1$ &  $\widehat{\beta}_2$ & $\widehat{p}_1$ & $\widehat{\alpha}_1$  &  $\widehat{\alpha}_2$ &  $\widehat{\beta}_1$ &  $\widehat{\beta}_2$ \\[0.1cm] \hline \rule{0pt}{3ex}
		&       &            \multicolumn{5}{c}{$k$-means}   &            \multicolumn{5}{c}{$k$-medoids}   &            \multicolumn{5}{c}{$k$-bumps}\\[0.1cm] \hline \rule{0pt}{3ex}
				75   & Mean  & 0.7978 & 0.2475 & 0.2404  & 1.0005  & 5.0332 & 0.7990 & 0.2468 & 0.2338  & 1.0007  & 5.0189 & 0.7979 & 0.2461 & 0.2384  & 1.0003  & 5.0069 \\
				& IM SE & 0.0460 & 0.0247 & 0.0594  & 0.0325  & 0.3454 & 0.0458 & 0.0246 & 0.0581  & 0.0326  & 0.3425 & 0.0466 & 0.0232 & 0.0528  & 0.0328  & 0.3224 \\  				 				
				& MC Sd & 0.0451 & 0.0226 & 0.0490  & 0.0314  & 0.3717 & 0.0469 & 0.0231 & 0.0477  & 0.0323  & 0.3438 & 0.0324 & 0.0332 & 0.0603  & 0.0292  & 0.2273 \\  				
				& COV   & 94.0\% & 93.4\% & 92.8\%  & 96.4\%  & 92.2\% & 93.0\% & 92.4\% & 89.4\%  & 95.0\%  & 92.6\% & 92.4\% & 93.4\% & 91.0\%  & 94.6\%  & 93.4\% \\
				100  & Mean  & 0.7982 & 0.2460 & 0.2401  & 0.9997  & 5.0141 & 0.7994 & 0.2482 & 0.2399  & 0.9994  & 5.0063 & 0.7981 & 0.2473 & 0.2416  & 1.0013  & 5.0098 \\				 				
				& IM SE & 0.0398 & 0.0208 & 0.0480  & 0.0279  & 0.2904 & 0.0398 & 0.0212 & 0.0480  & 0.0281  & 0.2929 & 0.0899 & 0.0366 & 0.1206  & 0.0227  & 0.2402 \\
				& MC Sd & 0.0414 & 0.0203 & 0.0431  & 0.0280  & 0.2985 & 0.0404 & 0.0197 & 0.0407  & 0.0280  & 0.2929 & 0.0791 & 0.0305 & 0.0899  & 0.0226  & 0.2053 \\ 				
				& COV   & 94.4\% & 93.4\% & 91.2\%  & 94.4\%  & 91.4\% & 93.4\% & 96.2\% & 93.2\%  & 95.2\%  & 93.2\% & 94.2\% & 93.2\% & 91.8\%  & 94.4\%  & 93.2\% \\ 				 				 				
				500  & Mean  & 0.8000 & 0.2492 & 0.2481  & 0.9991  & 4.9953 & 0.7990 & 0.2493 & 0.2482  & 1.0005  & 5.0029 & 0.7991 & 0.2494 & 0.2491  & 1.0012  & 5.0024 \\
				& IM SE & 0.0179 & 0.0090 & 0.0189  & 0.0124  & 0.1260 & 0.0179 & 0.0090 & 0.0189  & 0.0125  & 0.1258 & 0.0179 & 0.0090 & 0.0190  & 0.0125  & 0.1265 \\ 				
				& MC Sd & 0.0175 & 0.0089 & 0.0186  & 0.0121  & 0.1228 & 0.0171 & 0.0089 & 0.0187  & 0.0127  & 0.1244 & 0.0185 & 0.0092 & 0.0176  & 0.0125  & 0.1259 \\ 				 				
				& COV   & 94.8\% & 95.2\% & 94.0\%  & 96.4\%  & 96.2\% & 96.6\% & 94.8\% & 94.0\%  & 93.0\%  & 95.8\% & 94.0\% & 94.2\% & 96.2\%  & 95.2\%  & 96.0\% \\ 				 				 				
				1000 & Mean  & 0.7986 & 0.2497 & 0.2485  & 0.9994  & 4.9992 & 0.7995 & 0.2498 & 0.2496  & 1.0001  & 5.0054 & 0.7980 & 0.2496 & 0.2494  & 1.0002  & 4.9980 \\
				& IM SE & 0.0127 & 0.0064 & 0.0131  & 0.0088  & 0.0882 & 0.0127 & 0.0064 & 0.0132  & 0.0088  & 0.0891 & 0.0127 & 0.0064 & 0.0131  & 0.0088  & 0.0885 \\                       			
				& MC Sd & 0.0113 & 0.0061 & 0.0131  & 0.0087  & 0.0922 & 0.0130 & 0.0061 & 0.0128  & 0.0092  & 0.0938 & 0.0128 & 0.0063 & 0.0127  & 0.0089  & 0.0846 \\ 				
				& COV   & 97.4\% & 95.4\% & 93.0\%  & 95.2\%  & 91.6\% & 94.8\% & 95.0\% & 95.2\%  & 93.6\%  & 93.2\% & 94.8\% & 94.6\% & 93.4\%  & 94.8\%  & 96.0\%  \\ 				 				 				 				
				5000 & Mean  & 0.8001 & 0.2499 & 0.2501  & 1.0000  & 4.9983 & 0.8004 & 0.2500 & 0.2494  & 1.0001  & 4.9983 & 0.8003 & 0.2500 & 0.2499  & 1.0003  & 5.0018 \\
				& IM SE & 0.0057 & 0.0028 & 0.0058  & 0.0039  & 0.0396 & 0.0057 & 0.0028 & 0.0058  & 0.0039  & 0.0396 & 0.0057 & 0.0028 & 0.0058  & 0.0039  & 0.0396 \\
				& MC Sd & 0.0057 & 0.0028 & 0.0057  & 0.0037  & 0.0393 & 0.0058 & 0.0028 & 0.0056  & 0.0042  & 0.0374 & 0.0056 & 0.0029 & 0.0057  & 0.0038  & 0.0386 \\ 				
				& COV   & 95.8\% & 94.4\% & 96.4\%  & 95.6\%  & 95.2\% & 94.4\% & 97.0\% & 94.2\%  & 92.6\%  & 95.2\% & 95.6\% & 93.4\% & 95.6\%  & 95.6\%  & 94.8\% \\ 			 				 				 				 				 				  				
				\hline \rule{0pt}{3ex}
		\end{tabular}}
	\end{center}
\end{table}
Note that when the samples are poorly separated, the estimates using the $k$-bumps algorithm obtain better estimates for small sample sizes compared with the $k$-means and $k$-medoids algorithms. When the samples are well separated the estimates are good regardless of sample size using all algorithms. In general, the results suggest that the proposed FM-BS model produces satisfactory estimates, as expected.
\begin{figure}[t!]
	\centering
	\includegraphics[scale=0.35]{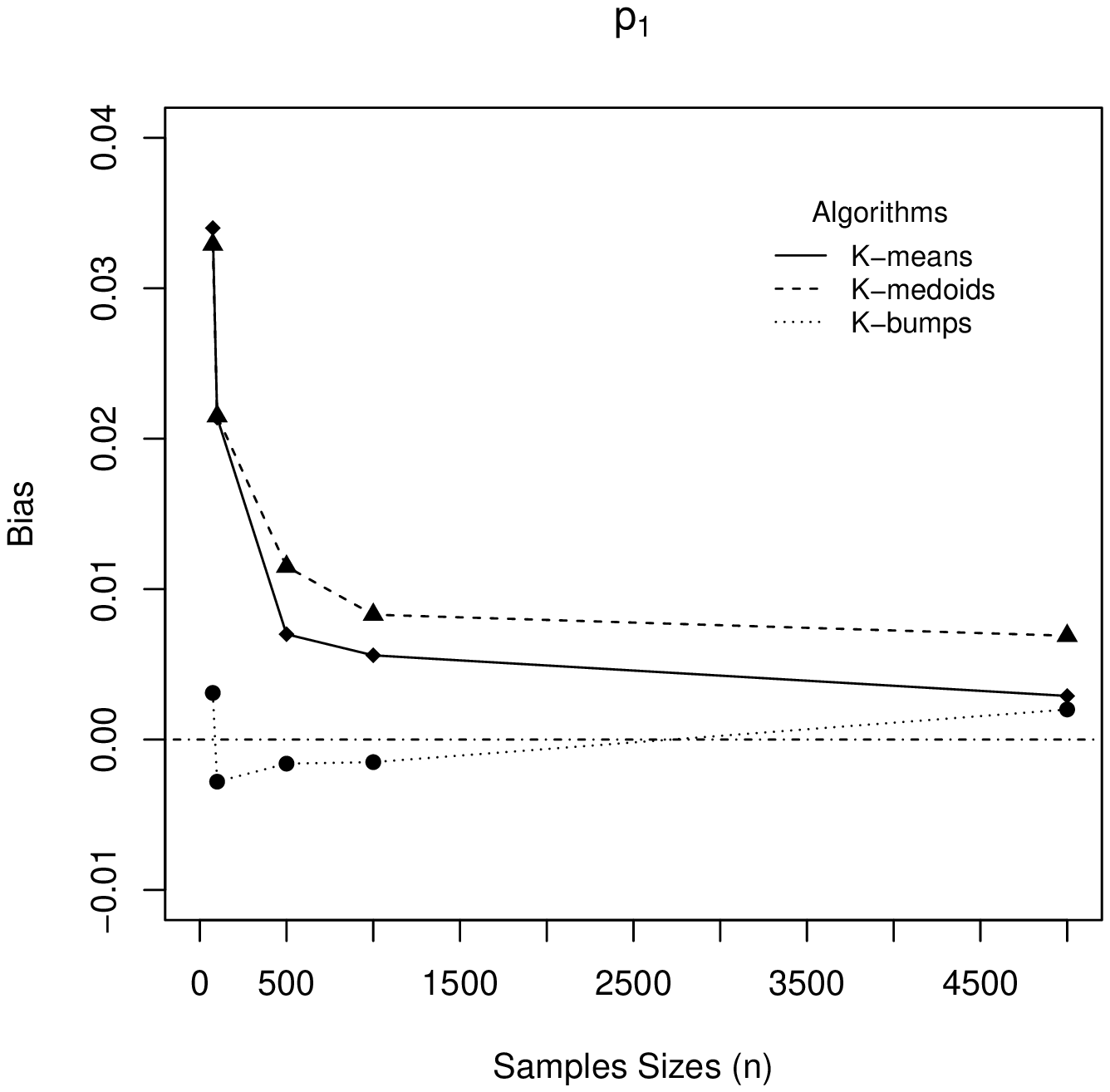}~\includegraphics[scale=0.35]{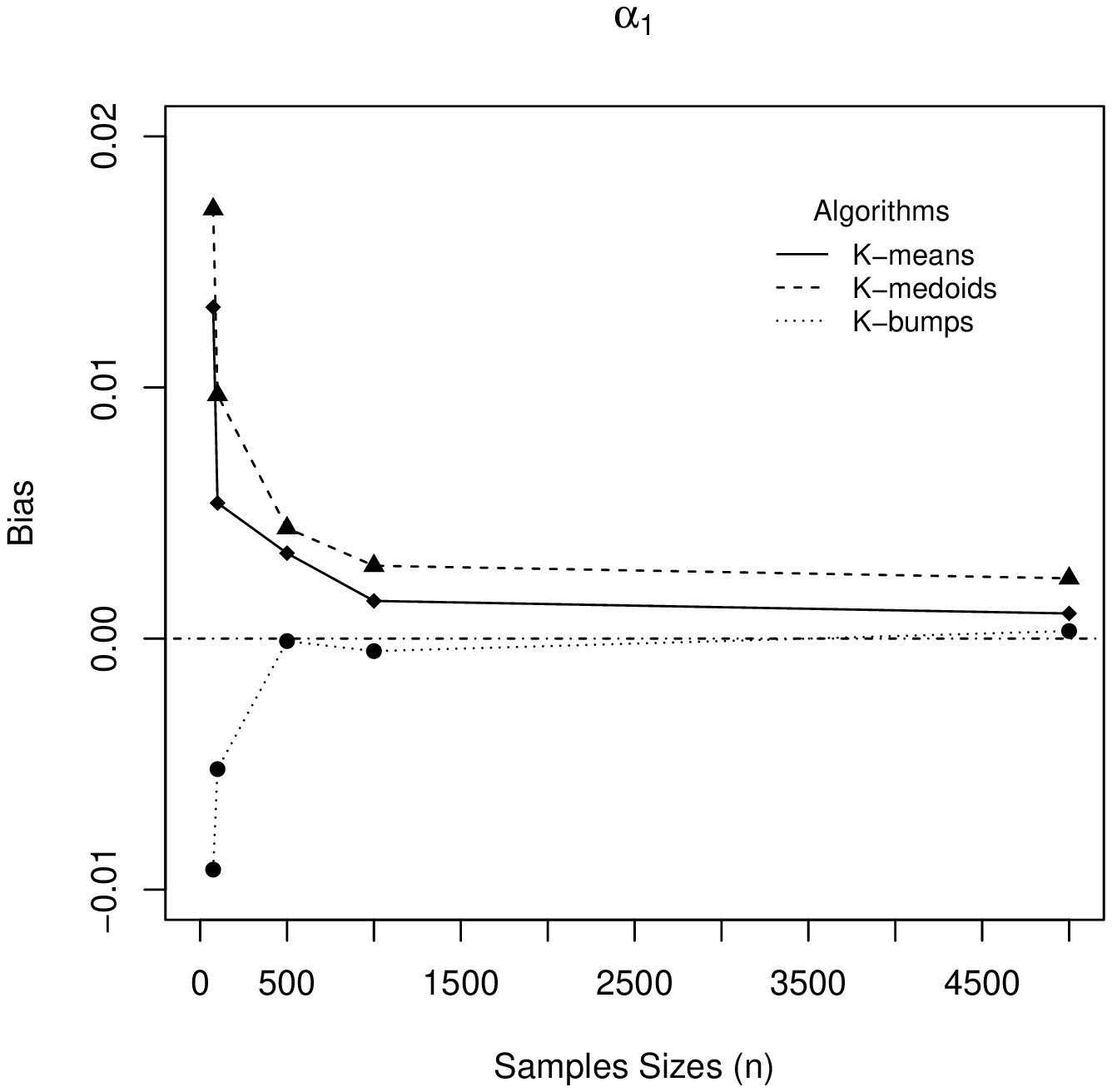}~\includegraphics[scale=0.35]{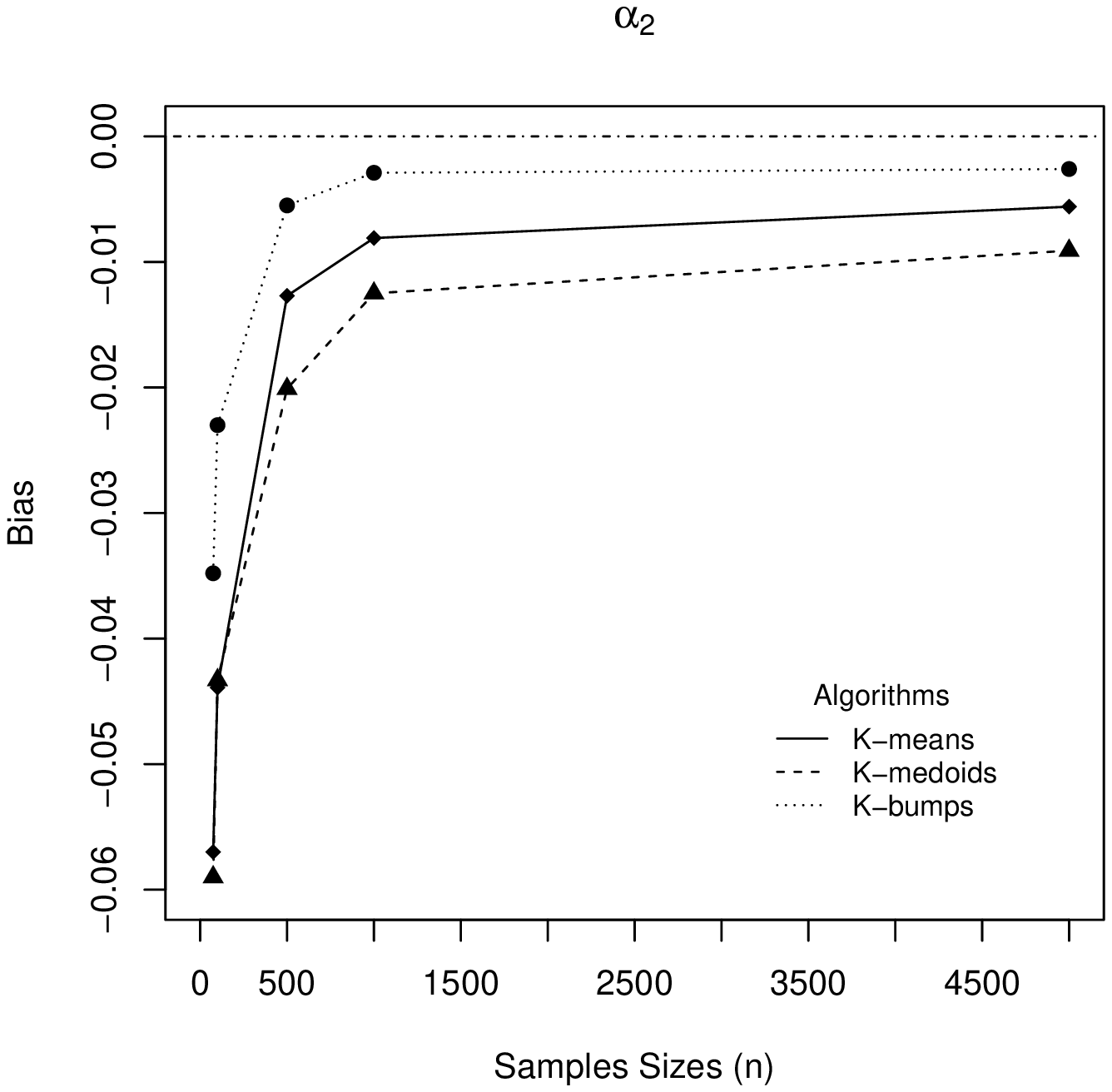}\\
	\includegraphics[scale=0.35]{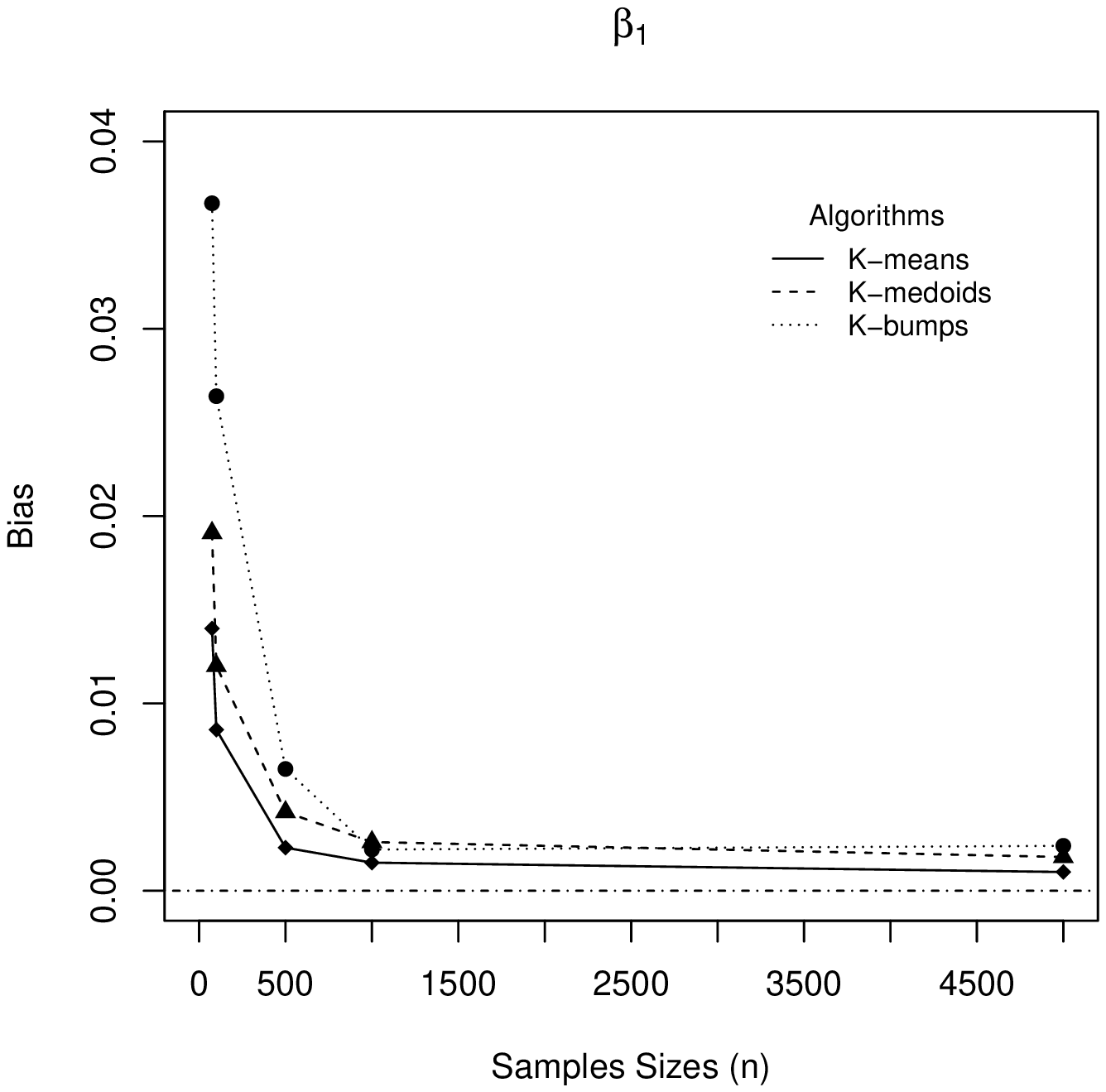}~\includegraphics[scale=0.35]{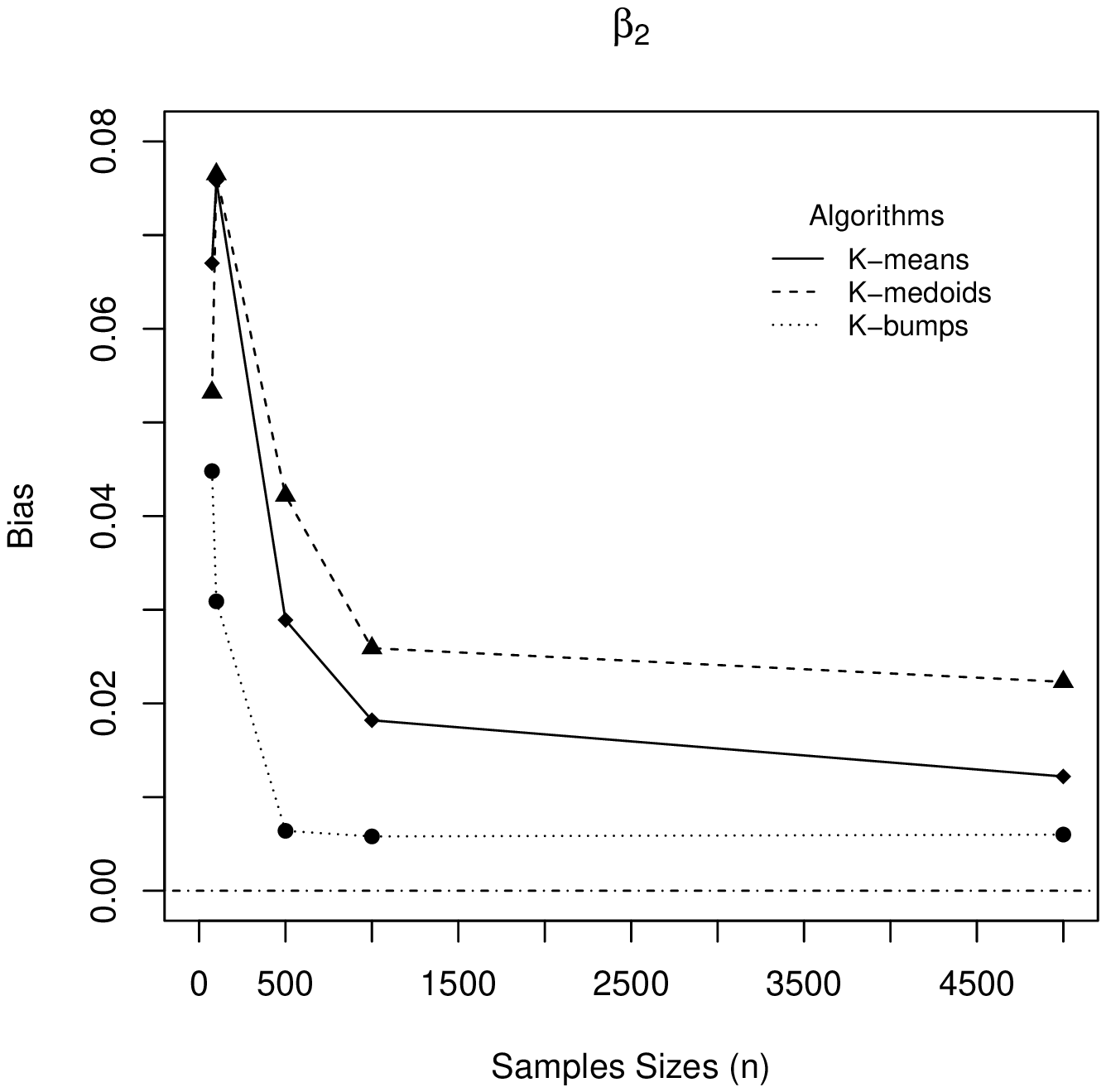}
	\caption{Study 2. (Scenario 1) Average bias of parameter estimates in the FM-BS model with different algorithms.\label{biasRMSE1}}
\end{figure}

\begin{figure}[!t]
	\centering
	\includegraphics[scale=0.35]{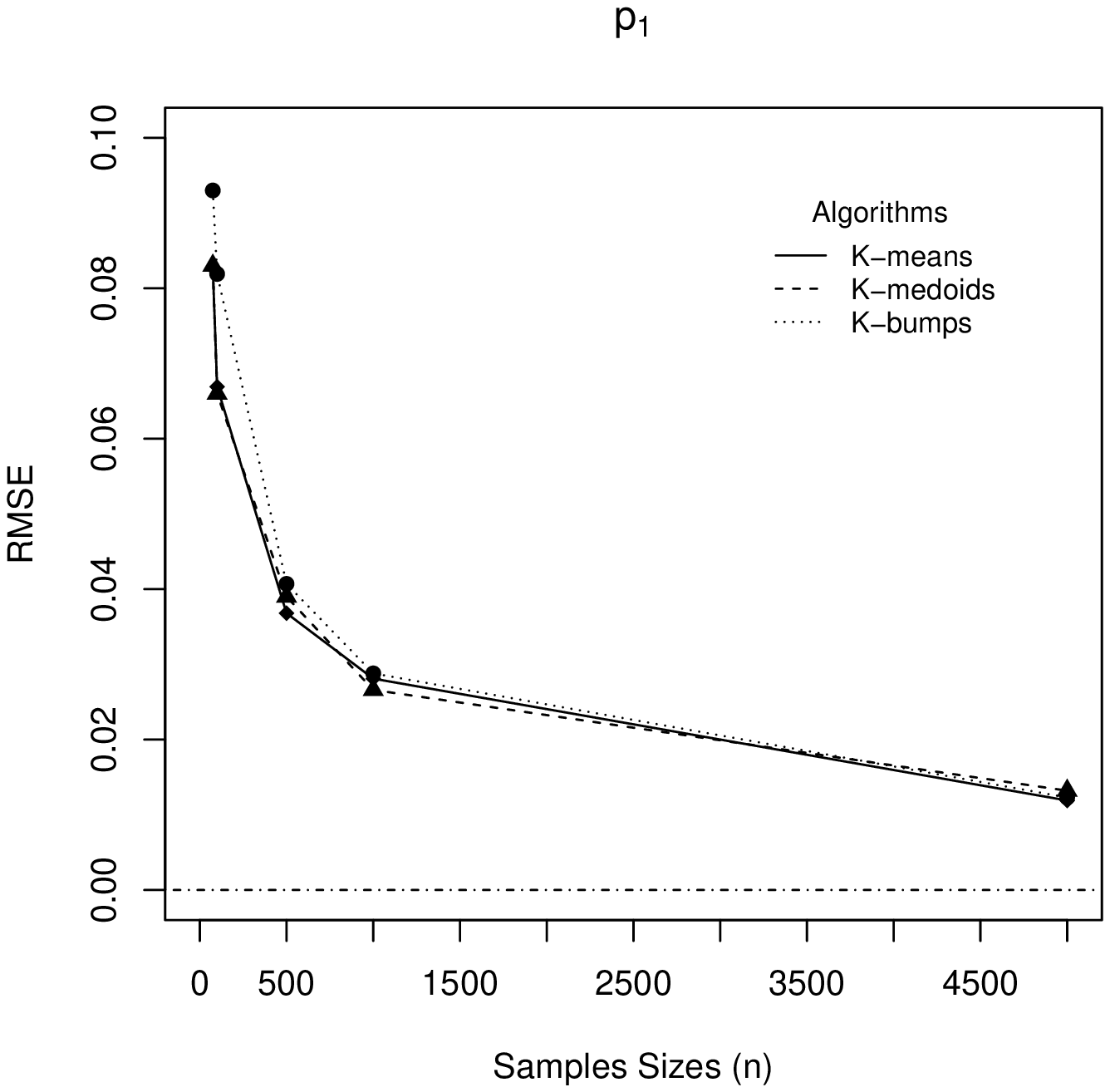}~\includegraphics[scale=0.35]{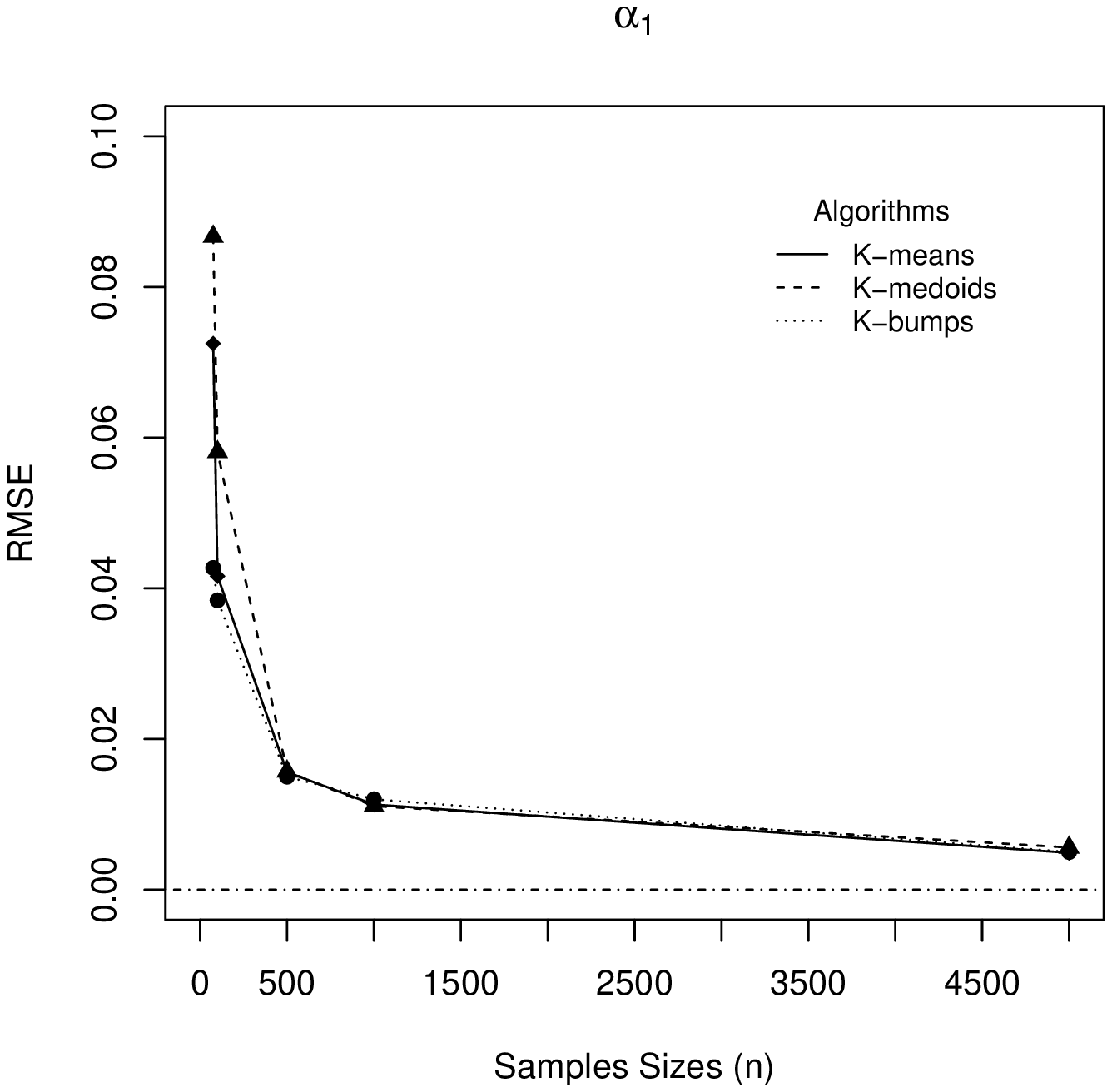}~\includegraphics[scale=0.35]{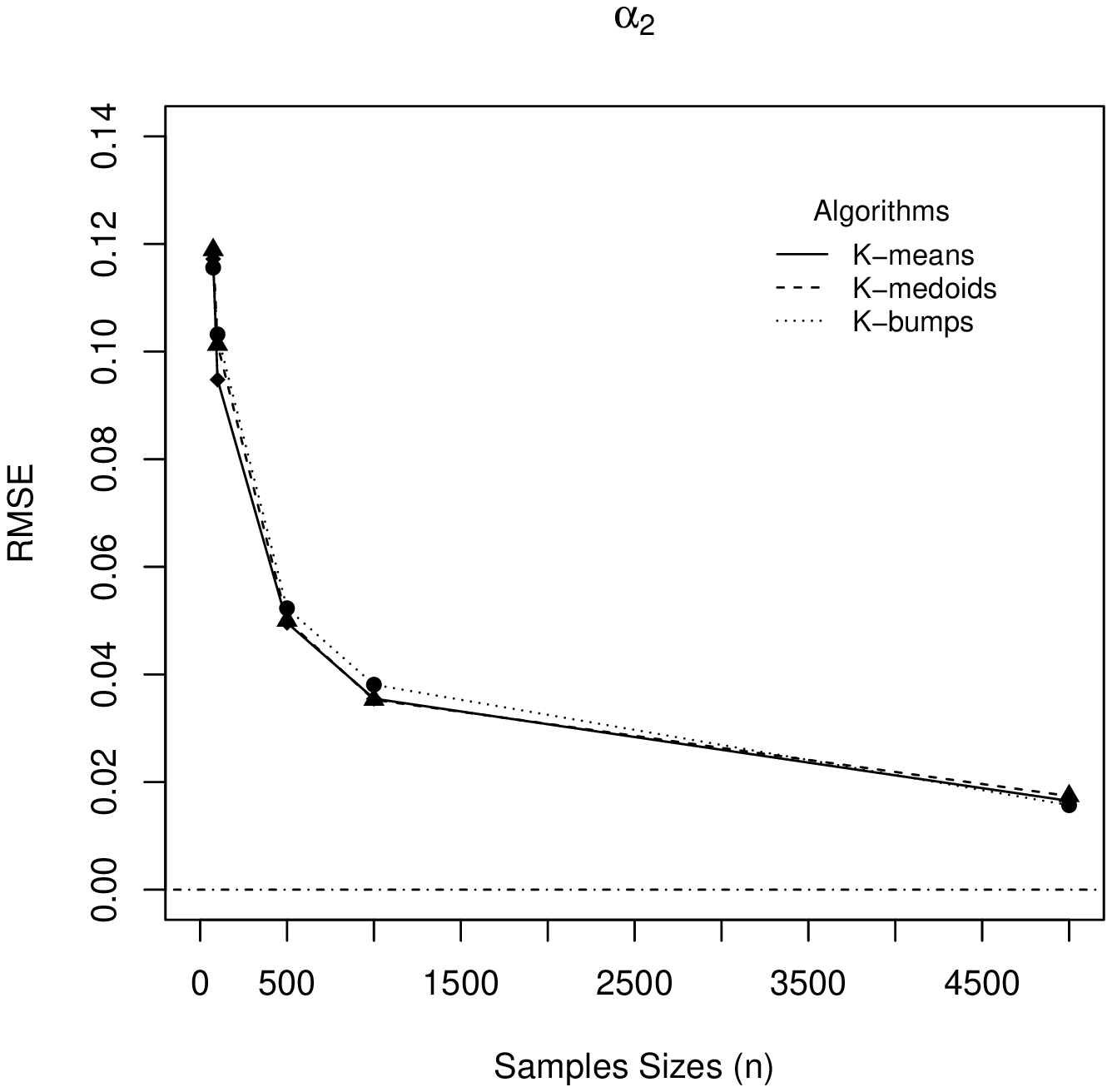}\\
	\includegraphics[scale=0.35]{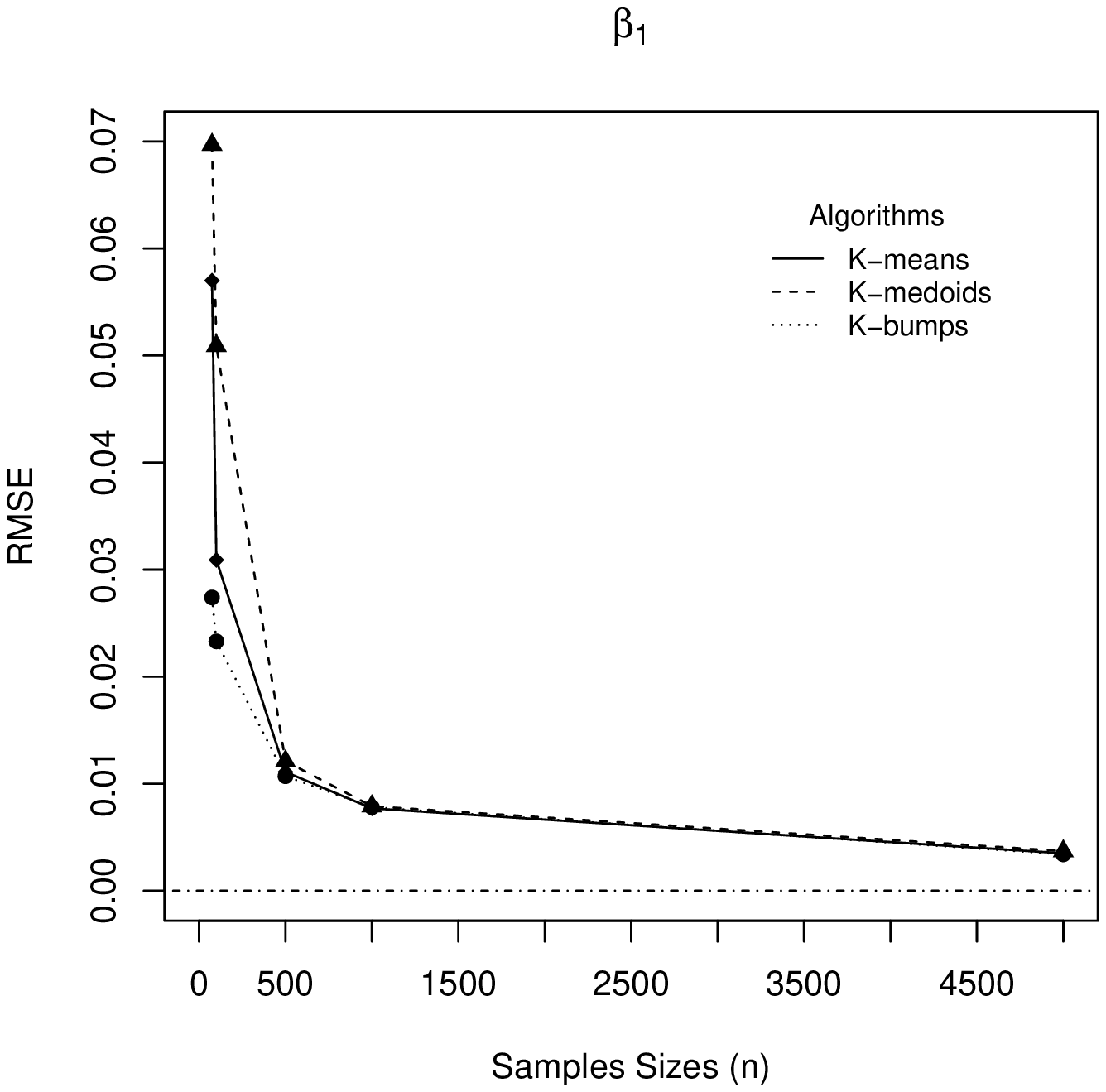}~\includegraphics[scale=0.35]{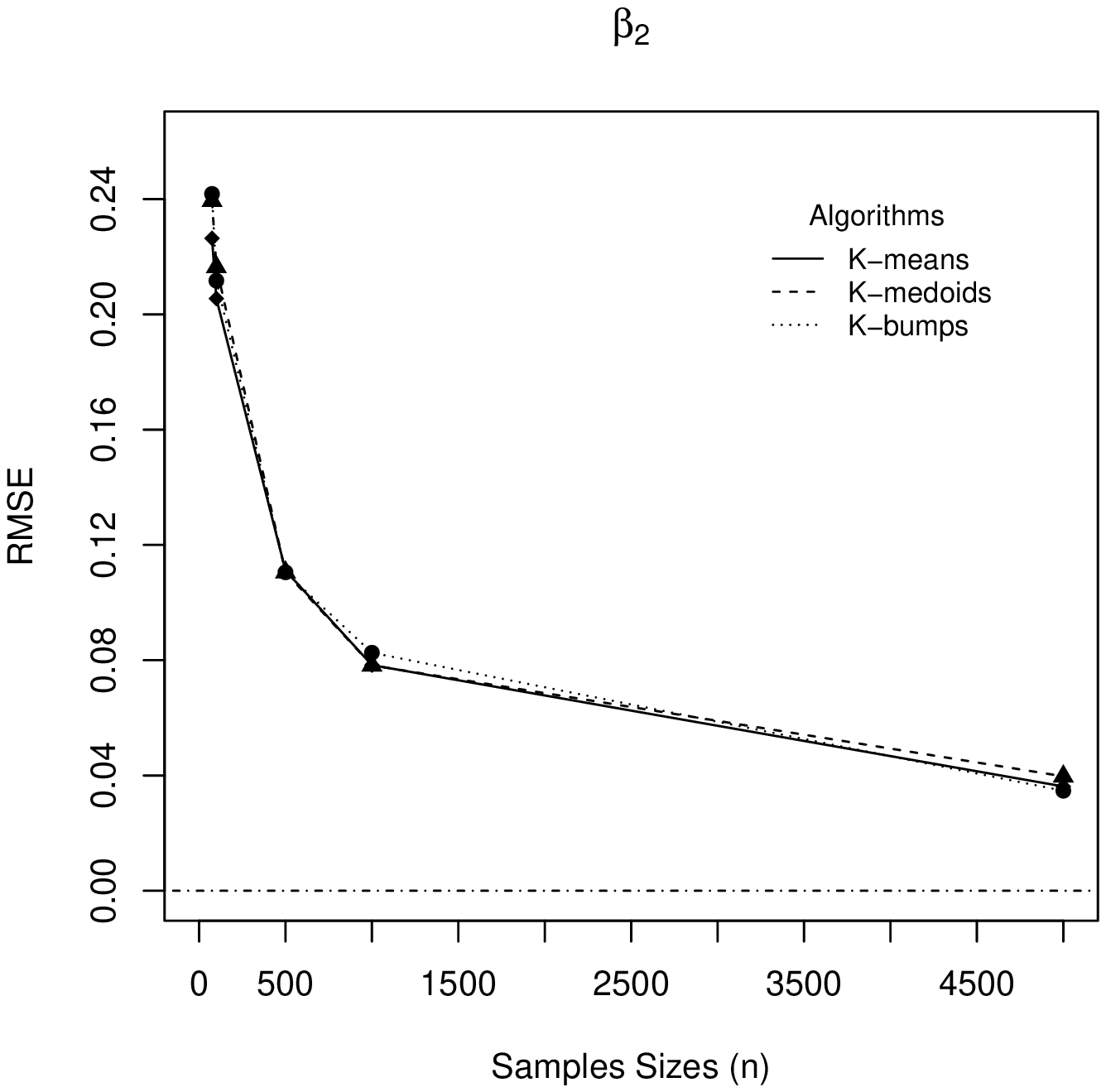}
	\caption{Study 2. (Scenario 1) Average RMSE of parameter estimates in the FM-BS model with different algorithms.\label{biasRMSE2}}
\end{figure}
\subsection{Study 2: Asymptotic properties of the EM estimates }
In this section the goal is to show the asymptotic properties of the EM estimates. Our strategy is to generate artificial samples for the model  in (\ref{defG=2}). Various settings of sample sizes were chosen ($n=75, 100,500,1000$ and $5000$). The true values of the parameters in this study are  as   in Study 1 (Scenario 1). For each combination of parameters and sample sizes, we generated 1000 random samples from the FM-BS model. To evaluate the estimates obtained by the proposed EM algorithm, we compared the bias (Bias) and the root mean square error (RMSE) for each parameter over the 1000 replicates. They are defined as
$$\text{Bias}(\theta_i)=\frac{1}{1000}\sum_{j=1}^{1000}(\widehat{\theta}^{(j)}_i - \theta_i) \quad \text{and}\quad \text{RMSE}(\theta_i)=\sqrt{\frac{1}{1000}\sum_{j=1}^{1000}(\widehat{\theta}^{(j)}_i - \theta_i)^2},$$
where $\widehat{\theta}_i^{(j)}$ is the estimates of $\theta_i$ from the jth sample. The results for $p_1$, $\alpha_1$, $\alpha_2$, $\beta_1$ and $\beta_2$ are shown in Figures \ref{biasRMSE1} and \ref{biasRMSE2}. As a general rule, we can say that Bias and RMSE tend to approach zero when the sample size increases, which indicates that the estimates based on the proposed EM algorithm under the FM-BS model provide good asymptotic properties.

\section{Real applications}
In order to illustrate the proposed method, we consider two real datasets, and  adopt the Bayesian information criterion \citep[BIC,][]{Schwarz78} and Akaike information criterion (AIC) to select the number of components in mixture models ($\text{AIC} = -2\ell(\widehat{\btheta})+ 2\rho$ and $\text{BIC} = -2\ell(\widehat{\btheta})+\rho \log n$,  where  $\rho$ is the number of the parameters  in the model).


\subsection{Real dataset I}
We applied the proposed method to data corresponding to the enzymatic activity in the blood and representing the metabolism of carcinogenic substances among 245 unrelated individuals that  were  studied previously by \cite{bechtel:93},  who fitted a mixture of two skewed distributions and \cite{Balakrishnan:11}, who considered three different mixture models based on the BS and length-biased models. Recently, a bimodal BS (BBS)  model  has been considered by  \cite{olmos2016} to fit these data.  Here we perform the EM algorithm described in Section \ref{sectionem} to carry out the ML estimation for the FM-BS model. The competing models are compared using the AIC, BIC and the associated rate of convergence, $r$, which is assessed in practice as
\begin{equation}
r = lim_{t\to\infty}  \frac{\| \btheta^{(t+1)} - \btheta^{(t)}\|}{\|\btheta^{(t)} - \btheta^{(t-1)}\|}.\label{rates}
\end{equation}
The rate of convergence depends on the fraction of missing information, and a greater value of $r$ implies slower convergence; see \cite{Meng94}. Models with lower convergence rates and BIC are considered more preferable. Table \ref{tableENZYMEindicators}  shows the ML estimates  obtained by fitting the FM-BS model ($G = 1-4$ components) and BBS model. Note that the estimation procedure for fitting the FM-BS model does not converge properly for $G \geq 3$.  Figure 9 (a) and (b)  display histograms with estimated pdfs for the data superimposed with $G=1-4$ components. In Figure 9 (c) and (d) we show the cumulative and estimated survival functions and the empirical survival function of enzyme data for four fitted FM-BS models respectively.
\begin{figure}[!t]
	\centering
	{\includegraphics[scale=0.37]{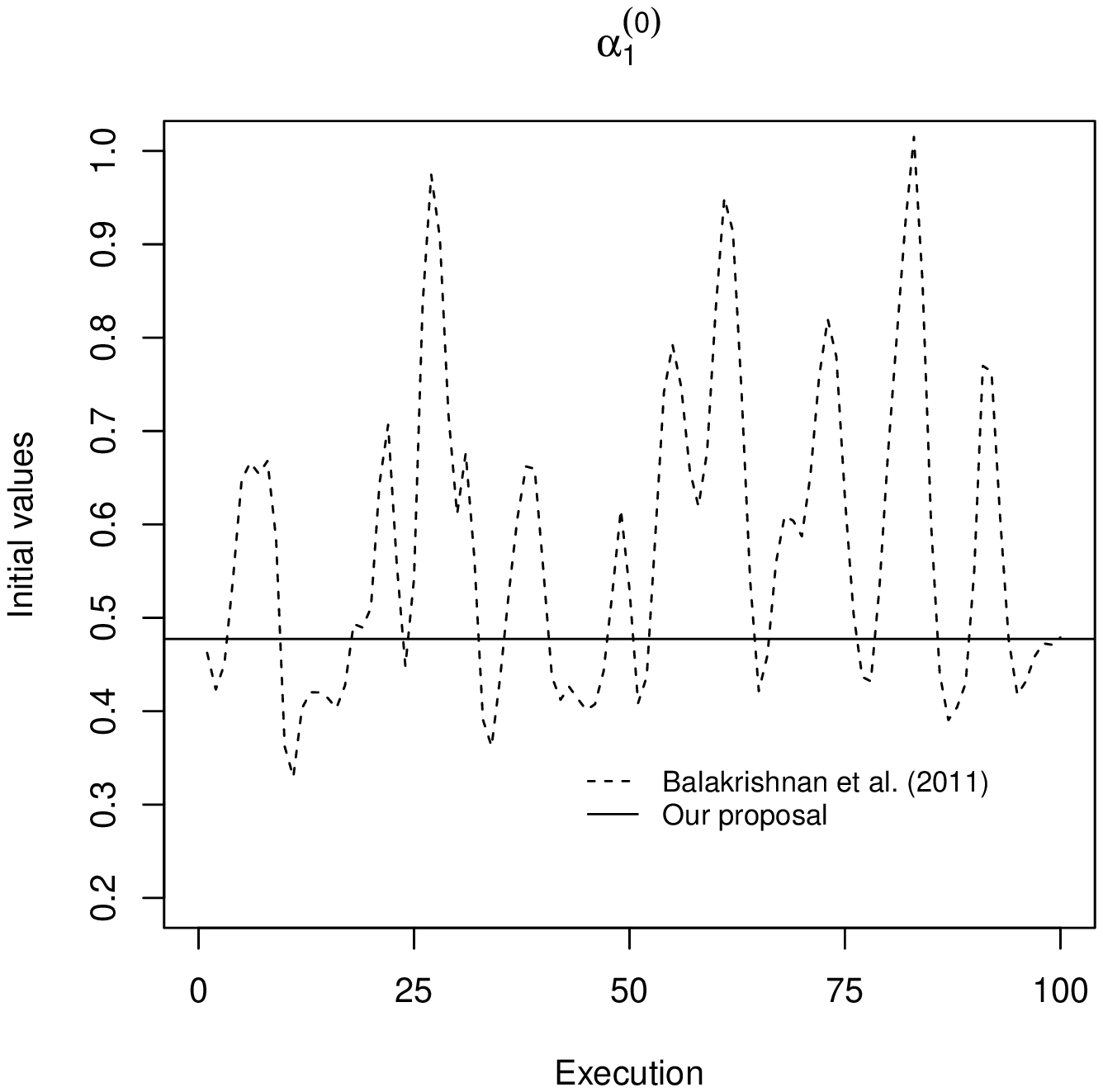}}~{\includegraphics[scale=0.37]{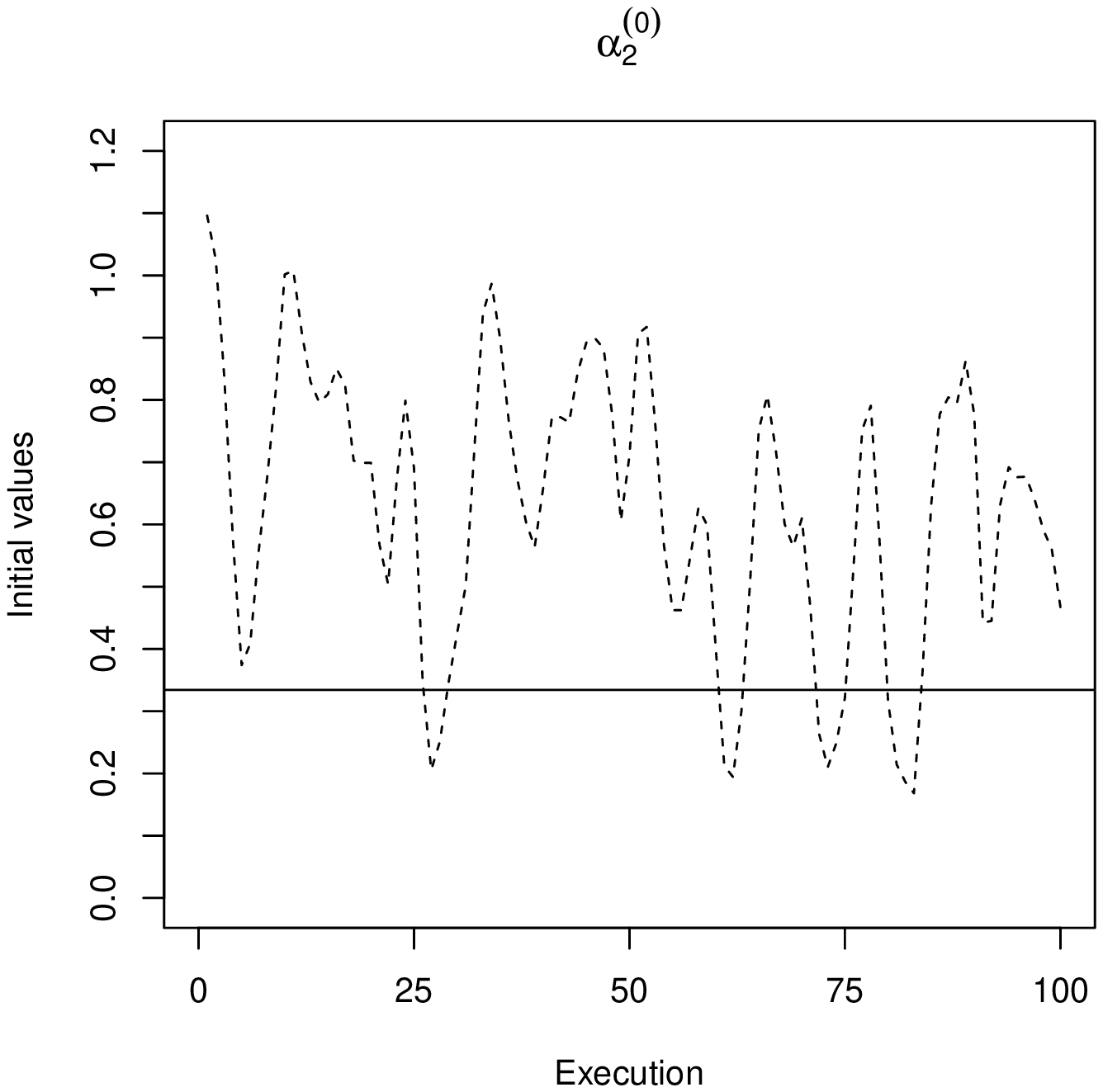}}~{\includegraphics[scale=0.37]{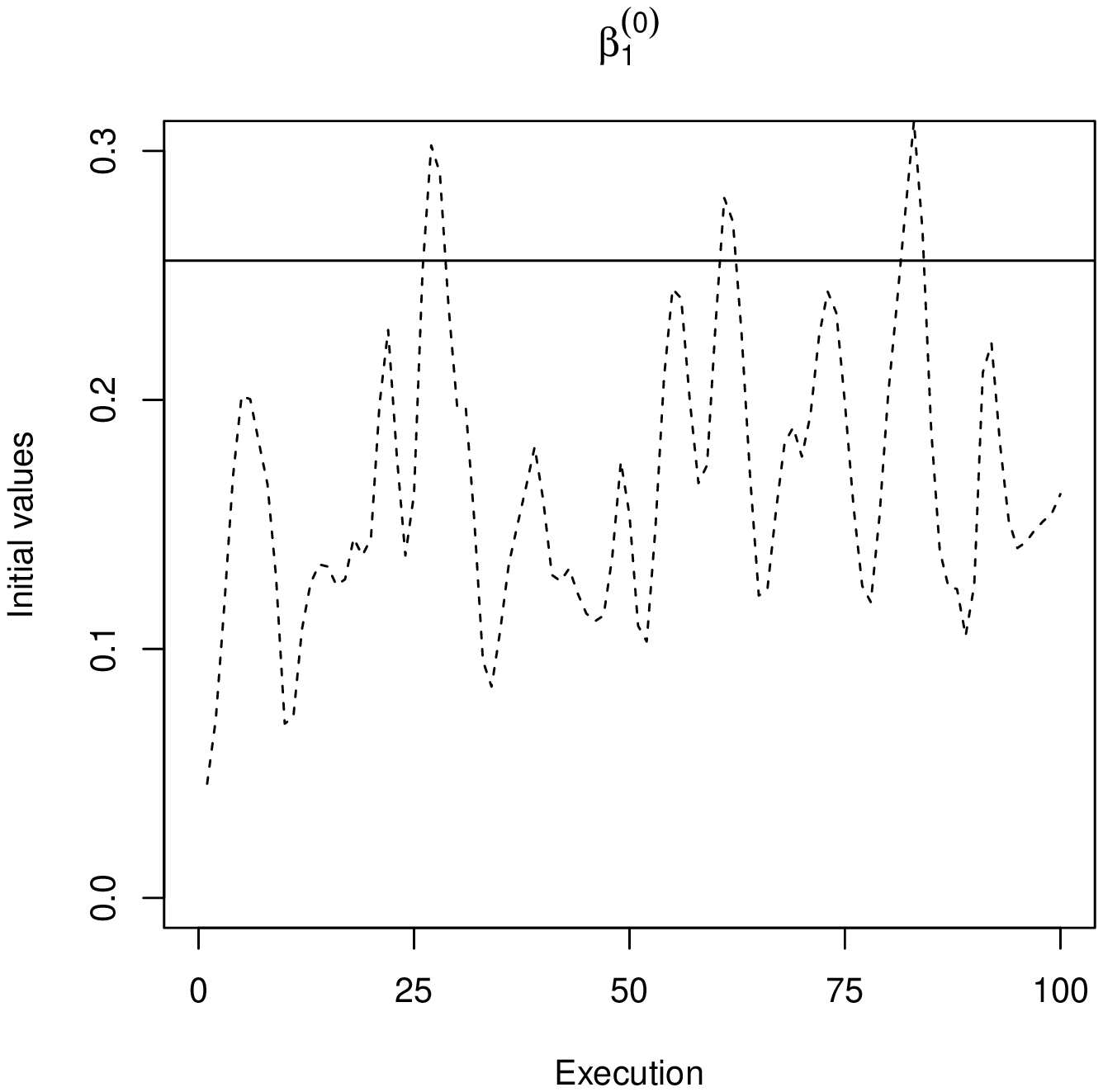}} \\
	{\includegraphics[scale=0.37]{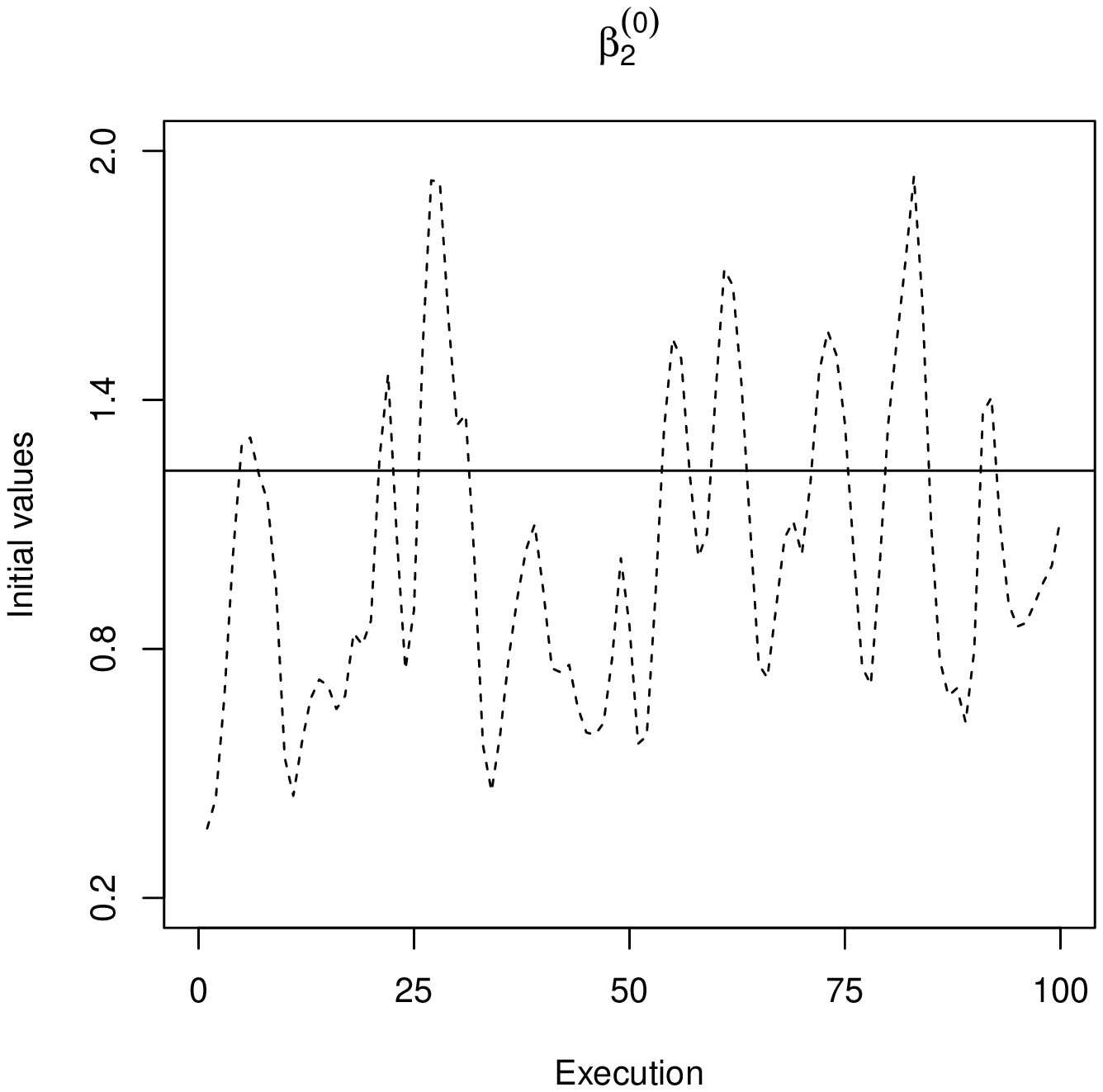}}~{\includegraphics[scale=0.37]{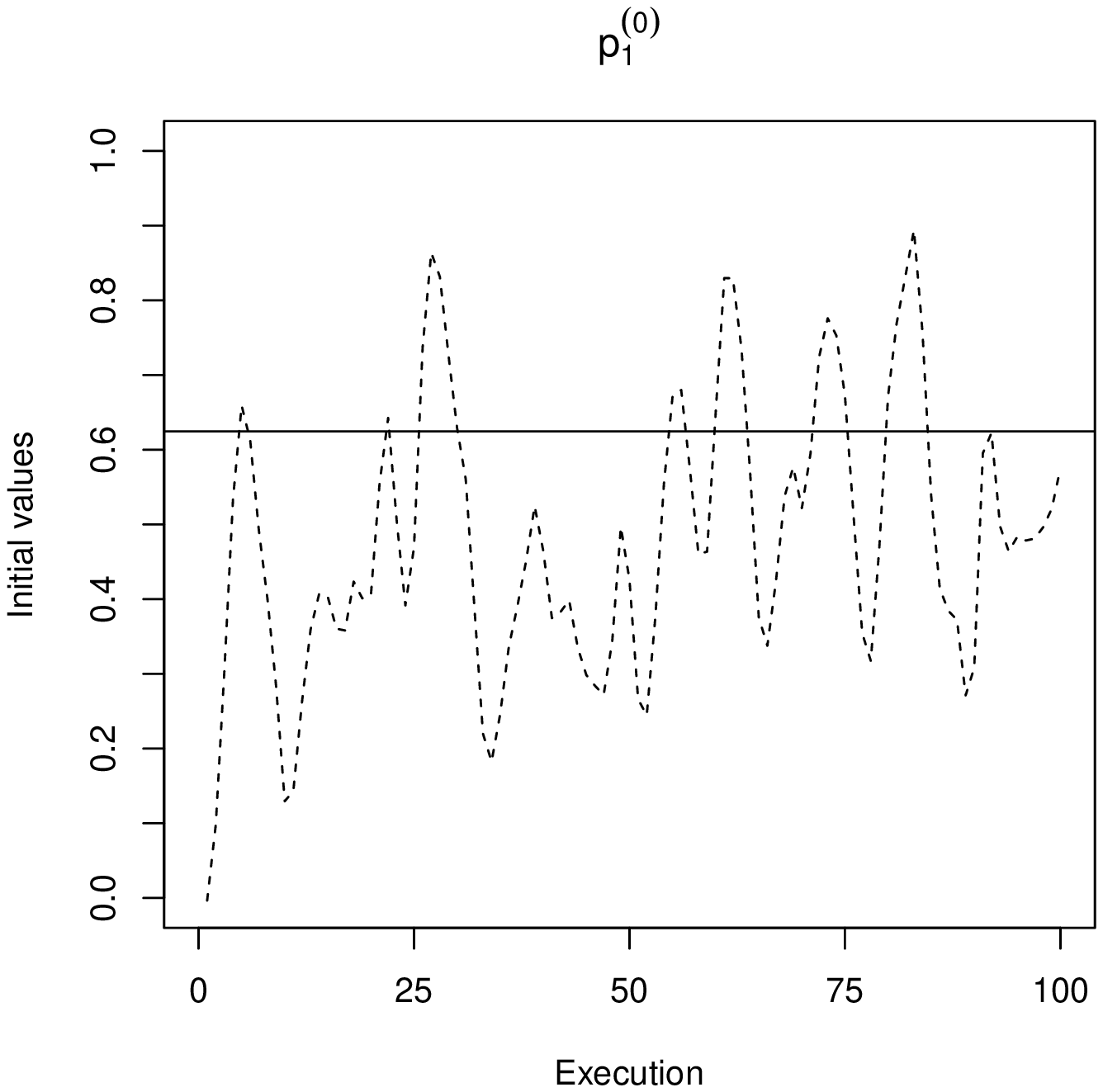}}~{\includegraphics[scale=0.37]{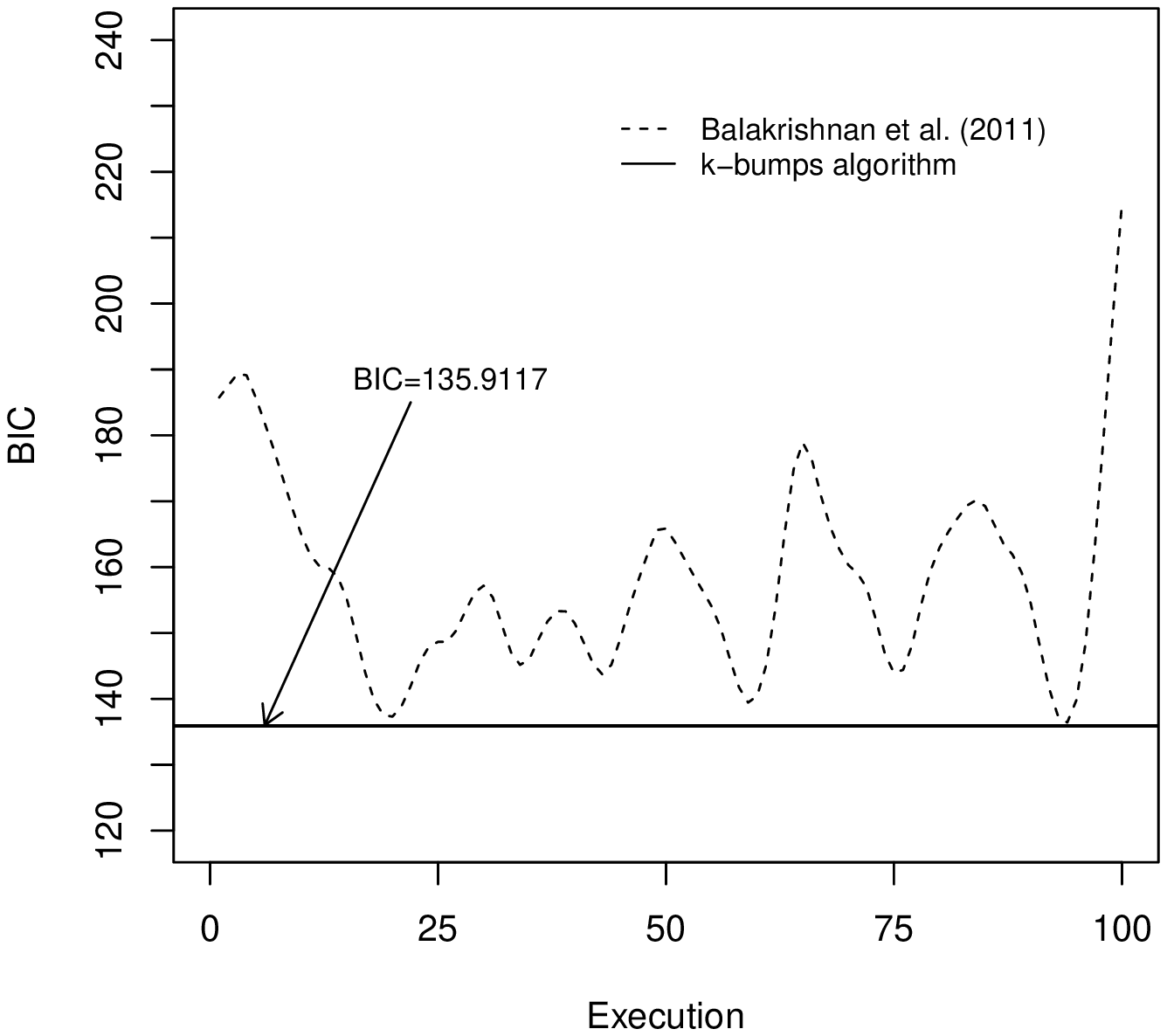}} \\
	\caption{Initialization strategy for the EM algorithm used in Balakrishnan et al. (2011) vs. $k$-bumps strategy for 100 runs. \label{fig-apli_initial_valuesENZYME3}}
\end{figure}
The graphical visualization shows that  the FM-BS model ($G=2$, $G=3$ or $G=4$) adapts to the shape of the histogram very accurately.

 \begin{table}[!t]
 	\begin{center}
 		\caption{Comparison of log-likelihood maximum and BIC for fitted FM-BS model using the enzyme data. The number of parameters and the rate of convergence are denoted by $m$ and $r$, respectively.}\label{tableENZYMEindicators}
 		\vskip 3mm
 		\small{	\begin{tabular}{ccccccccc}
 				\hline
 	  	              & $G$       & m      &log-lik   & AIC              &BIC       & Iterations & r	\\
 				\hline
 				BBS   & 1         & 3      & -86.2856 & 178.5713         & 189.075  &  40       &  0.2697 \\ 		
              \hline
 				FM-BS & 1         & 2      & -105.5071& 215.0141         & 222.0167  &   2        & - \\
 				FM-BS & 2         & 5      & -54.2027 &\textbf{118.4054} & \textbf{135.9117}  &   7        & 0.2236\\ 
 				FM-BS & 3         & 8      & -51.6763 & 119.2884         & 147.3627  & 759        & 0.9958\\
 				FM-BS & 4         & 11     & -39.7009 & 122.7353         & 139.9157  & 275        & 0.9888\\
 				\hline
 		\end{tabular}}
 	\end{center}
 \end{table}
 As reported by \cite{turner2000}, we can use parametric or semiparametric bootstrapping to test the hypothesis concerning the number of components in the mixture. Following the method proposed by \cite{turner2000}, we considered 1000 bootstrap statistics to test $G=1$ versus $G=2$. The $p$-value is 0.031 for the parametric bootstrap.

\begin{center}
	\def\stackalignment{l}
	\topinset{\bfseries(a)}{\includegraphics[width=3.3in]{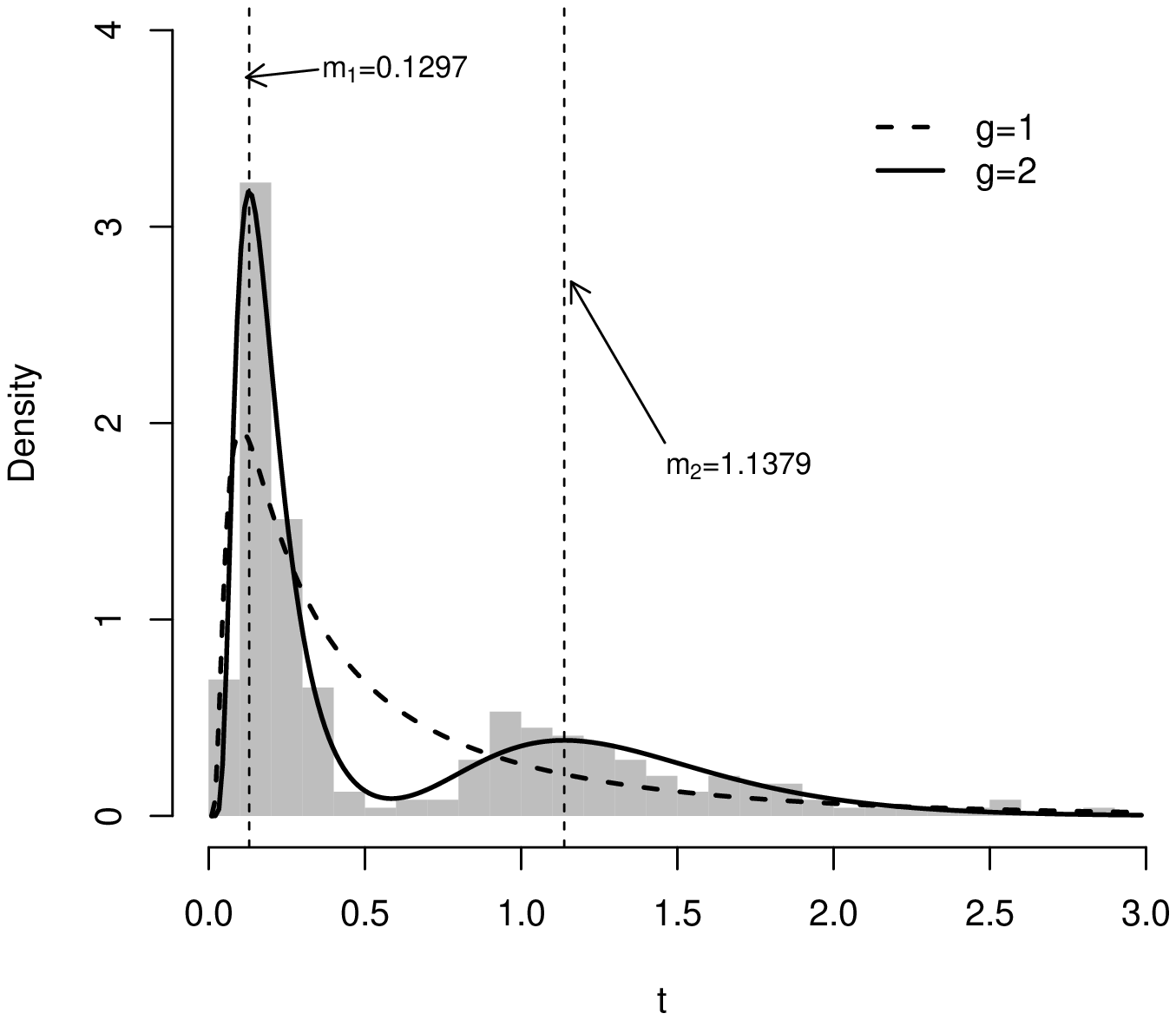}}{0.4in}{0.02in}~\topinset{\bfseries(b)}{\includegraphics[width=3.3in]{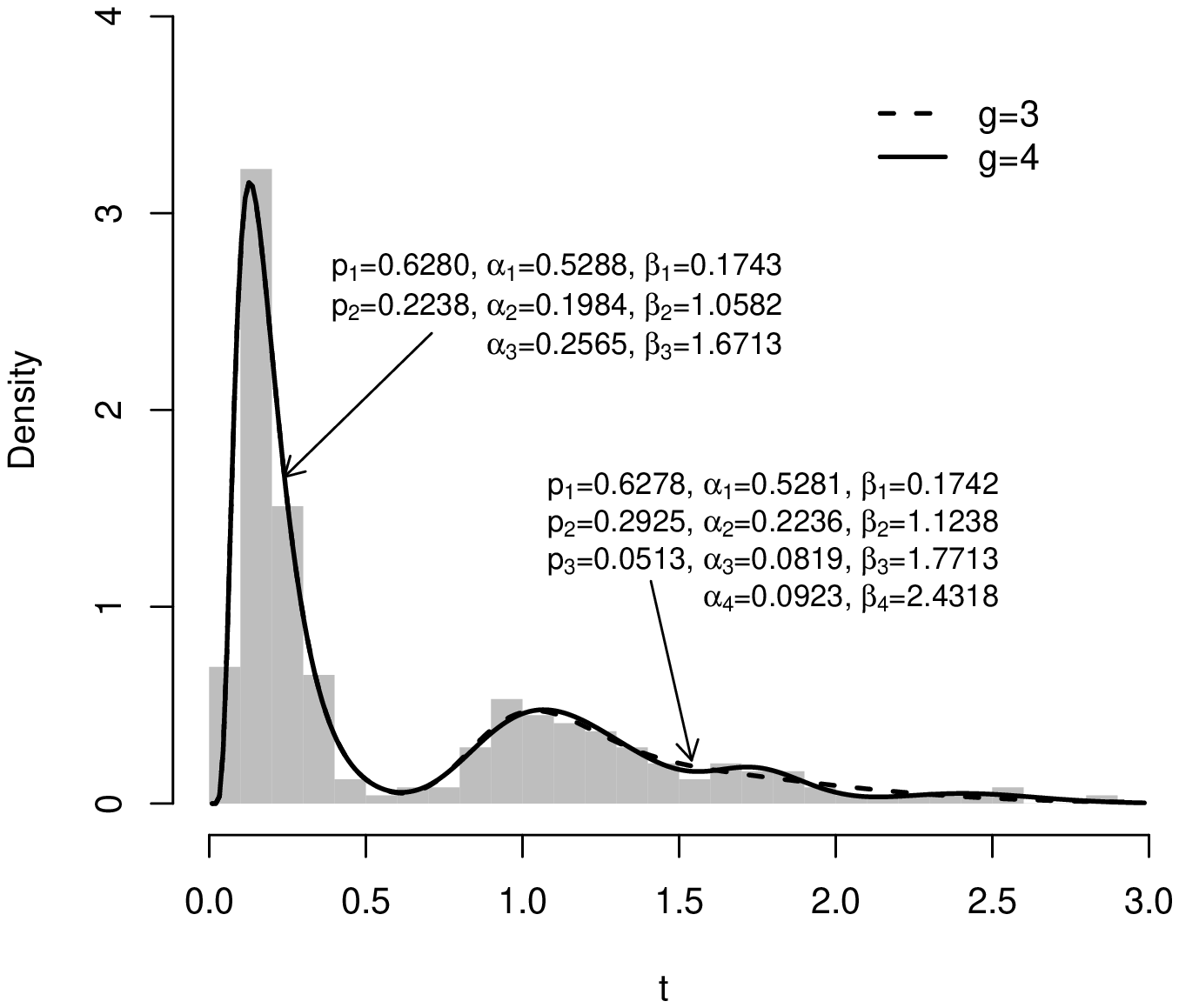}}{0.4in}{0.02in} \vspace*{-1.4cm} \\
	\topinset{\bfseries(c)}{\includegraphics[width=3.3in]{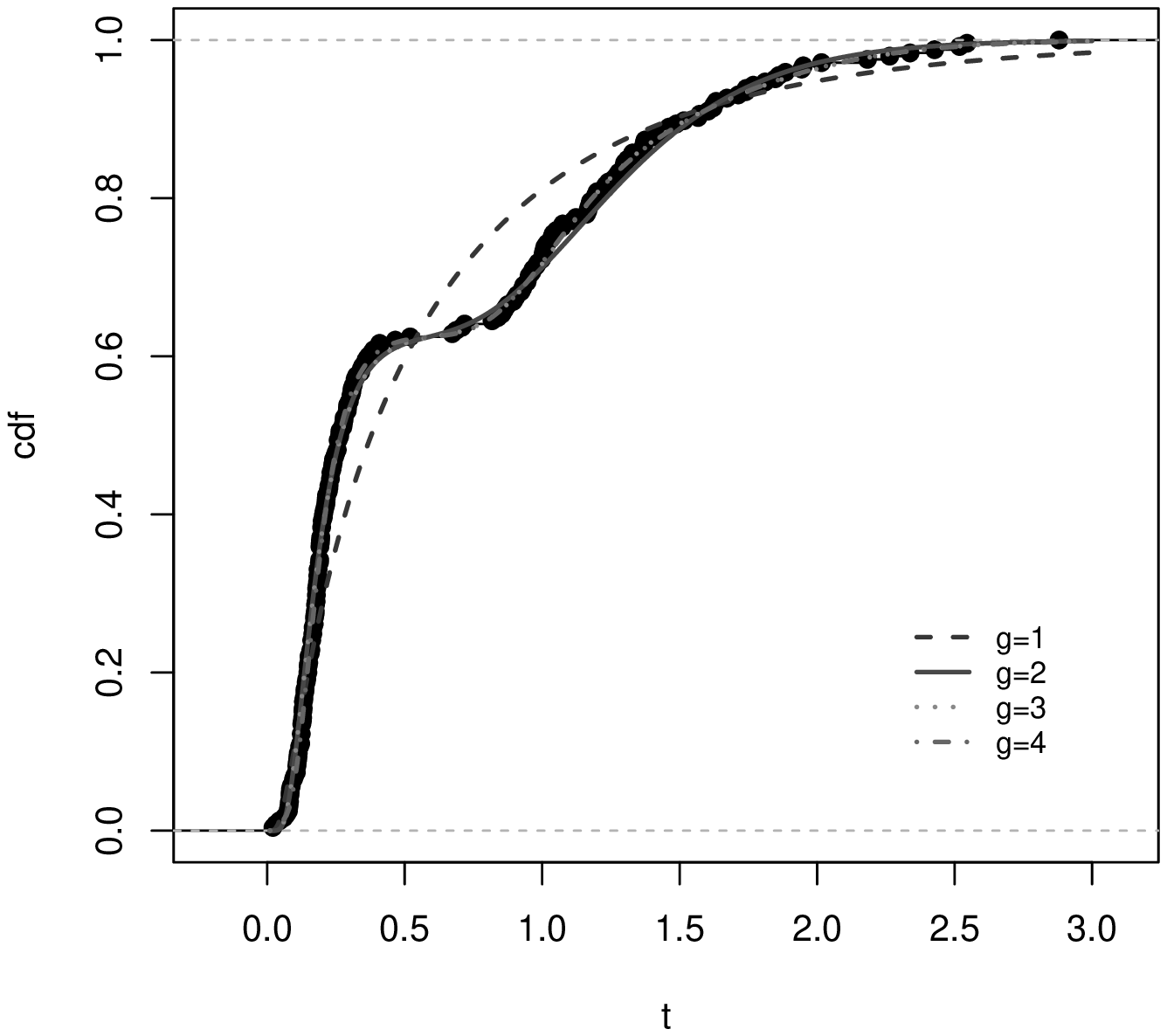}}{0.4in}{0.02in}~\topinset{\bfseries(d)}{\includegraphics[width=3.3in]{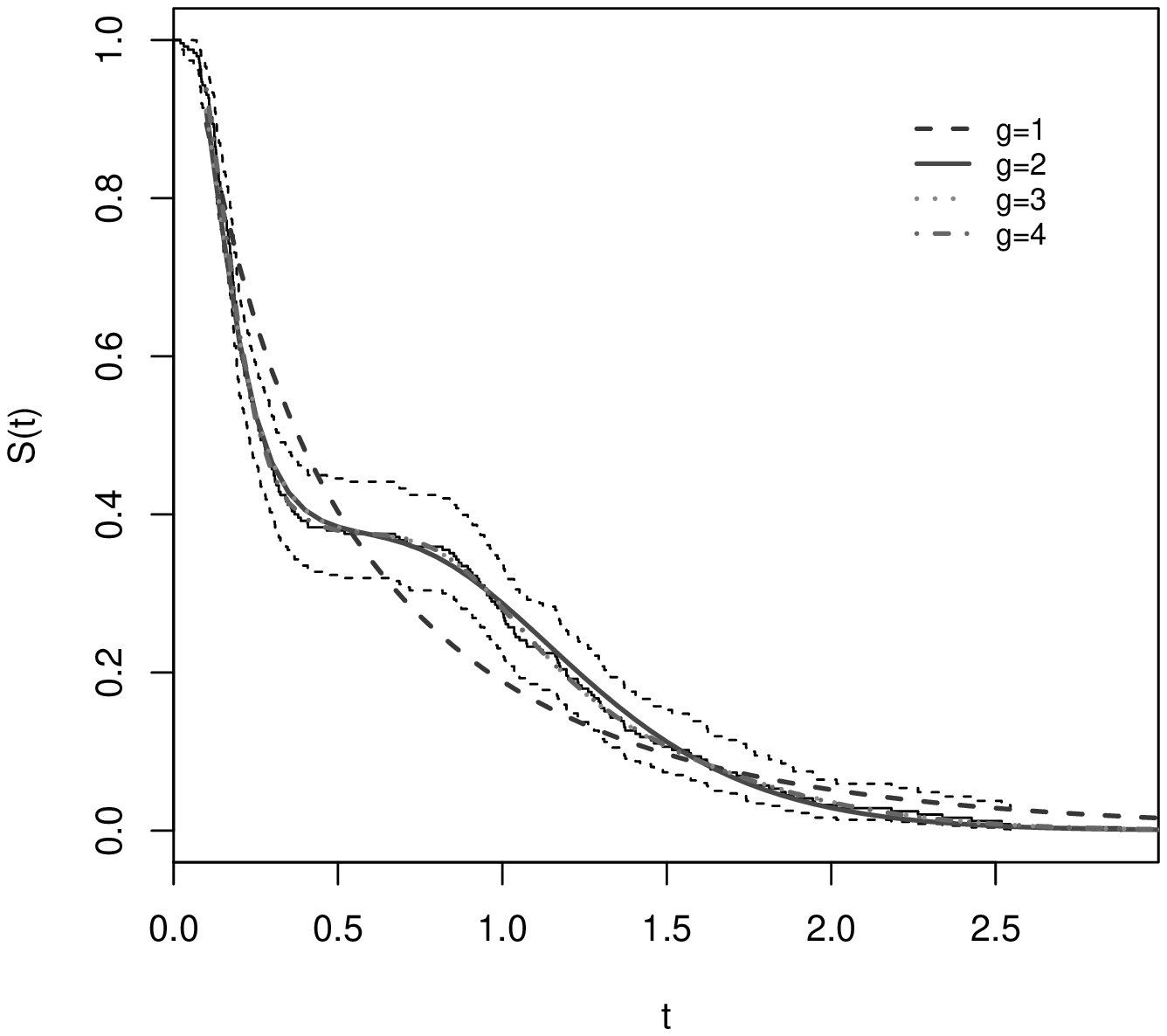}}{0.4in}{0.02in}
	\begin{flushleft}
		Figure 9: (a) and (b) Histogram of the enzyme data overlaid with ML-fitted densities (c) Cu- mulative and (d) estimated survival functions and the empirical survival function for four fitted FM-BS models. \vspace*{0.2cm}
	\end{flushleft}
\end{center}
Accordingly, there is strong evidence  that there are at least two components. For the $G=2$ versus $G=3$ test, the bootstrap $p$-value is 0.415, thus there is no evidence that more than two components are required. Moreover, the results based on AIC, BIC and $r$ (see equation \ref{rates}) to test hypotheses and figures indicate that the FM-BS model with $G=2$ provides much better fit of the data than the other models considered. Note also in Table \ref{tableENZYMEindicators} that the BIC obtained using the proposed method is lower than that reported in Table 3 of \cite{Balakrishnan:11} for the mixture distribution of two different BS (146.02). This is due to the method used to obtain the initial values for the EM algorithm used in  \cite{Balakrishnan:11}, which change every time the algorithm is executed, resulting in different estimates and BIC values, as can be seen in Figure \ref{fig-apli_initial_valuesENZYME3}.
\begin{table}[H]
\begin{center}
\caption{Estimated parameter values via the EM algorithm with corresponding standard errors (SE) and two-sided 95\% confidence interval for the FM-BS model applied to the enzyme data.}\vskip 3mm
		\label{tableENZYMEfit}
		\small{		\begin{tabular}{c@{\hskip 0.20in}c@{\hskip 0.20in}c@{\hskip 0.20in}c@{\hskip 0.20in}c@{\hskip 0.20in}c@{\hskip 0.20in}c@{\hskip 0.20in}c@{\hskip 0.20in}} \hline
				Parameter   & Estimates & SE     &$\mbox{SE}_{b}$ &  L     & U      & $\mbox{L}_{b}$ & $\mbox{U}_{b}$ \\
				\hline
				$\alpha_1$	& 0.5239    & 0.0231 & 0.0232         & 0.4788 & 0.5689 & 0.4187 & 0.6289   \\
				$\alpha_2$  & 0.3231    & 0.0284 & 0.0539         & 0.2677 & 0.3785 & 0.2779 & 0.3683         \\
				$\beta_1$	& 0.1734    & 0.0083 & 0.0070         & 0.1572 & 0.1896 & 0.1597 & 0.1870          \\
				$\beta_2$	& 1.2669    & 0.0464 & 0.0445         & 1.1764 & 1.3574 & 1.1801 & 1.3537           \\
				$p_1$		& 0.6259    & 0.0312 & 0.0395         & 0.5651 & 0.6867 & 0.5489 & 0.7029          \\
				\hline
\end{tabular}}
\end{center}
\end{table}

The results clearly show that the 2-component FM-BS model has the best fit. Based on Section \ref{notesimplementation}, the initial values are $p_1^{(0)}=0.6408$, $\alpha_1^{(0)}=0.5630$, $\alpha_2^{(0)}=0.3017$, $\beta_1^{(0)}=0.1802$ and $\beta_2^{(0)}=1.3008$. Table \ref{tableENZYMEfit} presents the ML  estimates  of $p_1$, $\alpha_1$, $\alpha_2$, $\beta_1$ and $\beta_2$  for the FM-BS model along with the corresponding standard errors (SE), obtained via the information-based procedure presented in Section \ref{imtheo}, which is used to obtain the lower (L) and upper (U) confidence limits. Moreover, the bootstrap approach, developed by \cite{efron1986bootstrap}, provides another way of deriving confidence intervals. This table presents the bootstrap estimated standard errors ($\mbox{SE}_{b}$), and the two-sided  95\%  confidence intervals ($\mbox{L}_{b}$ and $\mbox{U}_{b}$ are bootstrap confidence limits with 400 bootstrap replicates).
\begin{table}[H]
	\begin{center}
		\caption{Comparison of log-likelihood, AIC and BIC for fitted FM-BS, FM-logN and FM-SN models using the BMI data.}\label{tableBMIindicators}
		\vskip 3mm
		\small{	\begin{tabular}{ccccccc}
				\hline
				& $G$     &log-lik   & AIC        & BIC       	\\
				\hline		
				FM-LogN   & 1       &-86.2856  &14212.65   & 14223.95   \\ 			
				FM-LogN   & 2       &-86.2856  &14283.20   & 17165.83   \\ 		
				FM-LogN   & 3       &-86.2856  &15134.94   & 18895.17  \\ 	
				\hline
				FM-SN     & 1       &-7234.190 &14474.38  & 14491.34   \\ 			
				FM-SN     & 2       &-6911.778 &13837.56  & 13877.13   \\ 		
				FM-SN     & 3       &-6862.755 &13804.74  & 13809.69  \\ 		
				\hline
				FM-BS     & 1       &-7099.455 &14202.91  & 14214.22  \\
				FM-BS     & 2       &-6886.495 &13782.99  & 13811.26   \\
				FM-BS     & 3       &-6858.605 &\textbf{13733.21}  & \textbf{13778.43}   \\
				\hline
		\end{tabular}}
	\end{center}
\end{table}
\subsection{Real dataset II}
As a second application, we consider the body mass index for 2107 men aged between 18 to 80 years. The dataset comes from the National Health and Nutrition Examination Survey, conducted by the National Center for Health Statistics (NCHS) of the Centers for Disease Control (CDC) in the USA. These data have been analyzed by \cite{Basso2010}, who fitted them to finite mixture of skewed distributions, for example finite mixture of skew-normal (FM-SN). The estimation algorithm is implemented in the \textbf{R package} \texttt{mixsmsn} where the $k$-means clustering algorithm is used to obtain the initial values. We performed the EM algorithm to carry out the ML estimation for the FM-BS, finite mixture of Log-normal (FM-logN), defined by \cite{Mengersen.Robert.Titterington2011}, and FM-SN, for model comparison. Table \ref{tableBMIindicators} contains the log-likelihood together with AIC and BIC for several components. Figure 10 (a) and (b)  displays histograms with estimated pdfs for the data superimposed with $G=1-4$ components. The graphs show that  the FM-BS model ($G=3$ or $G=4$) adapts to the shape of the histogram very accurately. The results indicate that the FM-BS model with $G=3$ components provides a better fit than the other models considered and this is verified by hypothesis testing using parametric bootstraping like in application 1.
\begin{center}
\def\stackalignment{l}
\topinset{\bfseries(a)}{\includegraphics[width=3.3in]{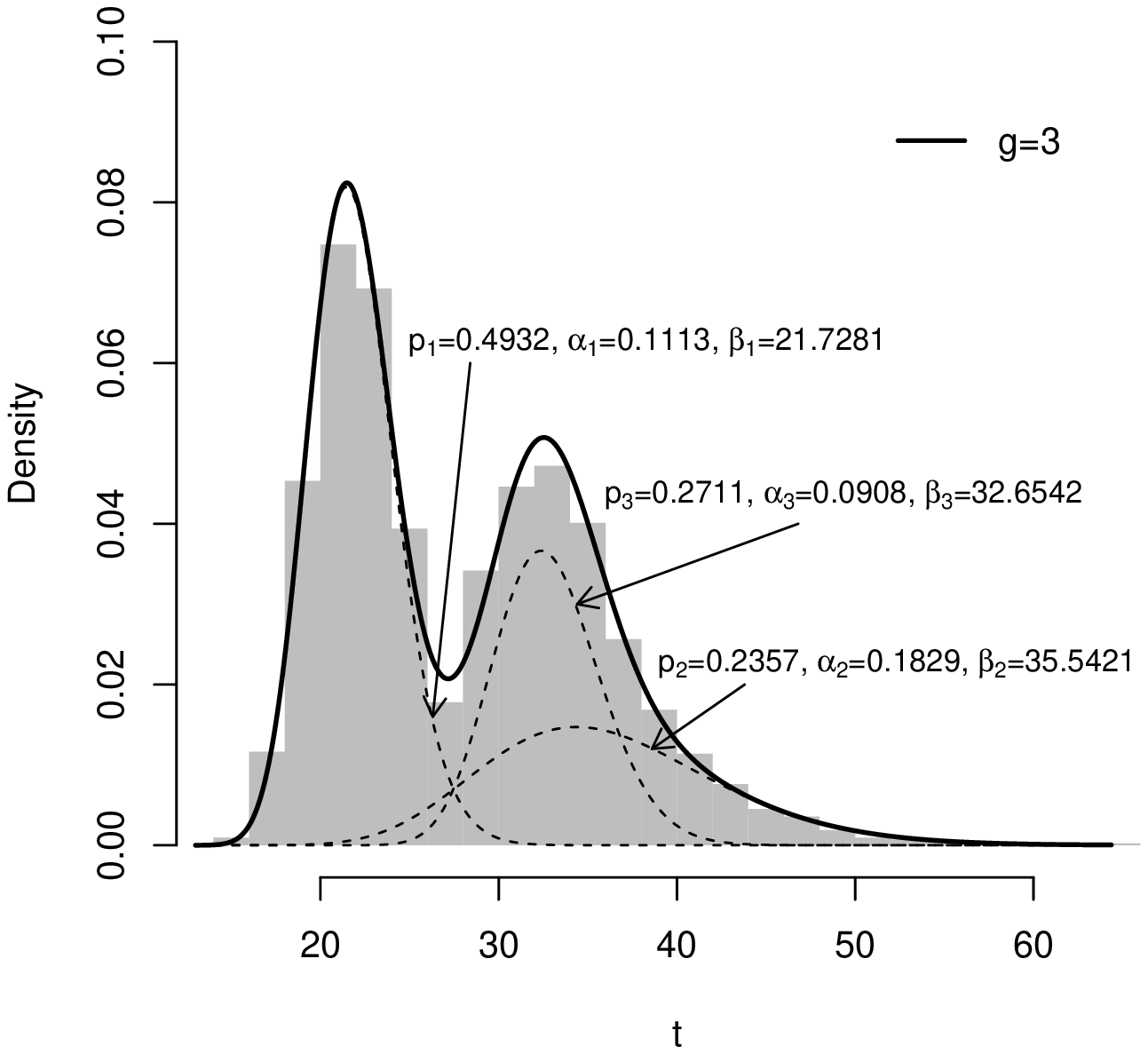}}{0.4in}{0.02in}~\topinset{\bfseries(b)}{\includegraphics[width=3.3in]{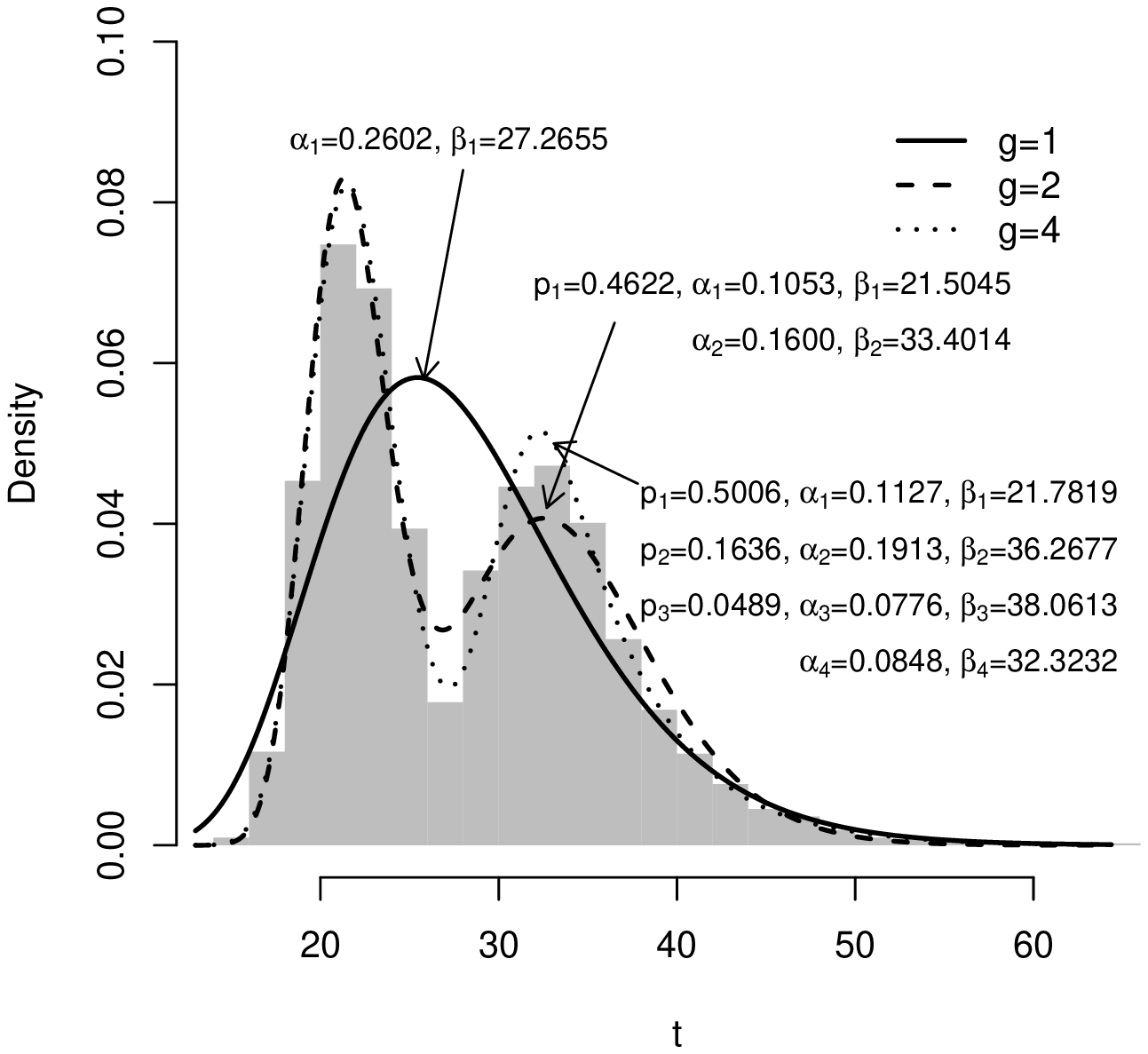}}{0.4in}{0.02in} \vspace*{-1.4cm} \\
\topinset{\bfseries(c)}{\includegraphics[width=3.3in]{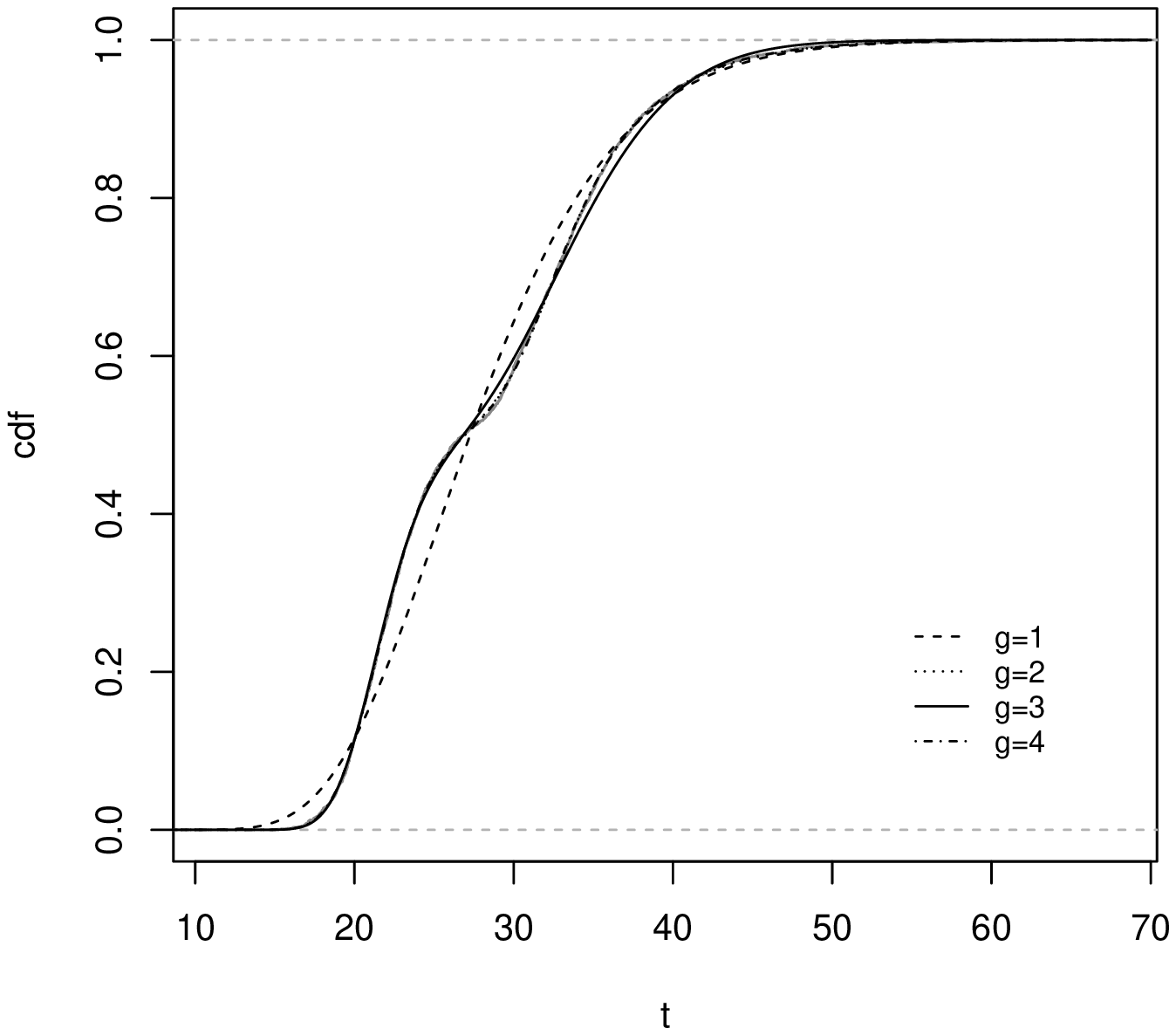}}{0.4in}{0.02in}~\topinset{\bfseries(d)}{\includegraphics[width=3.3in]{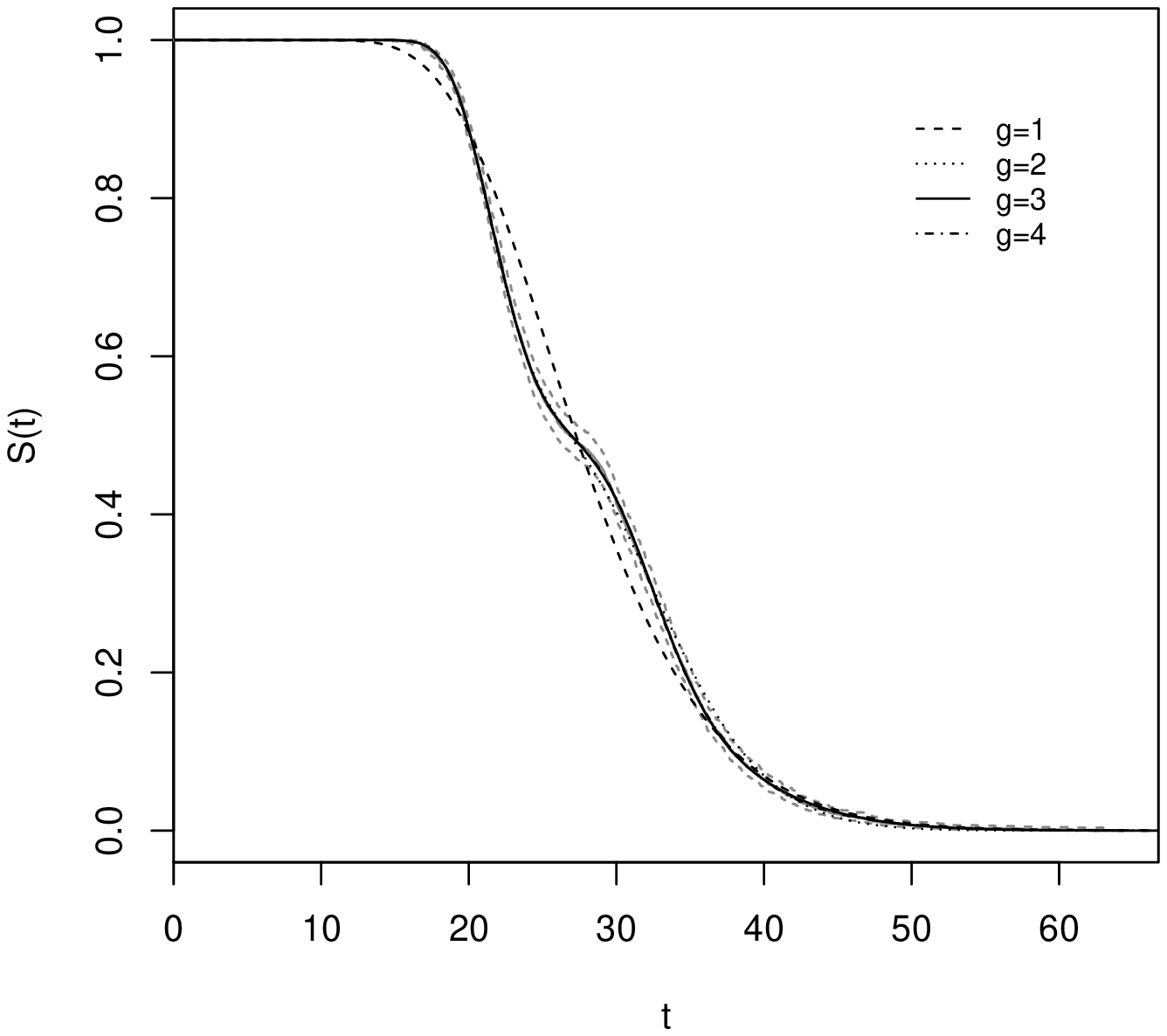}}{0.4in}{0.02in}
\begin{flushleft}
Figure 10: (a) and (b) Histogram of the BMI data with overlaid ML-fitted densities (c) Cumu- lative  and  (d) estimated survival  functions and the empirical \,\,survival\,\, function for\,\, four\,\, fitted FM-BS models.  \vspace*{1.0cm}
\end{flushleft}
\end{center}
 The $p$-value for the $G=2$ versus $G=3$ test is <0.000. Thus, there is strong evidence that at least three components exist. For the $G=3$ versus $G=4$, test the $p$-value is 0.132, so there is no evidence that more than three components are required. In conclusion, results based on AIC, BIC, hypothesis testing and figures indicate that the FM-BS model with $G=3$ components provides the best fit.
Table \ref{tableBMIfit} presents the MLE of $p_1$, $p_2$, $\alpha_1$, $\alpha_2$ , $\alpha_3$, $\beta_1$, $\beta_2$ and $\beta_3$ for the FM-BS model  along with the corresponding standard errors (SE) and L and U confidence limits. In Figure 10 (c) and (d) we show the cumulative and estimated survival functions and the empirical survival function of BMI data for four fitted FM-BS models respectively.
\begin{table}[H]
	\begin{center}
		\caption{Estimated parameter values via the EM algorithm and with the corresponding standard errors (SE) for the FM-BS model applied to the BMI data.}\vskip 3mm
		\label{tableBMIfit}
		\small{		\begin{tabular}{c@{\hskip 0.20in}c@{\hskip 0.20in}c@{\hskip 0.20in}c@{\hskip 0.20in}c@{\hskip 0.20in}c@{\hskip 0.20in}c@{\hskip 0.20in}c@{\hskip 0.20in}} \hline
				Parameter   & Estimates & SE     &$\mbox{SE}_{b}$ &  L     & U      & $\mbox{L}_{b}$ & $\mbox{U}_{b}$ \\
				\hline
				$\alpha_1$	& 0.1113    & 0.0050 & 0.0541         & 0.1016 & 0.1210 & 0.0058 & 0.2168   \\
				$\alpha_2$  & 0.1829    & 0.0270 & 0.0366         & 0.1302 & 0.2356 & 0.1115 & 0.2543         \\
				$\alpha_3$  & 0.0908    & 0.0170 & 0.0650         & 0.0577 & 0.1240 &-0.0359 & 0.2176        \\			
				$\beta_1$	&21.7281    & 0.2025 & 1.9123         &21.3332 &22.1230 &17.9991 &25.4571          \\
				$\beta_2$	&35.5421    & 3.6312 & 3.0561         &28.4613 &42.6229 &29.5827 &41.5015           \\
				$\beta_3$	&32.6542    & 0.3640 & 1.7669         &31.9444 &33.3640 &29.2087 &36.0997           \\			
				$p_1$		& 0.4932    & 0.0330 & 0.0234         & 0.4288 & 0.5576 & 0.4476 & 0.5388          \\
				$p_2$		& 0.2357    & 0.1570 & 0.0305         &-0.0704 & 0.5418 & 0.1581 & 0.3133          \\				
				\hline
		\end{tabular}}
	\end{center}
\end{table}

\section{Conclusions}
This work proposes finite mixture of Birnbaum-Saunders distributions, extending some results proposed by \cite{Balakrishnan:11} and providing important supplementary findings regarding mixture of BS distributions. The  resulting model simultaneously accommodates multimodality and skewness, thus allowing practitioners from different areas to analyze data in an extremely flexible way.

We pointed out some important characteristics and properties of FM-BS  models that allow us to obtain qualitatively better ML estimates and efficiently compute them by using the proposed  EM-algorithm, which can be easily implemented and coded with existing statistical software such as the R language. The efficiency of   the  EM algorithm  is  supported  by  the  use  of  the $k$-bumps algorithm  to obtain the initial values  of   model  parameters. We noted interesting advantages in comparison with the other algorithms ($k$-mean and $k$-medoids), because the final  estimates do not  change each time  the algorithm is executed.


The FM-BS model  can be extended to multivariate settings, following the  recent proposal of \cite{khosravi2014} for mixtures of bivariate Birnbaum-Saunders distributions. We intend to pursue this in future research. Another worthwhile task is to develop a fully Bayesian inference via the Markov chain Monte Carlo method.\\

\noindent \textbf{Acknowledgements}
\noindent  {Roc\'{i}o Maehara and Luis Benites were supported by CNPq-Brazil. The computer program, coded in the R language, is available from the first author upon request.}\\


\end{document}